\newcommand{%
	
	\import{./figures/}{.pdf_tex}
}[2][1]{%
	
	\import{./figures/}{#2.pdf_tex}
}
\newif\ifsubmission
\newif\iflipics
\newif\iflncs
\newenvironment{sendappendix}{
	\ifsubmission \killcontents \fi
}{
	\ifsubmission \endkillcontents \fi
}
\newcommand\appref[1]{the full paper}%\cref
\newcommand{\kvdef}[2]{%
	\expandafter\newrobustcmd\csname #1\endcsname{\kl[{math.data.#1}]{#2}}%
	\knowledge{math.data.#1}{notion}%
}
\newcommand{\reldef}[2]{%
	\expandafter\newrobustcmd\csname #1\endcsname{\mathrel{\kl[{math.data.#1}]{#2}}}%
	\knowledge{math.data.#1}{notion}%
}
\newcommand\defined{:=}
\newcommand{\kdef}{\knowledgenewrobustcmd}
\kdef\Alex[1]{\cmdkl{\mathsf{alex}(}{#1}\cmdkl{)}}
\kdef\omtrees[2]{\cmdkl{\mathsf{T}^{<\alpha}(}{#1}\cmdkl{)}}
\kdef\trees[1]{\cmdkl{\mathsf{T}(}{#1}\cmdkl{)}}
\kdef\words[1]{{#1}^{\cmdkl{*}}}
\kdef\fSets[1]{\cmdkl{\mathsf{P_f}(}#1\cmdkl{)}}
\kdef\pSets[1]{\cmdkl{\mathsf{P}(}#1\cmdkl{)}}
\kdef\fmSets[1]{{#1}^{\cmdkl{\circledast}}}
\kdef\openword[1]{\cmdkl{[}{#1}\cmdkl{]}}
\kdef\spec[1]{\mathrel{\cmdkl{\leq}_{#1}}}
\kdef\upset[2]{\cmdkl{\uparrow}_{#1}{#2}}
\kdef\lfp[2]{\cmdkl{\mathsf{lfp}_{#1}.}{#2}}
\kdef\id{\cmdkl{\mathsf{id}}}
\kdef\singleton{\cmdkl{\mathbf{1}}}
\newcommand\Nat{\mathbb{N}}
\kdef\PrefTopo{\cmdkl{\tau_{\textnormal{pref}^*}}}
\newrobustcmd\DExpand[2]{\kl(noeth)[\DExpand]{{F}_{#1,#2}}}
\knowledge{\DExpand}{notion,scope=noeth}
\newrobustcmd\splitW{\kl(noeth)[\splitW]{\operatorname{split}}}
\knowledge{\splitW}{notion,scope=noeth}
\newrobustcmd\supportW{\kl(noeth)[\supportW]{\operatorname{supp}}}
\knowledge{\supportW}{notion,scope=noeth}
\newrobustcmd\CatOrd{\kl[\CatOrd]{\mathsf{Ord}}}
\knowledge\CatOrd{notion}
\newrobustcmd\CatPO{\kl[\CatPO]{\mathsf{PO}}}
\knowledge\CatPO{notion}
\newrobustcmd\CatEl{\kl[\CatEl]{\operatorname{el}}}
\knowledge\CatEl{notion}
\newrobustcmd\CatSet{\kl[\CatSet]{\mathsf{Set}}}
\knowledge\CatSet{notion}
\newrobustcmd\CatTop{\kl[\CatTop]{\mathsf{Top}}}
\knowledge\CatTop{notion}
\newrobustcmd\supp{\kl[\supp]{\operatorname{supp}}}
\knowledge\supp{notion}
\newrobustcmd\letters{\kl[\letters]{{\operatorname{letters}}}}
\knowledge\letters{notion}
\newrobustcmd\restr[2]{\kl[\restr]{{#1}|{#2}}}
\knowledge\restr{notion}
\newrobustcmd\Hom{\kl[\Hom]{\operatorname{Hom}}}
\knowledge\Hom{notion}
\newrobustcmd\Aut{\kl[\Aut]{\operatorname{Aut}}}
\knowledge\Aut{notion}
\newrobustcmd\slice[2]{\kl[\slice]{{#1}/{#2}}}
\knowledge\slice{notion}
\newrobustcmd\Image{\kl[\Image]{\operatorname{Im}}}
\knowledge\Image{notion}
\newrobustcmd\ACat{\kl[\ACat]{\mathcal{C}}}
\knowledge\ACat{notion}
\newrobustcmd\AFunc{\kl[\AFunc]{\mathsf{G}}}
\knowledge\AFunc{notion}
\newrobustcmd\ARef{\kl[\ARef]{\mathsf{R}}}
\knowledge\ARef{notion}
\newrobustcmd\Id{\kl[\Id]{\operatorname{Id}}}
\knowledge\Id{notion}
\newrobustcmd\compose{\mathrel{\kl[\compose]{\circ}}}
\knowledge\compose{notion}
\newrobustcmd{\osub}{\mathrel{\kl[\osub]{{\triangleleft}}}}
\newrobustcmd{\osubeq}{\mathrel{\kl[\osubeq]{{\trianglelefteq}}}}
\newrobustcmd{\depth}{\kl[\depth]{\operatorname{\mathsf{depth}}}}
\knowledge\osub{notion}
\knowledge\depth{notion}
\knowledge\osubeq{notion}
\newrobustcmd{\sstruct}{\mathrel{\kl[\sstruct]{{\sqsubset}}}}
\newrobustcmd{\sstructeq}{\mathrel{\kl[\sstructeq]{{\sqsubseteq}}}}
\knowledge\sstruct{notion}
\knowledge\sstructeq{notion}
\newrobustcmd{\dl}{\mathrel{\kl[\dl]{{\triangleleft}}}}
\newrobustcmd{\dleq}{\mathrel{\kl[\dleq]{{\trianglelefteq}}}}
\knowledge\dl{notion}
\knowledge\dleq{notion}
\newrobustcmd\AUseq{\kl[\AUseq]{\mathcal{U}}}
\knowledge\AUseq{notion}
\newrobustcmd\AUseqP{\kl[\AUseqP]{\mathcal{U}'}}
\knowledge\AUseqP{notion}
\newrobustcmd\U{\kl[\U]{\mathsf{U}}}
\knowledge\U{notion}
\newrobustcmd\SCons{\kl[\SCons]{\mathsf{G}}}
\knowledge\SCons{notion}
\newrobustcmd\TCons{\kl[\TCons]{\mathsf{G}'}}
\knowledge\TCons{notion}
\newrobustcmd\OCons{\kl[\TCons]{\mathsf{G}^O}}
\knowledge\OCons{notion}
\newrobustcmd\InitSCons{\kl[\InitSCons]{\mu\SCons}}
\knowledge\InitSCons{notion}
\newrobustcmd\CatLim[1]{\kl[\CatLim]{\operatorname{lim}(#1)}}
\knowledge\CatLim{notion}
\newrobustcmd\LDiag{\kl[\LDiag]{D}}
\knowledge\LDiag{notion}
\newrobustcmd\adh[1]{\kl[\adh]{\overline{#1}}}
\knowledge\adh{notion}
\newrobustcmd\TDown{\kl[\TDown]{\operatorname{\mathsf{Down}}}}
\knowledge\TDown{notion}
\newrobustcmd\UTopo{\kl[\UTopo]{\mathcal{U}}}
\knowledge\UTopo{notion}
\newcommand{\setof}[2]{\left\{ {#1} {\colon} {#2} \right\}}
\newcommand{\set}[1]{\left\{ #1 \right\}}
\renewcommand{\P}{\mathbb{P}}
\definecolor{myBlue}{HTML}{88C0D0}
\definecolor{myDarkBlue}{HTML}{5E81AC}
\definecolor{myTeal}{HTML}{8FBCBB}
\definecolor{myOrange}{HTML}{D08770}
\definecolor{myGreen}{HTML}{A3BE8C}
\definecolor{myRed}{HTML}{BF616A}
\definecolor{myYellow}{HTML}{EBCB8B}
\definecolor{myPurple}{HTML}{B48EAD}
\definecolor{myBlack}{HTML}{3B4252}
\newcommand{\QO}{\mathcal{E}}
\newcommand{\SP}{\mathcal{X}}
\newcommand{\TP}{\ensuremath{\tau}}
\newcommand{\Alpha}{\ensuremath{\Sigma}}
\newcommand{\becomes}{\leftarrow}
\providecommand{\doi}[1]{\href{https://doi.org/#1}{\nolinkurl{doi:#1}}}
\title{Fixed Points and Noetherian Topologies}
\titlerunning{FPNT}
\author{Aliaume Lopez}{%
Université Paris Cité, CNRS, IRIF, F-75013, Paris, France \and
Université Paris-Saclay, CNRS, ENS Paris-Saclay, Laboratoire Méthodes Formelles,
91190, Gif-sur-Yvette, France.%
}{aliaume.lopez@ens-paris-saclay.fr}%
{https://orcid.org/0000-0002-4205-327X}{}
\authorrunning{A.\ Lopez}
\keywords{Noetherian spaces \and topology \and well-quasi-orderings
\and initial algebras \and Kruskal's Theorem \and Higman's Lemma.
}
\begin{document}

\maketitle

\begin{abstract}
	Noetherian spaces are a generalisation of well-quasi-ordering
to topologies, that can be used to prove termination of programs.
They
find applications in the verification of transitions systems, that
are better described using topology.
The goal of this paper is to
allow the systematic description of computations using inductively defined datatypes
using Noetherian spaces. This is achieved through
a fixed point theorem based on a
topological minimal bad sequence argument.

\end{abstract}

%\tableofcontents

\section{Introduction}
\label{sec:intro}
The goal of this paper is to bring
inductively defined datatypes to the theory of \kl{Noetherian spaces}.
Before that, let us give some context on the history and relevance
of this concept.

\subparagraph*{Well-quasi-orderings.}
\AP
Let $(\QO, \leq)$ be a set endowed with a quasi-order.
A sequence $(x_n)_n \in \QO^{\mathbb{N}}$ is \intro*\kl(wqo){good} whenever
there exists $i < j$ such that $x_i \leq x_j$.
A quasi-ordered set $(\QO, \leq)$ is a \intro{well-quasi-ordered}
if every sequence is \kl(wqo){good}. By calling a sequence \kl(wqo){bad}
whenever it is not good, well-quasi-orderings
are equivalently defined as having no infinite \kl(wqo){bad sequences}.
This generalisation of well-founded total orderings
can be used as a basis for proving program termination.
For instance, algorithms
alike \cref{ex:intro:enum-planar} can be studied
via well-quasi-ordering and the length of their
bad sequences \cite{figueira_ackermannian_2011}.
More generally, one can map the states of a run to a \kl{wqo}
via a so-called quasi-ranking function to both prove the termination
of the program and gain information about its runtime
\cite[Chapter 2]{schmitzHdr}.

\begin{example}
	\label{ex:intro:enum-planar}
	Let $\mathsf{Alg}$ be the algorithm
	with three integer variables $a,b,c$ that
	does one of the following operations
	$l \colon \langle a,b,c \rangle
		\becomes \langle a - 1, b , 2c  \rangle$;
	$r \colon \langle a,b,c \rangle
		\becomes \langle 2c, b -1 , 1 \rangle$;
	until $a$, $b$ or $c$ becomes negative.

	For every choice of $a,b,c \in \mathbb{N}^3$,
	the algorithm $\mathsf{Alg}$
	builds
	a bad sequence of triples when ordering $\mathbb{N}^3$
	with $(a_1,b_1,c_1) \leq (a_2, b_2, c_2)$ whenever
	$a_1 \leq a_2$, $b_1 \leq b_2$, and
	$c_1 \leq c_2$.
	Because $(\Nat^3, \leq)$
	is a \kl{well-quasi-ordering}
	\cite[see Dickson's Lemma in][]{lectureNotesWqo},
	$\mathsf{Alg}$
	terminates for every choice of initial triple $(a,b,c) \in \Nat^3$.
\end{example}

As a combinatorial tool, well-quasi-orderings
appear frequently in varying fields of computer science,
ranging from graph theory to number
theory
\cite{higman1952ordering,kruskal1972theory,kvrivz1990well,Daligault2010}.
Well-quasi-orderings have also been highly successful in
proving the termination of verification algorithms.
One critical application of well-quasi-orderings
is to the verification of infinite state transition systems,
via the study of so-called Well-Structured Transition Systems (WSTS)
\cite{abdulla_general_1996,abdulla_verifying_1998,WQOInComputerScience,finkel_well_structured_2001}.

%\begin{table}[h]
%\centering
%\caption{An \intro(wqo){algebra} of \kl{well-quasi-orders}.}
%\begin{tabular}{lcl}
%\toprule
%\textbf{Constructor} & \textbf{Syntax}      & \textbf{Quasi-order}      \\
%\midrule
%Finite set           & $\Sigma$             & equality                  \\
%Well founded set     & $\QO$                & $\leq$                    \\
%Disjoint sum         & $\W_1 + \W_2$        & \kl{co-product ordering}
%\\
%Product              & $\W_1 \times \W_2$   & \kl{product ordering}
%\\
%Finite words         & $\intro*\words{\W}$  & \kl{subword embedding}
%\\
%Finite multisets     & $\intro*\fmSets{\W}$ & \kl{multiset embedding}
%\\
%Finite sets          & $\intro*\fSets{\W}$  & \kl{Hoare quasi-ordering}
%\\
%Finite trees         & $\intro*\trees{\W}$  & \kl{tree embedding}
%\\
%\bottomrule
%\end{tabular}
%\label{fig:intro:wqo-algebra}
%\end{table}

%Remark that for every constructor, one has to choose carefully the ordering
%placed on the new set. In the case of unary constructor $F$ in the
%\kl{algebra of wqos} described in \cref{fig:intro:wqo-algebra},
%this is done by interpreting $F(\W)$
%as specific subsets of finite relational structures
%coloured using elements of $\W$,
%ordered using the induced substructure relation allowing to increase colours.

\subparagraph*{Noetherian spaces.}
\AP
One major roadblock arises when using \kl{well-quasi-orders}:
the powerset of a well-quasi-order may fail to be
one itself \cite{rado_1954}.
This is particularly problematic in the study of
\kl{WSTS}, where the powerset construction
appears frequently
\cite{jancar_note_1999,segoufin_bottom_up_2017,abdulla_general_1996}.
To tackle this issue, one can justify that the quasi-orders
of interest are not pathological, and are actually better
quasi-orders \citep*{pouzet1972bel, milner1985basic}.
Another approach is offered
by the topological notion of \emph{Noetherian space}, which as pointed
out by \citeauthor{goubault2007noetherian},
can act as a suitable generalisation of well-quasi-orderings
that is preserved under the powerset
construction~\cite{goubault2007noetherian}.

The topological analogues to WSTS enjoy similar decidability
properties, and there even exists
an analogue to Karp and Miller's forward analysis
for Petri nets~\cite{goubault2012verif}. Moreover, their
topological nature allows to verify systems
beyond the reach of quasi-orderings,
such as lossy concurrent polynomial programs \cite{goubault2012verif}.
This is possible because
the polynomials are handled via results from algebraic geometry,
through the notion of the \intro{Zariski topology}
over $\mathbb{C}^n$ \cite[Exercise 9.7.53]{goubault2013non}.

One drawback of the topological approach is that
many topologies correspond to a single quasi-ordering.
Hence, when the problem is better described via an ordering,
one has to choose a specific topology,
and there usually does not exist a finest one that is Noetherian.

\subparagraph*{Inductively defined datatypes.}
\AP
As for \kl{well-quasi-orders}, \kl{Noetherian spaces} are stable
under finite products
and finite sums \cite{lectureNotesWqo,goubault2013non}.
While this can be enough
to describe the set of configurations of a Petri net using $\Nat^k$,
it does not allow to talk about more complex data structures such as
channels, lists, or trees.

In the realm of well-quasi-orderings,
the specific cases of finite words and finite trees
are handled respectively via Higman's Lemma~\cite{higman1952ordering}
and Kruskal's Tree Theorem~\cite{kruskal1972theory}.
Let us recall that a word $u$ \intro[word embedding]{embeds} into a word $w$
(written $u \intro*\HigLeq v$) whenever
whenever there exists a strictly increasing map
$h \colon |w| \to |w'|$ such that
$w_i \leq w_{h(i)}$ for $1 \leq i \leq |w|$.
Similarly, a tree $t$ \intro[tree embedding]{embeds} into a tree $t'$
(written $t \intro*\KruLeq t'$)
whenever there exists a map
from nodes of $t$ to nodes of $t'$ respecting the least common
ancestor relation,
and increasing the colours of the nodes.
Proofs that finite words and finite trees
preserve \kl{well-quasi-orderings}
typically rely on a so-called
\intro{minimal bad sequence argument} due
to \citet*{nash_williams_wqo_1965}.
However, the argument is quite subtle, and
needs to be handled with care \cite{gallier_ann_1997,Singh2013SimplifiedPO}.
In addition, the %% minimal bad sequence
argument is not compositional
and has to be slightly modified
whenever a new inductive construction is desired
\cite[e.g.,][]{dershowitz_gap_2003,Daligault2010}.

This picture has been
adapted to the topological setting
by proposing analogues of the \kl{word embedding} and \kl{tree embedding},
together with a proof that they preserve Noetherian spaces
\cite[Section 9.7]{goubault2013non}.
However, both the definitions and the proofs
have an increased complexity, as they rely on an adapted
``topological minimal bad sequence argument'' that appears
to be even more subtle.

\medskip
One could expect the situation to be more regular.
In an ML-like language, one can define words over an alphabet of type \verb|'a|
via a type declaration of the form
\texttt{'a word = Nil | Cons of 'a * 'a word},
and trees over an alphabet of type
\texttt{'a} via \texttt{'a tree = Node of 'a * ('a tree list)}.
In a more set-theoretical mindset,
one would write \verb|Nil| as the singleton set
$\intro*\singleton \defined \set{ \star }$,
$A + B$ the disjoint union of $A$ and $B$,
and $A \times B$ their product.
An inductive type would then be defined via a
least fixed point operator:
$\intro*\lfp{X}{F(X)}$.
In this language,
$\words{\Sigma} \equiv \lfp{X}{\singleton + \Sigma \times X}$,
and $\trees{\Sigma} \equiv \lfp{X}{\Sigma \times \words{X}}$.
In the case of well-quasi-orderings,
two generic fixed point constructions have already been
proposed~\cite{HASEGAWA2002113,freund2020kruskal}.
In these frameworks, the constructor $F$ in $\lfp{X}{F(X)}$
has to be a ``well-behaved'' functor of quasi-orders
in order for $\lfp{X}{F(X)}$ to be a \kl{well-quasi-order}.
Both proposals, while relying on different categorical
notions, successfully recover Higman's word embedding
and Kruskal's tree embedding via the least fixed point definitions
of words and trees. As a side effect, they reinforce
the idea that the two quasi-orders are somehow canonical.

In the case of Noetherian spaces, no equivalent framework exists
to build inductive datatypes,
and the notions of ``well-behaved'' constructors from~\cite{HASEGAWA2002113,freund2020kruskal} rule out
the use of
important Noetherian spaces, as they require that an element
$a \in F(X)$ has been built using \emph{finitely many}
elements of $X$: while this is the case for finite words and finite trees,
it does not hold for instance for the arbitrary powerset.
Moreover, there have been recent advances in placing Noetherian
topologies over spaces that are not straightforwardly obtained
through ``well-behaved''
definitions, such as infinite words~\cite{goubaultlarrecq2021infinitary},
or even ordinal length words~\cite{goubaultlarrecq2022infinitary}.
%Note that the problem of placing a Noetherian topology
%over an inductive datatype is non-trivial, as
%Noetherian spaces are not closed under
%limits or co-limits in $\CatTop$
%\cite[e.g. Exercise 9.7.22 of][]{goubault2013non}.

\subsection{Contributions of this paper}
\AP
In this paper, we propose a least fixed point theorem
for Noetherian topologies.
The main contribution of this paper is to
build topologies defined inductively over a set $X$.
This is done in a way that greatly differs from
the categorical frameworks proposed in the
study of \kl{well-quasi-orders}~\cite{HASEGAWA2002113,freund2020kruskal}, as
the construction of the space is entirely
\emph{decoupled} from the construction of the topology.
In particular, the set $X$ itself need not be inductively
defined.

In this setting, we consider a fixed set $X$
and a map $R$ from topologies $\TP$
over $X$ to topologies $R(\TP)$ over $X$.
Because the set of topologies over $X$ is a complete lattice,
it suffices to ask for $R$ to be monotone to guarantee
that it has a least fixed point, that we write
$\lfp{\TP}{R(\TP)}$.
In general, this least fixed point will not be Noetherian,
but we show that a simple sufficient condition on $F$ guarantees
that it is.
This main theorem (\cref{cor:gen:noethpres}),
encapsulates all the complexity of the
topological adaptations of the minimal bad
sequences arguments \citep[Section 9.7]{goubault2013non}, and we believe
that it has its own interest.

The necessity to separate the construction of the set
of points
from the construction of the topology might be perceived as
a weakness of the theory, when it is in fact a strength of our approach.
We illustrate this by
giving a shorter proof that the words of ordinal length
are Noetherian \cite{goubaultlarrecq2022infinitary},
without providing an inductive definition of the space.
As an illustration of the versatility of our framework,
we introduce a reasonable topology
over ordinal branching trees (with finite depth), and prove
that it is Noetherian using the same technique.

In the specific cases where the space of interest
can be obtained as a least fixed point of a ``well-behaved''
functor, we show how
\cref{cor:gen:noethpres} can be used to
generalise the categorical framework
of \citet*{HASEGAWA2002113} to a topological setting.

%As well as adding inductively defined topologies
%(hence, inductively defined datatypes) to the
%theory of Noetherian spaces,
%we provide a reasonable answer to the
%lack of canonicity of the Noetherian topologies that were
%introduced in the literature.

\subparagraph*{Outline.}
In \cref{sec:primer} we recall some of the main
results in the theory of \kl{Noetherian spaces}.
In \cref{sec:generic} we prove our main
result (\cref{cor:gen:noethpres}). In \cref{sec:transfinite}
we explore how this result covers existing topological results
in the literature, and provide a new non-trivial Noetherian space
(\cref{def:ft:alpha-trees}).
In \cref{sec:inductive}, we leverage our main result
to devise a Noetherian topology over inductively defined datatypes
(\cref{thm:induc:divisibility}),
and prove that this generalises the work of \citeauthor*{HASEGAWA2002113}
over well-quasi-orders (\cref{thm:induc:coincidence-alexandroff}).

\section{A Quick Primer on Noetherian Topologies}
\label{sec:primer}

A \intro*\kl{topological space}
is a pair $(\SP, \TP)$ where $\TP \subseteq \P(X)$,
$\TP$ is stable under finite intersections, and $\TP$
is stable under arbitrary unions.
Before formally introducing the topological counterpart to \kl{well-quasi-orderings},
let us provide a small dictionary from topology to orders.
Given a quasi-ordered set $(\QO, \leq)$,
a set $U$ is \intro{upwards-closed} whenever
$x \in U$ and $x \leq y$ implies $y \in U$ for every
$x,y \in X$.

\begin{definition}
	Let $(\QO, \leq)$ be a quasi-order.
	The \intro{Alexandroff topology}
	$\intro*\Alex{\leq}$
	over $\QO$ is the collection of \kl{upwards-closed} subsets of $\QO$.
\end{definition}

\begin{definition}
	Let $(\SP, \TP)$ be a topological space.
	The \intro{specialisation preorder}
	$\intro*\spec{\TP}$ is defined via
	$x \spec{\TP} y$ whenever for every open $U \in \TP$,
	if $x \in U$ then $y \in U$.
\end{definition}

It is an easy check that
the \kl{specialisation pre-order} of the \kl{Alexandroff topology}
of a quasi-order $\leq$ is the quasi-order itself.
This allows to build intuition by getting back and forth
between topologies and quasi-orders.
Several topologies can share the same \kl{specialisation pre-order} $\leq$,
among those, the \kl{Alexandroff} topology is the finest.

\AP We can now build the topological analogue to \kl{wqos}
through the notion of compactness: a subset $K$ of
$(\SP, \TP)$ is defined as \intro{compact}
whenever
from every family $(U_i)_{i \in I}$ of open
sets such that $K \subseteq \bigcup_{i \in I} U_i$, one can extract
a finite
subset $J \subseteq_f I$ such that $K \subseteq \bigcup_{i \in J} U_i$.
A quasi-order $(\QO,\leq)$ is \kl{wqo} if and only if
every subset $K$ of $\QO$ is compact.
Generalising this property to arbitrary topological spaces $(\SP,\TP)$,
a topological space $(\SP,\TP)$ is said to be a
\intro{Noetherian space} whenever
every subset of $\SP$ is compact.

\begin{remark}
	A space $(X, \tau)$ is \kl{Noetherian} if and only
	if for every increasing sequence of open subsets $(U_i)_{i \in \mathbb{N}}$,
	there exists $j \in \mathbb{N}$ such that $\bigcup_{i \in \mathbb{N}} U_i
		= \bigcup_{i \leq j} U_i$.
\end{remark}

Following the ideas of \kl{wqos}, an algebra of \kl{Noetherian spaces}
has been developed and is described in
\cref{fig:intro:noeth-algebra}
\cite[see][]{goubault2007noetherian,goubault2013non,goubaultlarrecq2022infinitary}.

\begin{table}[t]
	\centering
	\iflncs
		\caption{An \intro(noeth){\emph{algebra}} of \kl{Noetherian spaces}.}
	\fi
	\begin{tabular}{lcl}
		\toprule
		\textbf{Constructor} & \textbf{Syntax}      & \textbf{Topology}                 \\
		\midrule
		Well-quasi-orders    & $\QO$                & \kl{Alexandroff topology}
		\\
		Complex vectors      & $\mathbb{C}^k$       & \kl{Zariski topology}
		\\
		\addlinespace
		Disjoint sum         & $\SP_1 + \SP_2$      & \kl{co-product topology}
		\\
		Product              & $\SP_1 \times \SP_2$ & \kl{product topology}
		\\
		\addlinespace
		Finite words         & $\words{\SP}$        & \kl{subword topology}
		\\
		Finite trees         & $\trees{\SP}$        & \kl{tree topology}
		\\
		Finite multisets     & $\fmSets{\SP}$       & \kl{multiset topology}
		\\
		\addlinespace
		Transfinite words    & $\SP^{<\alpha}$      & \kl{transfinite subword topology}
		\\
		Powerset             & $\pSets{\SP}$        & \kl{Lower-Vietoris}
		\\
		\bottomrule
	\end{tabular}
	\iflipics
		\caption{An \intro(noeth){\emph{algebra}} of \kl{Noetherian spaces}.}
	\fi
	\label{fig:intro:noeth-algebra}
	\label{claim:join-prod-noeth}
\end{table}

\section{Refinements of Noetherian topologies}
\label{sec:generic}
\AP
Let us fix a set $X$
equipped the \intro{trivial topology}
$\intro*\TrivTop \defined \{ \emptyset, X \}$.
This space is \kl{Noetherian} because there are finitely many open
sets.
The approach taken in this paper
is to iteratively refine this topology
while keeping it Noetherian, and ultimately prove
that the \emph{limit} of this construction remains Noetherian.

\begin{definition}
	A \intro{refinement function} over a set $X$
	is a function
	$\intro*\ARef$ mapping topologies over $X$
	to topologies over $X$. Moreover, we assume that
	$\ARef(\TP)$ is \kl{Noetherian} whenever $\TP$ is,
	and that $\ARef(\TP) \subseteq \ARef(\TP')$ when
	$\TP \subseteq \TP'$.
\end{definition}

The collection of topologies over a set $X$
is itself a set, and forms a complete lattice
for inclusion. Thanks to Tarski's fixed point theorem,
every \kl{refinement function} $\ARef$ has a least fixed point,
which can be obtained
by transfinitely iterating $\ARef$ from the \kl{trivial topology}.
Let us write $\lfp{\TP}{\ARef(\TP)}$ for the least fixed point of
$\ARef$.

\AP Given a quasi-order $(\QO, \leq)$ and a set
$E \subseteq \QO$, let us define the \intro{upwards-closure} of $E$,
written $\upset{\leq}{E}$, as the set of elements that are
greater or equal than some element of  $E$ in $\QO$.

\begin{example}[Natural Numbers]
	\label{ex:gen:natural-numbers}
	Over $X \defined \Nat$, one can define
	$\mathsf{Div}(\TP)$ as the topology
	generated by the sets
	$\upset{\leq}{(U + 1)}$ for $U \in \TP$.
	Then $\mathsf{Div}(\TrivTop) = \set{\emptyset,
			\upset{\leq}{1}, \Nat}$,
	$\mathsf{Div}^2(\TrivTop) = \set{\emptyset,
			\upset{\leq}{1}, \upset{\leq}{2}, \Nat}$. More generally,
	for every $k \geq 0$,
	$\mathsf{Div}^k(\TrivTop) = \set{
			\emptyset, \upset{\leq}{1}, \dots, \upset{\leq}{k},
			\Nat
		}$.
	It is an easy check that
	$\lfp{\TP}{\mathsf{Div}(\TP)}$ is precisely $\Alex{\leq}$,
	which is \kl{Noetherian} because $(\Nat, \leq)$
	is a \kl{well-quasi-ordering}.
\end{example}

Not all \kl{refinement functions} behave as nicely
as in \cref{ex:gen:natural-numbers}, and one can obtain
non-\kl{Noetherian} topologies via their least fixed points.

\subsection{An ill-behaved refinement function}
\label{sec:sub:ill-behaved}
%in \cref{sec:sub:well-behaved} to ensure that they are.

Let us consider for this section $\Alpha \defined \{a,b\}$ with the
\kl{discrete topology}, i.e., $\set{\emptyset, \{a\}, \{b\}, \Alpha}$.
Let us now build the set $\words{\Alpha}$ of finite words
over $\Sigma$.
Whenever $U$ and $V$ are subsets of $\words{\Alpha}$,
let us write $UV$ for their concatenation, defined as $\setof{uv}{u \in U, v \in V}$.
\begin{definition}
	Let $\intro*\BadIterator$, be the function
	mapping a topology $\TP$ over $\words{\Alpha}$
	to the topology generated by the sets
	$UV$ where $U \subseteq \Sigma$ and $V \in \TP$,
\end{definition}

We refer to \cref{fig:gen:prefixtopo}
for a graphical presentation of the first two iterations
of the refinement function $\BadIterator$.
For the sake of completeness, let us compute $\lfp{\TP}{\BadIterator(\TP)}$,
which is the \kl{Alexandroff topology} of the prefix ordering on words.
Beware that this is not the usual notion of ``prefix topology''
in the literature
\cite[see][resp. Section 8 and Exercice
	9.7.36]{finkel_forward_2020,goubault2013non}.

\begin{definition}
	The \intro{prefix topology} $\intro*\PrefTopo$, over $\words{\Sigma}$ is
	generated by the following open sets:
	$U_1 \dots U_n \words{\Sigma}$, where $n \geq 0$ and $U_i \subseteq \Sigma$.
\end{definition}
\vspace{-2ex}
\begin{restatable}{lemma}{lemgenpreflfp}\label{lem:gen:pref-lfp}
	The \kl{prefix topology} over $\words{\Sigma}$ is the least fixed point
	of $\BadIterator$.
\end{restatable}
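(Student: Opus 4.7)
The plan is to invoke Tarski's characterisation and reduce the statement to two checks: (a) $\PrefTopo$ is a fixed point of $\BadIterator$, and (b) every fixed point of $\BadIterator$ contains $\PrefTopo$. Together, these force $\PrefTopo$ to be the least fixed point.

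For (a) I would first observe that the basic opens $U_1 \cdots U_n \words{\Sigma}$ generating $\PrefTopo$ are stable under finite intersection --- a short direct computation shows that the intersection of two such sets is again of this form --- so they constitute a base, and every $V \in \PrefTopo$ is a union of basic opens. Given $U \subseteq \Sigma$ and $V = \bigcup_i U_{i,1} \cdots U_{i,n_i} \words{\Sigma}$, distributing concatenation over union yields $UV = \bigcup_i U \cdot U_{i,1} \cdots U_{i,n_i} \words{\Sigma} \in \PrefTopo$, giving $\BadIterator(\PrefTopo) \subseteq \PrefTopo$. Conversely, each basic open $U_1 \cdots U_n \words{\Sigma}$ with $n \geq 1$ factors as $U_1 \cdot (U_2 \cdots U_n \words{\Sigma})$, which is a generator of $\BadIterator(\PrefTopo)$; the case $n = 0$ is the whole space $\words{\Sigma}$, which lies in every topology.

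For (b), let $\TP^*$ be any fixed point of $\BadIterator$. I would prove by induction on $n$ that every basic open $U_1 \cdots U_n \words{\Sigma}$ belongs to $\TP^*$. The base $n = 0$ reduces to $\words{\Sigma} \in \TP^*$, which holds trivially. The inductive step uses $\BadIterator(\TP^*) = \TP^*$: assuming $U_2 \cdots U_n \words{\Sigma} \in \TP^*$, the set $U_1 \cdot (U_2 \cdots U_n \words{\Sigma})$ is a generator of $\BadIterator(\TP^*)$ and hence lies in $\TP^*$. Since every open of $\PrefTopo$ is a union of basic opens, $\PrefTopo \subseteq \TP^*$.

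The only real subtlety is to keep careful track of the distinction between generators, basic opens forming a base, and arbitrary open sets of the generated topology; the key auxiliary fact --- that the basic opens of $\PrefTopo$ are closed under finite intersection --- is a small computation, but it is what licenses rewriting an arbitrary $V \in \PrefTopo$ as a union of basic opens in step (a). Once this is in hand, both (a) and (b) reduce to direct manipulations of the concatenation operator on subsets of $\words{\Sigma}$.
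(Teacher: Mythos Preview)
Your proposal is correct and follows essentially the same route as the paper: one inclusion comes from $\PrefTopo$ being a (pre)fixed point of $\BadIterator$, and the other from an induction on $n$ showing that each generator $U_1\cdots U_n\words{\Sigma}$ lies in the target topology. The only cosmetic differences are that the paper phrases the second half as membership in the finite iterate $\BadIterator^n(\TrivTop)$ rather than in an arbitrary fixed point $\TP^*$, and that you additionally verify $\PrefTopo \subseteq \BadIterator(\PrefTopo)$, which is harmless but not needed once you have (b); your explicit check that the generators form a base is a detail the paper leaves implicit.
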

\begin{sendappendix}
	\ifsubmission \lemgenpreflfp* \fi
	\begin{proof}\label{proof:gen:pref-lfp}
		Consider a subbasic open set $W \in \BadIterator(\PrefTopo)$.
		It is of the form $U V$ with $U \subseteq
			\Sigma$ and $V \in \PrefTopo$. Hence,
		$UV \in \PrefTopo$. We have proven that,
		$\BadIterator(\PrefTopo) \subseteq \PrefTopo$.

		Conversely, consider a subbasic open set $W \in \PrefTopo$. Either it is
		$\emptyset$, or $\words{\Sigma}$, in which case
		it trivially belongs to $\lfp{\TP}{\BadIterator(\TP)}$. Or it is
		of the form $U_1 \dots U_n \words{\Sigma}$, with $U_i \subseteq \Sigma$ for
		$1 \leq i \leq n$, in which case one proves by induction over
		$n$ that it belongs to $\BadIterator^n(\TrivTop)$. \qedhere
	\end{proof}
\end{sendappendix}

\begin{lemma}
	The function $\BadIterator$ is a \kl{refinement function}.
\end{lemma}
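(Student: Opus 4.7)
The plan is to check the three defining properties of a refinement function. Closure under topology formation and monotonicity are immediate: $\BadIterator(\TP)$ is by definition a topology, being the one generated by a family of subsets, and if $\TP \subseteq \TP'$ then every subbasic generator $UV$ of $\BadIterator(\TP)$ is also a subbasic generator of $\BadIterator(\TP')$, giving $\BadIterator(\TP) \subseteq \BadIterator(\TP')$ at once.

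The substantive step is to prove that $\BadIterator(\TP)$ is Noetherian whenever $\TP$ is. The key intersection identity I would establish is
\[
(UV) \cap (U'V') \;=\; (U \cap U')(V \cap V')
\]
for $U, U' \subseteq \Sigma$ and $V, V' \in \TP$: a word belongs to both sides exactly when its first letter lies in $U \cap U'$ and its suffix in $V \cap V'$, which works because elements of $U$ and $U'$ are single letters. Consequently the family $\{UV : U \subseteq \Sigma, V \in \TP\}$ is closed under finite intersection, so every open of $\BadIterator(\TP)$ is an arbitrary union of such subbasic sets.

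I would then give a concrete parametrisation of the opens. No subbasic $UV$ contains the empty word $\varepsilon$, hence the only open containing $\varepsilon$ is $\words{\Sigma}$ itself (the empty intersection of subbasics); every other open $W$ can be regrouped by first letter as $W = \{a\} V^a \cup \{b\} V^b$, with $V^a = \{v : av \in W\}$ and $V^b = \{v : bv \in W\}$, both in $\TP$ since they are unions of elements of $\TP$. Conversely any such pair $(V^a, V^b) \in \TP \times \TP$ yields an open of $\BadIterator(\TP)$.

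Given an increasing sequence $(W_n)_n$ of opens of $\BadIterator(\TP)$, either some $W_n$ equals $\words{\Sigma}$ and the sequence stabilises there, or every $W_n$ has a decomposition $(V_n^a, V_n^b) \in \TP \times \TP$ with $V_n^a \subseteq V_{n+1}^a$ and $V_n^b \subseteq V_{n+1}^b$, and both coordinate sequences stabilise by Noetherianness of $\TP$. The main technical point is the intersection identity, since it is what makes $\BadIterator(\TP)$ behave like a finite product of two copies of $\TP$ (indexed by the letter); without the single-letter prefix, this reduction would fail and a direct Noetherianness argument would be much less straightforward.
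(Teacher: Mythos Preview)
Your proof is correct, but it follows a different path from the paper's. The paper argues via compactness: take a subbasic cover $E \subseteq \bigcup_i U_i V_i$, invoke Alexander's subbase lemma so that only subbasic covers need be considered, and then use that the product space $(\words{\Sigma},\TP)\times(\words{\Sigma},\TP)$ is Noetherian to extract a finite $J$ with $\bigcup_{i\in J} U_i\times V_i = \bigcup_{i\in I} U_i\times V_i$, which forces $\bigcup_{i\in J} U_i V_i = \bigcup_{i\in I} U_i V_i$.

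Your argument instead exploits the intersection identity $(UV)\cap(U'V')=(U\cap U')(V\cap V')$ to show that the subbasic sets already form a basis, and then decomposes every proper open by its first letter as $\{a\}V^a \cup \{b\}V^b$ with $V^a,V^b\in\TP$. This reduces an increasing chain of opens to two increasing chains in $\TP$, each of which stabilises. The approach is more elementary---it needs neither Alexander's lemma nor the product theorem---but it is tailored to the fixed two-letter alphabet of this subsection, whereas the paper's argument would go through unchanged for any finite (or Noetherian) $\Sigma$. Both buy the same conclusion; yours makes the ``product of two copies of $\TP$'' structure explicit at the level of opens, the paper's makes it explicit at the level of the ambient product space.
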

\begin{proof}
	It is an easy check that whenever $\TP \subseteq \TP'$,
	$\BadIterator(\TP) \subseteq \BadIterator(\TP')$.
	Now, assume that $\TP$ is \kl{Noetherian},
	it remains to prove that $\BadIterator(\TP)$ remains Noetherian.
	Consider a subset $E \subseteq \words{\Sigma}$ and let us prove that $E$
	is compact in $\BadIterator(\TP)$.

	For that, we consider
	an open cover $E \subseteq \bigcup_{i \in I} W_i$, where
	$W_i \in \BadIterator(\TP)$.
	Thanks to Alexander's subbase lemma, we can assume without loss
	of generality that $W_i$ is a subbasic open set of $\BadIterator(\TP)$, that is,
	$W_i = U_i V_i$ with $U_i \subseteq \Sigma$ and $V_i \in \TP$.

	Since $(\words{\Sigma}, \TP) \times (\words{\Sigma}, \TP)$
	is Noetherian (see \cref{claim:join-prod-noeth}),
	there exists a finite set $J \subseteq I$ such that
	$\bigcup_{i \in J} U_i \times V_i = \bigcup_{i \in I} U_i \times V_i$.
	This implies that
	$E \subseteq \bigcup_{i \in J} U_i V_i$,
	and provides
	a finite subcover of $E$.
	\qedhere
\end{proof}

\begin{figure}[t]
	\centering
	\resizebox{0.7\columnwidth}{!}{
		\begin{tikzpicture}
			\begin{scope}
				\node (top) at (0,4) {$\Sigma^*$};
				\node (bot) at (0,0) {$\emptyset$};
				\draw[->] (bot) -- (top);
			\end{scope}
			\begin{scope}[xshift=3cm]
				\node (top) at (0,4) {$\Sigma^*$};
				\node (bot) at (0,0) {$\emptyset$};
				\node (aS) at (-1,2) {$a \Sigma^*$};
				\node (bS) at (1,2) {$b \Sigma^*$};
				\draw[->] (bot) -- (aS) -- (top);
				\draw[->] (bot) -- (bS) -- (top);
			\end{scope}
			\begin{scope}[xshift=9cm]
				\node (top) at (0,4) {$\Sigma^*$};
				\node (bot) at (0,0) {$\emptyset$};
				\node (aS) at (-1,2.5) {$a \Sigma^*$};
				\node (bS) at (1,2.5) {$b \Sigma^*$};
				\node (aaS) at (-3,1.5) {$aa\Sigma^*$};
				\node (abS) at (-1,1.5) {$ab\Sigma^*$};
				\node (baS) at (1,1.5) {$ba\Sigma^*$};
				\node (bbS) at (3,1.5) {$bb\Sigma^*$};
				\draw[->] (bot) -- (aaS) -- (aS) -- (top);
				\draw[->] (bot) -- (bbS) -- (bS) -- (top);
				\draw[->] (bot) -- (baS) -- (bS);
				\draw[->] (bot) -- (abS) -- (aS);
			\end{scope}
		\end{tikzpicture}
	}
	\caption{Iterating $\BadIterator$ over $\Sigma^*$. On the left the
		\kl{trivial topology} $\TrivTop$, followed by $\BadIterator$, and on the right
		$\BadIterator^2$.}
	\label{fig:gen:prefixtopo}
	\vspace{-2ex}
\end{figure}

The sequence
$\bigcup_{0 \leq i \leq k} a^ib\Sigma^*$, for $k \in \Nat$,
is a strictly increasing sequence of opens. Therefore,
the \kl{prefix topology} is not \kl{Noetherian}.
The terms $a^ib\Sigma^*$
can be observed in \cref{fig:gen:prefixtopo}
as a diagonal of incomparable open sets.

\begin{corollary}
	The topology $\lfp{\TP}{\BadIterator(\TP)}$ is not \kl{Noetherian}.
\end{corollary}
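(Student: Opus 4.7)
The plan is to combine \cref{lem:gen:pref-lfp}, which identifies $\lfp{\TP}{\BadIterator(\TP)}$ with the \kl{prefix topology} $\PrefTopo$, with the sequential characterisation of Noetherianity from the Remark in \cref{sec:primer}: a space fails to be \kl{Noetherian} as soon as it carries a strictly increasing sequence of open sets that does not stabilise. So the goal reduces to exhibiting such a sequence in $\PrefTopo$, and the authors have already pointed to the right candidate in the paragraph preceding the corollary, namely
\[
	U_k \defined \bigcup_{0 \leq i \leq k} a^i b \words{\Sigma}, \qquad k \in \Nat.
\]

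First I would observe that each $U_k$ is open in $\PrefTopo$: it is a finite union of sets of the form $a^i b \words{\Sigma}$, each of which is a subbasic open of $\PrefTopo$ (taking $U_1 = \dots = U_i = \{a\}$ and $U_{i+1} = \{b\}$ in the defining shape $U_1 \dots U_n \words{\Sigma}$). Stability of topologies under finite unions yields $U_k \in \PrefTopo$, and clearly $U_k \subseteq U_{k+1}$.

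Next I would argue that the inclusions are strict. The witness is $w_k \defined a^{k+1} b$, which evidently lies in $U_{k+1}$ since it is of the form $a^{k+1} b \cdot \varepsilon$. To see that $w_k \notin U_k$, note that every element of $U_k$ has the form $a^i b v$ for some $i \leq k$ and some $v \in \words{\Sigma}$, hence its $(i+1)$th letter is $b$ for some $i \leq k$; but the only $b$ in $w_k$ occurs at position $k+2$, so $w_k$ cannot belong to any $a^i b \words{\Sigma}$ with $i \leq k$.

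Finally, since $(U_k)_{k \in \Nat}$ is a strictly increasing sequence of open sets of $\PrefTopo$, the Remark forces $\PrefTopo$ to fail Noetherianity, and hence so does $\lfp{\TP}{\BadIterator(\TP)} = \PrefTopo$. There is no real obstacle here: everything rests on the prior identification of the fixed point with $\PrefTopo$ and on choosing the antichain-like diagonal $\{a^i b : i \in \Nat\}$ which, as illustrated in \cref{fig:gen:prefixtopo}, is exactly the obstruction visible in the lattice of subbasic opens.
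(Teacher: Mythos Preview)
Your proof is correct and follows exactly the approach sketched in the paragraph preceding the corollary: identify the least fixed point with $\PrefTopo$ via \cref{lem:gen:pref-lfp} and exhibit the strictly increasing sequence $U_k = \bigcup_{0 \leq i \leq k} a^i b \words{\Sigma}$ of opens. You have simply spelled out the details (openness of the $U_k$ and the witness $a^{k+1}b$ for strictness) that the paper leaves implicit.
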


\AP
The \kl{prefix topology} is not \kl{Noetherian}, even when starting
from a finite alphabet. However, we claimed
in \cref{sec:intro} that there is a natural
generalisation of the \kl{subword embedding} to
topological spaces which is Noetherian.
Before introducing this topology, let us write
$\intro*\openword{U_1, \dots, U_n}$ as a shorthand
notation for the set
$\Sigma^* U_1 \Sigma^* \dots \Sigma^* U_n \Sigma^*$.

\begin{definition}[{Subword topology \cite[Definition 9.7.26]{goubault2013non}}]
	Given a topological space $(\Sigma,\tau)$, the space $\Sigma^*$ of finite
	words over $\Sigma$ can be endowed with the \intro{subword
		topology}, generated by the open sets
	$\openword{U_1, \dots, U_n}$ when $U_i \in \tau$.
\end{definition}

The \emph{topological Higman lemma}~\cite[Theorem 9.7.33]{goubault2013non}
states that the \kl{subword topology} over $\words{\Sigma}$ is \kl{Noetherian}
if and only if $\Sigma$ is Noetherian.
Let us now reverse engineer a \kl{refinement function}
whose least fixed point is the subword topology.

\begin{definition}
	Let $(\Sigma, \theta)$ be a topological space.
	Let $\intro*\RegSubExp$ be defined as
	mapping a topology $\TP$ over $\words{\Sigma}$ to the topology
	generated by the following sets:
	\begin{inparaitem}[]
		\item $\upset{\HigLeq}{UV}$ for $U,V \in \tau$;
		\item and $\upset{\HigLeq}{W}$, for $W \in \theta$.
	\end{inparaitem}
\end{definition}
\vspace{-2ex}
\begin{restatable}{lemma}{regsubexplfp}
	\label{lem:gen:regsubexp-lfp}
	Let $(\Sigma, \theta)$ be a topological space.
	The \kl{subword topology} over $\words{\Sigma}$
	is the least fixed point of $\RegSubExp$.
\end{restatable}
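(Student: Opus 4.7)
The plan is to establish the equality by showing two inclusions: that $\tau_{\text{sub}}$ is a fixed point of $\RegSubExp$, and that it is the smallest one. The key combinatorial identity underlying everything is $\openword{U_1, \dots, U_n} \cdot \openword{V_1, \dots, V_m} = \openword{U_1, \dots, U_n, V_1, \dots, V_m}$ (using $\words{\Sigma} \cdot \words{\Sigma} = \words{\Sigma}$), together with the observation that each subbasic $\openword{U_1, \dots, U_n}$ is $\HigLeq$-\kl{upwards-closed}: given a factorisation $w = w_0 u_1 w_1 \dots u_n w_n$ with $u_i \in U_i$ and $w \HigLeq w'$, the Higman embedding sends each $u_i$ to a letter of $w'$ that still lies in $U_i$, yielding a factorisation of $w'$ of the same shape.

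For $\RegSubExp(\tau_{\text{sub}}) \subseteq \tau_{\text{sub}}$, I would write arbitrary opens $U, V \in \tau_{\text{sub}}$ as unions of subbasics and distribute concatenation over unions; each product of subbasics is again a subbasic by the identity above, hence $UV \in \tau_{\text{sub}}$ and $\upset{\HigLeq}{UV} = UV$ by upward-closedness. Subbasics of the second kind, $\upset{\HigLeq}{W}$ with $W \in \theta$, are exactly $\openword{W}$ and already belong to $\tau_{\text{sub}}$.

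For minimality, I would show by induction on $n$ that any fixed point $\TP^*$ of $\RegSubExp$ contains every subbasic $\openword{U_1, \dots, U_n}$ of $\tau_{\text{sub}}$. The base cases $n = 0$ and $n = 1$ are immediate: the whole space $\words{\Sigma}$ belongs to every topology, and $\openword{U_1} = \upset{\HigLeq}{U_1}$ is exactly a subbasic of the second kind with $W = U_1 \in \theta$. For $n \geq 2$, the decomposition $\openword{U_1, \dots, U_n} = \openword{U_1} \cdot \openword{U_2, \dots, U_n}$ lets the first kind of $\RegSubExp$-subbasic place the product back into $\TP^* = \RegSubExp(\TP^*)$, using the inductive hypothesis on both factors. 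Specialising this argument to $\TP^* = \tau_{\text{sub}}$ also supplies the remaining inclusion $\tau_{\text{sub}} \subseteq \RegSubExp(\tau_{\text{sub}})$, closing the fixed-point half of the proof. I do not foresee a real obstacle: the only genuinely non-trivial ingredient is the upward-closedness of $\openword{U_1, \dots, U_n}$, while everything else is algebraic bookkeeping about how concatenation interacts with arbitrary unions.
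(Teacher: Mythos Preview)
Your argument is correct and mirrors the paper's (very terse) proof: first verify that the subword topology is stable under $\RegSubExp$, then show by induction on $n$ that every $\openword{U_1,\dots,U_n}$ lies in any fixed point. One small point worth making explicit: when you ``write arbitrary opens $U,V\in\tau_{\text{sub}}$ as unions of subbasics'', you are tacitly using that the sets $\openword{U_1,\dots,U_n}$ already form a \emph{basis}, not merely a subbasis, for the subword topology (finite intersections of such sets decompose as unions over shuffles of the two index sequences); this matters because concatenation does not distribute over intersections, so the reduction to products of subbasic opens would otherwise be unjustified.
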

\begin{sendappendix}
	\ifsubmission \regsubexplfp* \fi
	\begin{proof}
		First, we notice that the  subword topology
		is stable under $\RegSubExp$.
		Then, we prove
		by induction on $n$ shows that $\openword{U_1, \dots, U_n}$
		is open in the least fixed point of $\RegSubExp$.
		\qedhere
	\end{proof}
\end{sendappendix}

It is an easy check that the \kl{subword topology}
over $\words{\Sigma}$ is the least fixed point of $\RegSubExp$.
In order to show that $\RegSubExp$ is a \kl{refinement function},
we first claim that the two parts of the topology can be dealt with
separately.
\begin{restatable}[{\cite[Proposition 9.7.18]{goubault2013non}}]{lemma}{lemjoinnoeth}
	\label{lem:gen:join-noeth}
	If $(\SP, \TP)$ and $(\SP, \TP')$ are \kl{Noetherian},
	then $\SP$ endowed the topology generated by $\TP \cup \TP'$ is Noetherian.
\end{restatable}
\begin{sendappendix}
	\ifsubmission \lemjoinnoeth* \fi
	\begin{proof}
		The space $\SP$ endowed with the topology generated by
		$\TP \cup \TP'$ is a quotient of $(\SP, \TP) + (\SP, \TP')$.
		Therefore, it is Noetherian \cite[Proposition 9.7.18]{goubault2013non}.
	\end{proof}
\end{sendappendix}

\begin{lemma}
	\label{lem:gen:reg-exp-refinement}
	Let $(\Sigma, \theta)$ be a \kl{Noetherian} topological space.
	The map $\RegSubExp$ is a \kl{refinement function} over $\Sigma$.
\end{lemma}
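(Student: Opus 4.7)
The plan is to verify the two defining properties of a refinement function: monotonicity, and preservation of Noetherianity. Monotonicity is immediate: for $\TP \subseteq \TP'$, every subbasic open of $\RegSubExp(\TP)$ of the form $\upset{\HigLeq}{UV}$ with $U, V \in \TP$ remains subbasic in $\RegSubExp(\TP')$, while the second family of subbasic opens $\upset{\HigLeq}{W}$ with $W \in \theta$ does not involve $\TP$ at all. The real work lies in showing that $\RegSubExp(\TP)$ is Noetherian whenever $\TP$ is.

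Assuming $\TP$ Noetherian, the first step is to invoke \cref{lem:gen:join-noeth} to split the problem into showing separately that the topology $\TP_1$ generated by $\set{\upset{\HigLeq}{UV} : U, V \in \TP}$ is Noetherian, and that the topology $\TP_2$ generated by $\set{\upset{\HigLeq}{W} : W \in \theta}$ is Noetherian. For $\TP_1$, the plan is to mirror the $\BadIterator$ argument: fix $E \subseteq \words{\Sigma}$, and by Alexander's subbase lemma assume an open cover $E \subseteq \bigcup_{i \in I} \upset{\HigLeq}{U_i V_i}$ by subbasic opens. Since $(\words{\Sigma}, \TP) \times (\words{\Sigma}, \TP)$ is Noetherian by \cref{claim:join-prod-noeth}, there is a finite $J \subseteq I$ with $\bigcup_{i \in J} U_i \times V_i = \bigcup_{i \in I} U_i \times V_i$; pushing this equality through the concatenation map, and then through $\HigLeq$-upward closure, yields $E \subseteq \bigcup_{i \in J} \upset{\HigLeq}{U_i V_i}$.

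For $\TP_2$, the key observation is that for $W \subseteq \Sigma$ one has $\upset{\HigLeq}{W} = \openword{W}$, i.e., the set of words containing at least one letter from $W$. A subbasic cover $E \subseteq \bigcup_{i \in I} \upset{\HigLeq}{W_i}$ therefore induces, inside $(\Sigma, \theta)$, a cover of the set $L$ of letters occurring in elements of $E$ by the family $(W_i)_{i \in I}$; Noetherianity of $(\Sigma, \theta)$ extracts a finite $J \subseteq I$ with $L \subseteq \bigcup_{i \in J} W_i$, which lifts back to $E \subseteq \bigcup_{i \in J} \upset{\HigLeq}{W_i}$. The main subtle point I expect will be the passage from the product equality $\bigcup_{i \in J} U_i \times V_i = \bigcup_{i \in I} U_i \times V_i$ to the corresponding equality of concatenations $\bigcup_{i \in J} U_i V_i = \bigcup_{i \in I} U_i V_i$ and then to their $\HigLeq$-upward closures: this is elementary set theory once written out, but it is the conceptual bridge between Noetherianity of the product $(\words{\Sigma}, \TP)^2$ and that of $\TP_1$.
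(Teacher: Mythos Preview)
Your overall structure mirrors the paper's proof exactly: invoke \cref{lem:gen:join-noeth} to split into the two subbases, then reduce $\TP_1$ to Noetherianity of $(\words{\Sigma},\TP)^2$ and $\TP_2$ to Noetherianity of $(\Sigma,\theta)$. The paper phrases both reductions via stabilising increasing unions rather than Alexander's subbase lemma, but your compactness formulation for $\TP_1$ is correct and essentially equivalent.

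There is, however, a genuine slip in your $\TP_2$ argument. From a subbasic cover $E \subseteq \bigcup_{i \in I} \upset{\HigLeq}{W_i}$ it does \emph{not} follow that the set $L$ of all letters occurring in words of $E$ is covered by $\bigcup_{i \in I} W_i$: a word $w\in E$ is guaranteed to have \emph{some} letter in some $W_i$, not that \emph{every} letter of $w$ lies in $\bigcup_i W_i$. Concretely, with $\Sigma=\{a,b\}$ discrete, $E=\{ab\}$ and $I=\{1\}$, $W_1=\{a\}$, we have $E\subseteq \openword{W_1}$ but $b\in L\setminus W_1$. So the step ``induces a cover of $L$'' fails.

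The repair is immediate and brings you back in line with the paper. Observe that for subsets $W\subseteq\Sigma$ one has $\upset{\HigLeq}{W}=\openword{W}$ and that $\openword{\,\cdot\,}$ commutes with arbitrary unions: $\bigcup_{i\in I}\openword{W_i}=\openword{\bigcup_{i\in I}W_i}$. Since $(\Sigma,\theta)$ is Noetherian, the open set $\bigcup_{i\in I}W_i$ is compact in itself, so $\bigcup_{i\in I}W_i=\bigcup_{i\in J}W_i$ for some finite $J$, whence $\bigcup_{i\in I}\openword{W_i}=\bigcup_{i\in J}\openword{W_i}\supseteq E$. Equivalently, and this is how the paper argues, any sequence $(\bigcup_{i\le k} W_i)_k$ stabilises in $\theta$, hence so does $(\bigcup_{i\le k}\upset{\HigLeq}{W_i})_k$.
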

\begin{proof}
	We leave the monotonicity of $\RegSubExp$ as an exercice
	and focus on the proof that $\RegSubExp(\TP)$ is \kl{Noetherian},
	whenever $\TP$ is.
	Thanks to \cref{lem:gen:join-noeth}, it suffices to prove that the topology
	generated by the sets $\upset{\HigLeq}{UV}$ ($U,V$ open in $\tau$),
	and the topology generated by the sets $\upset{\HigLeq}{W}$ ($W$ open in
	$\theta$) are Noetherian.

	Let $(\upset{\HigLeq}{U_i V_i})_{i \in \Nat}$ be a sequence of
	open sets. Because Noetherian topologies are closed under
	products (\cref{claim:join-prod-noeth}),
	the sequence $(\bigcup_{i \leq k} U_i \times V_i)_{k \in \Nat}$
	is asymptotically constant. Hence,
	the sequence $\bigcup_{i \leq k} \upset{\HigLeq}{U_i V_i}$
	also is.

	Let $\upset{\HigLeq}{W_i}$ be a sequence of open sets.
	Because $\theta$ is Noetherian, the sequence $\bigcup_{i \leq k} W_i$
	is asymptotically constant,
	hence so is the sequence $\bigcup_{i \leq k} \upset{\HigLeq}{W_i}$.
	\qedhere
\end{proof}

We have designed two \kl{refinement functions}
$\BadIterator$ and $\RegSubExp$ over $\words{\Sigma}$. The least
fixed point of the former is not \kl{Noetherian}, as opposed to the
least fixed point of the latter.
We have depicted the result of iterating $\RegSubExp$ twice
over the \kl{trivial topology} in
\cref{fig:induct:correctingprefix}. As opposed to
$\BadIterator$, the ``diagonal'' elements are comparable for inclusion.
To further elaborate the difference between $\BadIterator$ and
$\RegSubExp$, let us compare their behaviour with respect to
subsets of $\words{\Sigma}$.

Let $V \defined a \words{\Sigma}$,
which is a closed subset of $(\words{\Sigma}, \BadIterator(\TrivTop))$.
When endowing $V$ with the topology
induced by $\BadIterator(\TrivTop)$,
we obtain the space $(V, \TrivTop)$.
When endowing $V$ with the topology induced by
$\BadIterator^2(\TrivTop)$,
we obtain a space
$(V, \set{ \emptyset, aa\words{\Sigma}, ab\words{\Sigma}, V })$.
However, if one considers $V$ as a topological space
itself, then applying $\BadIterator$ over $(V, \TrivTop)$
leads to the open sets $\set{\emptyset, aa \words{\Sigma}, V}$,
which is a different topology.

Let $W \defined \words{\Sigma} a \words{\Sigma}$,
which is a closed in $\RegSubExp(\TrivTop)$.
As for $V$, the topology induced on $W$
by $\RegSubExp(\TrivTop)$ is the trivial topology.
However, when considering $(W, \TrivTop)$
as a topological space, we obtain
the same topology over $W$ whether we
build the topology induced by $\RegSubExp^2(\TrivTop)$,
or apply $\RegSubExp$ to $W$ itself.

\begin{figure}[t]
	\centering
	\begin{tikzpicture}
		\node (top) at (0,4) {$\Sigma^*$};
		\node (bot) at (0,0) {$\emptyset$};
		\node (aS) at (-1,2.5) {$\Sigma^* a \Sigma^*$};
		\node (bS) at (1,2.5) {$\Sigma^* b \Sigma^*$};
		\node (aaS) at (-3,1.5) {$\Sigma^* a \Sigma^* a\Sigma^*$};
		\node (abS) at (-1,1.5) {$\Sigma^* a\Sigma^* b\Sigma^*$};
		\node (baS) at (1,1.5) {$\Sigma^* b\Sigma^* a\Sigma^*$};
		\node (bbS) at (3,1.5) {$\Sigma^* b\Sigma^* b\Sigma^*$};
		\draw[->] (bot) -- (aaS) -- (aS) -- (top);
		\draw[->] (bot) -- (bbS) -- (bS) -- (top);
		\draw[->] (bot) -- (baS) -- (bS);
		\draw[->] (bot) -- (abS) -- (aS);
		\draw[->, thick, red] (baS) -- (aS);
		\draw[->, thick, red] (abS) -- (bS);
	\end{tikzpicture}
	\caption{The topology $\RegSubExp^2(\TrivTop)$, with bold red arrows
		for the inclusions that were not present between the
		``analogous sets'' in $\BadIterator^2(\TrivTop)$.}
	\label{fig:induct:correctingprefix}
\end{figure}

\subsection{Well-behaved refinement functions}
\label{sec:sub:well-behaved}
\AP

As hinted in the previous section,
the behaviour of the \kl{refinement function} with respect to subsets
will act as a sufficient condition to separate the well-behaved
ones from the others. In order to make the idea of computing
the refinement function directly over a subset precise,
we will replace a subset with the induced topology by
a ``restricted'' topology over the whole space.

\begin{definition}
	Let $(\SP,\TP)$ be a topological space
	and $H$ be a closed subset of $\SP$.
	Define the \intro{subset restriction}
	$\intro*\restr{\TP}{H}$
	to be the topology generated
	by the opens $U \cap H$ where $U$ ranges over
	$\TP$.
\end{definition}

Let $\SP$ be a topological space, and $H$ be a proper closed subset of $\SP$.
The space $\SP$ endowed with $\restr{\TP}{H}$
has a lattice of open sets that is isomorphic to
the one of the space $H$ endowed with the topology induced by $\TP$,
except for the entire space $\SP$ itself.
Beware that, the two spaces are in general not homeomorphic.

\begin{example}
	Let $\mathbb{R}$ be endowed with the usual metric topology.
	The set $\{ a \}$ is a closed set.
	The induced topology over $\{a\}$ is $\set{\emptyset, \{a\}}$.
	The \kl{subset restriction} of the topology to $\{a\}$
	is $\TP_a \defined \set{\emptyset, \{ a \}, \mathbb{R}}$.
	Clearly, $(\mathbb{R}, \TP_a)$ and $(\{a\}, \TrivTop)$
	are not homeomorphic.
\end{example}

In order to build intuition, let us
consider the special case of an \kl{Alexandroff topology}
over $X$ and compute the \kl{specialisation preorder}
of $\restr{\tau}{H}$, where $H$ is a downwards closed set.
\ifsubmission The proof can be found in \cref{proof:gen:specpreorder}
	page \pageref{proof:gen:specpreorder}.\fi

\begin{restatable}{lemma}{specPreorderRestrictions}
	\label{lem:gen:specpreorder}
	Let $\tau = \Alex{\leq}$ over a set $X$,
	and $x,y \in X$.
	Then,
	$x \spec{\restr{\tau}{H}} y$ if and only
	if $x \spec{\tau} y \in H$ or $x \not \in H$.
\end{restatable}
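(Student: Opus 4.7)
The plan is to first describe the open sets of the restricted topology explicitly, and then read off the specialisation preorder from that description via a minimal open argument.

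First I would observe that, since $\tau = \Alex{\leq}$, the opens of $\tau$ are exactly the upwards-closed sets. The topology $\restr{\tau}{H}$ is generated by sets of the form $U \cap H$ with $U$ upwards-closed; this family is already closed under finite intersections and arbitrary unions (indeed $(U \cap H) \cap (V \cap H) = (U \cap V) \cap H$ and $\bigcup_i (U_i \cap H) = (\bigcup_i U_i) \cap H$), and contains $\emptyset$. Hence the opens of $\restr{\tau}{H}$ are precisely the sets $U \cap H$ for $U$ upwards-closed, together with the whole space $X$. I would also recall, as a trivial computation, that the specialisation preorder of $\Alex{\leq}$ coincides with $\leq$, so the statement to be proved rewrites as
\[
x \spec{\restr{\tau}{H}} y \;\iff\; \bigl(x \leq y \text{ and } y \in H\bigr) \text{ or } x \notin H.
\]

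For the implication ($\Leftarrow$), I would split into two cases. If $x \notin H$, then $x$ is not in any set of the form $U \cap H$, so the only open of $\restr{\tau}{H}$ containing $x$ is $X$, which trivially contains $y$. If instead $x \leq y$ and $y \in H$, then for any open $U \cap H$ containing $x$ one has $x \in U$, so $y \in U$ by upwards-closedness of $U$, and $y \in H$ by assumption, giving $y \in U \cap H$.

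For the converse ($\Rightarrow$), assume $x \spec{\restr{\tau}{H}} y$. If $x \notin H$, the right-hand side holds. Otherwise $x \in H$, and I would test the specialisation condition against the single open $\upset{\leq}{x} \cap H$ of $\restr{\tau}{H}$: it contains $x$, so by assumption it contains $y$, which means exactly $x \leq y$ and $y \in H$. The argument is essentially a single well-chosen witness, so no real obstacle is expected; the only subtlety is remembering that $X$ itself is always an open of $\restr{\tau}{H}$, which is what makes the case $x \notin H$ vacuously satisfy the specialisation relation (and it is also why the downwards-closedness of $H$ plays no role in the conclusion).
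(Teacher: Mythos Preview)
Your proof is correct and follows essentially the same route as the paper's: unpack the definition of the specialisation preorder on $\restr{\tau}{H}$ and split on whether $x \in H$. The only cosmetic difference is that you first describe the opens of $\restr{\tau}{H}$ explicitly and then test the forward direction against the single witness $\upset{\leq}{x} \cap H$, whereas the paper quantifies over all opens $U \cap H$; your observation that the downwards-closedness of $H$ is never used is also accurate.
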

\begin{sendappendix}
	\ifsubmission
		\specPreorderRestrictions*
	\fi
	\begin{proof}
		\label{proof:gen:specpreorder}
		Let us write $\uparrow F$ for the set of points that are above $F$
		for $\leq$, and $\uparrow x$ as a shorthand notation for $\uparrow \{ x \}$,
		the set of points above $x$. Let us now unpack
		the definition of $x \spec{\restr{\tau}{H}} y$.
		\begin{align*}
			x \spec{\restr{\tau}{H}} y
			 & \iff \forall U \in \restr{\tau}{H}, x
			\in U \Rightarrow y \in U                  \\
			 & \iff \forall U \in \tau, x \in U \cap H
			\Rightarrow y \in U \cap H                 \\
		\end{align*}
		Let $x \spec{\restr{\tau}{H}} y$. If $x \in H$, then
		for every open set $U \in \tau$, $x \in U \cap H$, hence
		$y \in U \cap H$. As a consequence,
		$x \spec{\tau} y$ and both belong to $H$.

		Conversely, assume that $x \not \in H$, then
		$x \in U \cap H \implies y \in U \cap H$ vacuously for every
		$U \in \tau$, hence $x \spec{\restr{\tau}{H}} y$.
		Whenever, $x \spec{\tau} y \in H$,
		then $x \not \in H$ implies $y \not \in H$, which is absurd.
		Therefore, $x \spec{\restr{\tau}{H}} y$.
		\qedhere
	\end{proof}
\end{sendappendix}

%Concretely, the restriction to a downwards closed subset $H$ of a quasi
%order $X$
%can be defined directly as merging every element in $H^c$
%to a single point on top of $H$.
%We are ready to define the central notion of this paper.

\begin{definition}
	\label{def:topo-expander}
	A \intro{topology expander} is a
	\kl{refinement function} $\intro*\AExp$ that satisfies the following extra
	property:
	for every \kl{Noetherian topology} $\tau$
	satisfying $\tau \subseteq \AExp(\tau)$,
	for all closed set $H$ in $\tau$,
	$\restr{\AExp(\tau)}{H} =
		\restr{\AExp(\restr{\tau}{H})}{H}$.
	We say that $\AExp$ respects subsets.
\end{definition}

\begin{note}
	In
	\cref{def:topo-expander}, the equality can be
	replaced by: $H$ is closed in $\AExp(\restr{\tau}{H})$ and
	$\restr{\AExp(\tau)}{H} \subseteq
		\restr{\AExp(\restr{\tau}{H})}{H}$.
\end{note}
%equality  of
%\cref{def:topo-expander}.
%We assume that $\tau \subseteq \AExp(\tau)$
%so that closed sets $H$ of $\tau$ are still closed in $\AExp(\tau)$,
%otherwise, $\restr{\AExp(\tau)}{H}$ would not be defined. Notice that the
%inclusion implies that $H$ is closed in $\AExp(\restr{\tau}{H})$,
%and also implies the converse inclusion
%$\restr{\AExp(\tau)}{H} \supseteq
%\restr{\AExp(\restr{\tau}{H})}{H}$
%because $\restr{\tau}{H} \subseteq \tau$
%and $\AExp$ is monotone. As a consequence,
%Item~3 of \cref{def:topo-expander} can be
%replaced by: $H$ is closed in $\AExp(\restr{\tau}{H})$ and
%$\restr{\AExp(\tau)}{H} \subseteq
%\restr{\AExp(\restr{\tau}{H})}{H}$.

As proven at the end of \cref{sec:sub:ill-behaved}, $\BadIterator$
fails to be a \kl{topology expander}. Let us quickly
prove that $\RegSubExp$ is a topology expander.

\begin{lemma}
	\label{lem:gen:regsubexp-topo-expander}
	Let $(\Sigma, \theta)$ be a \kl{Noetherian space}.
	Then $\RegSubExp$ is a \kl{topology expander}.
\end{lemma}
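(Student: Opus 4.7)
By \cref{lem:gen:reg-exp-refinement}, $\RegSubExp$ is already a refinement function, so I need only verify the subset-respecting property of \cref{def:topo-expander}. Fix a Noetherian topology $\tau$ with $\tau \subseteq \RegSubExp(\tau)$ and a closed $H$ in $\tau$. Following the note after \cref{def:topo-expander}, I would establish both (a) that $H$ is closed in $\RegSubExp(\restr{\tau}{H})$, and (b) that $\restr{\RegSubExp(\tau)}{H} \subseteq \restr{\RegSubExp(\restr{\tau}{H})}{H}$.

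The crucial preliminary is that every subbasic open of $\RegSubExp(\tau')$ is an upward closure of the form $\upset{\HigLeq}{\cdot}$, so every open of $\RegSubExp(\tau')$ is upward-closed in $\HigLeq$. Since $\tau \subseteq \RegSubExp(\tau)$, the opens of $\tau$ are upward-closed, and hence $H$ is downward-closed in $\HigLeq$. This downward closure yields the central algebraic identity, valid for any $U, V \subseteq \words{\Sigma}$:
$$\upset{\HigLeq}{UV} \cap H = \upset{\HigLeq}{(U \cap H)(V \cap H)} \cap H,$$
since if $uv \HigLeq w \in H$ then downward closure forces $uv \in H$, and therefore both $u, v$ lie in $H$.

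Condition (b) follows by subbase translation: each $\RegSubExp(\tau)$-subbasic $\upset{\HigLeq}{UV}$ intersected with $H$ rewrites via the identity into a $\RegSubExp(\restr{\tau}{H})$-subbasic intersected with $H$, using that $U \cap H, V \cap H \in \restr{\tau}{H}$; the subbasics $\upset{\HigLeq}{W}$ with $W \in \theta$ are common to both $\RegSubExp$-applications. For (a), the set $X \setminus H$ lies in $\tau \subseteq \RegSubExp(\tau)$, and by Noetherianity decomposes as a finite union of finite intersections of $\RegSubExp(\tau)$-subbasics, each basic constituent being disjoint from $H$; each such subbasic can be translated into $\RegSubExp(\restr{\tau}{H})$ via the identity together with auxiliary simplifications such as $\upset{\HigLeq}{UX} = \upset{\HigLeq}{U}$ (from $\epsilon \in X$) and $\upset{\HigLeq}{U \cap H} \cap H = U \cap H$ (from upward-closure of $U \in \tau$).

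The main obstacle is condition (a): the key identity only equates sets after intersecting with $H$, whereas (a) demands producing an open of the full space inside $\RegSubExp(\restr{\tau}{H})$. Reconciling the outside-$H$ portions of the translated subbasics is where the argument becomes delicate, and I would exploit the fact that every basic open appearing in the Noetherian decomposition of $X \setminus H$ is entirely disjoint from $H$, so that the discrepancies created by passing from $\upset{\HigLeq}{UV}$ to $\upset{\HigLeq}{(U \cap H)(V \cap H)}$ can be absorbed without changing the union.
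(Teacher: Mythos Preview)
Your treatment of the inclusion $\restr{\RegSubExp(\tau)}{H} \subseteq \restr{\RegSubExp(\restr{\tau}{H})}{H}$ is exactly the paper's argument: observe that $H$ is downward-closed for $\HigLeq$ (since $\tau\subseteq\RegSubExp(\tau)$ forces every $\tau$-open to be an upward closure), derive the subbase identity $\upset{\HigLeq}{UV}\cap H = \upset{\HigLeq}{(U\cap H)(V\cap H)}\cap H$, and translate. The paper's proof stops right there, recording only this one inclusion; it never establishes condition~(a) of the Note, and indeed only this inclusion is ever consumed downstream in the proof of \cref{thm:gen:limittoponoeth}.

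Your additional attempt at~(a) contains a real gap. You decompose $X\setminus H$ into finitely many $\RegSubExp(\tau)$-basics, each disjoint from $H$, and then translate each constituent subbasic $\upset{\HigLeq}{UV}$ to $\upset{\HigLeq}{(U\cap H)(V\cap H)}$. But this replacement can only \emph{shrink} sets, since $(U\cap H)(V\cap H)\subseteq UV$. Thus the translated union remains inside $X\setminus H$, yet nothing forces it to still \emph{cover} $X\setminus H$: the identity you rely on equates the two sides only after intersecting with $H$, and here that intersection is empty, so the identity yields no information whatsoever about points outside $H$. The ``absorption'' you appeal to is therefore unjustified. If you really want the full equality of \cref{def:topo-expander} rather than just the inclusion, the cleaner route is to run the same subbase identity in the reverse direction---handling separately the case where an operand in $\restr{\tau}{H}$ is the whole space $X$ via $\upset{\HigLeq}{XV'}=\upset{\HigLeq}{V'}$---rather than detouring through~(a).
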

\begin{proof}
	We have proven in
	\cref{lem:gen:reg-exp-refinement} that $\RegSubExp$
	is a \kl{refinement function}. Let us now prove that
	it respects subsets.

	Let $\tau$ be a \kl{Noetherian} topology
	over $\words{\Sigma}$, such that
	$\tau \subseteq \RegSubExp(\tau)$.
	Let $H$ be a closed subset of $(\words{\Sigma}, \tau)$.
	Notice that as $H$ is closed in $\tau$, and since $\tau \subseteq
		\RegSubExp(\tau)$, $H$ is \kl{downwards closed} for $\HigLeq$.
	As a consequence,
	$(\upset{\HigLeq}{UV}) \cap H
		= (\upset{\HigLeq}{(U \cap H) (V\cap H)) \cap H}$.
	Similarly,
	$(\upset{\HigLeq}{W}) \cap H
		= (\upset{\HigLeq}{(W \cap H)}) \cap H$.
	Hence, $\restr{\RegSubExp(\tau)}{H} \subseteq
		\restr{\RegSubExp(\restr{\tau}{H})}{H}$.
	\qedhere
\end{proof}

\subsection{Iterating Expanders}

Our goal is now to prove that
\kl{topology expanders} are
\kl{refinement functions} that can be safely iterated.
For that, let us first define precisely what ``iterating
transfinitely'' a refinement function means.

\begin{definition}
	Let $(\SP,\TP)$ be a topological space,
	and $\AExp$
	be a \kl{topology expander}.
	The \intro*\kl{limit topology} $\AExp^\alpha(\tau)$
	is defined as: $\tau$ when $\alpha = 0$,
	$\AExp(\AExp^\beta(\tau))$ when $\alpha = \beta + 1$,
	and as the join of the topologies
	$\AExp^\beta(\tau)$ for all $\beta < \alpha$, when
	$\alpha$ is a limit ordinal.
\end{definition}

We devote the rest of this section
to proving our main theorem,
which immediately implies that
least fixed points of \kl{topology expanders}
are \kl{Noetherian}. Notice that the theorem
is trivial whenever $\alpha$ is a successor ordinal.

\begin{restatable}{proposition}{limittopologies}
	\label{thm:gen:limittoponoeth}
	Let $\alpha$ be an ordinal, $\TP$ be a topology, and
	$\AExp$ be a \kl{topology expander}.
	If $\AExp^\beta(\tau)$ is \kl{Noetherian} for all $\beta < \alpha$,
	and $\tau \subseteq \AExp(\tau)$,
	then $\AExp^\alpha(\tau)$ is \kl{Noetherian}.
\end{restatable}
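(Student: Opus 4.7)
The plan is to proceed by transfinite induction on $\alpha$. The base case $\alpha = 0$ and the successor case $\alpha = \beta + 1$ are immediate: in the latter, $\AExp^{\beta+1}(\tau) = \AExp(\AExp^\beta(\tau))$, which is \kl{Noetherian} because $\AExp^\beta(\tau)$ is (by hypothesis) and $\AExp$, being a \kl{refinement function}, preserves Noetherianity. The substance of the proof is therefore the limit case, where $\AExp^\alpha(\tau)$ is the join of the chain $(\AExp^\beta(\tau))_{\beta < \alpha}$.

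For the limit case, I would argue by contradiction and set up a topological \kl{minimal bad sequence argument}. Assume a strictly increasing sequence of opens $(W_n)_{n \in \Nat}$ exists in $\AExp^\alpha(\tau)$. Using Alexander's subbase lemma (as in the proof of \cref{lem:gen:reg-exp-refinement}), I can reduce to the case where each $W_n$ is a subbasic open, so each $W_n$ belongs to $\AExp^{\beta_n}(\tau)$ for some $\beta_n < \alpha$. A first easy subcase: if the $\beta_n$ are bounded by some $\gamma < \alpha$, then all $W_n$ live in the \kl{Noetherian} topology $\AExp^\gamma(\tau)$ granted by the induction hypothesis, immediately giving a contradiction. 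Hence I may assume the $\beta_n$ are cofinal in $\alpha$ (in particular $\alpha$ has countable cofinality).

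The key ingredient is the \emph{respects subsets} property of $\AExp$. I would first establish by a separate transfinite induction the identity $\restr{\AExp^\beta(\tau)}{H} = \restr{\AExp^\beta(\restr{\tau}{H})}{H}$ for every $\beta \leq \alpha$ and every $H$ closed in $\AExp^\alpha(\tau)$; this transfers the one-step commutation of restriction and expansion up the ordinal hierarchy. Next, I would extract a \emph{minimal} bad sequence by choosing $W_n$ at each stage so that its rank $\beta_n$ is smallest among all possible continuations of the current prefix. Setting $H_n \defined \SP \setminus W_n$, a strictly decreasing chain of closed sets, the respects-subsets identity lets me transport the bad sequence to the topology $\restr{\AExp^\alpha(\restr{\tau}{H_n})}{H_n}$, which is built over a strictly coarser starting topology and is therefore \kl{Noetherian} by the induction hypothesis applied at a smaller ordinal.

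The main obstacle is exactly this minimisation step: in the order-theoretic case one can well-order elements by a well-founded rank, whereas here we must well-order opens (or their complementary closed sets) by their birth ordinal $\beta_n$ and argue that the minimal choice still produces a genuinely bad sequence in a strictly coarser, already-Noetherian refinement. Getting this to work requires the transfinite lifting of the respects-subsets property above and a careful bookkeeping argument ensuring that restricting to $H_n$ genuinely reduces the ordinal complexity, so that the induction hypothesis applies and the contradiction closes.
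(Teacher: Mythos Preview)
Your outline has the right skeleton---reduce to the limit case, take a minimal bad sequence of subbasic opens ranked by birth ordinal, then exploit the respects-subsets property---but the closing step does not go through as written, and the paper's actual mechanism is different in a way that matters.

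The gap is the sentence ``which is built over a strictly coarser starting topology and is therefore \kl{Noetherian} by the induction hypothesis applied at a smaller ordinal.'' There is no smaller ordinal here: $\restr{\AExp^\alpha(\restr{\tau}{H_n})}{H_n}$ is still an $\alpha$-fold iteration, and the statement's hypothesis only tells you that $\AExp^\beta(\tau)$ is Noetherian for $\beta<\alpha$ with the \emph{original} $\tau$. Changing the starting topology to $\restr{\tau}{H_n}$ gives you nothing for free; you would need a separate argument that this new tower stabilises earlier, and none is offered. Relatedly, your proposed transfinite identity $\restr{\AExp^\beta(\tau)}{H}=\restr{\AExp^\beta(\restr{\tau}{H})}{H}$ for $H$ closed in $\AExp^\alpha(\tau)$ is delicate: the one-step respects-subsets axiom requires $H$ closed in the topology you are expanding, and a set closed in $\AExp^\alpha(\tau)$ need not be closed in $\AExp^\beta(\tau)$ for $\beta<\alpha$, so the inductive step does not obviously chain.

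What the paper does instead is avoid both issues by never reducing $\alpha$. From the minimal bad sequence $(U_i)$ it forms the cumulative complements $H_i=(\bigcup_{j<i}U_j)^c$ (not your $H_n=W_n^c$) and assembles a single auxiliary topology $\UTopo=\bigvee_i\restr{\TDown(U_i)}{H_i}$; minimality of the sequence ensures each $H_i$ is closed in $\TDown(U_i)$, so the one-step axiom applies exactly once at each $i$. The crucial lemma (\cref{lem:gen:mintopo}) shows $\UTopo$ is Noetherian directly from minimality: any bad sequence in $\UTopo$ would let one splice in a term of strictly smaller depth into the original sequence, contradicting lexicographic minimality. Then one application of $\AExp$ to $\UTopo$ produces a Noetherian topology in which each $U_i\setminus\bigcup_{j<i}U_j$ is visible, and goodness there contradicts badness of $(U_i)$. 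So the missing idea in your plan is precisely this \kl(noeth){minimal topology} $\UTopo$ and its Noetherianness via the splicing argument; that is what replaces your hoped-for ``induction at a smaller ordinal.''
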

\begin{theorem}[Main Result]
	\label{cor:gen:noethpres}
	Let $X$ be a set
	and $\AExp$ be a \kl{topology expander}.
	The least fixed point of $\AExp$
	is a \kl{Noetherian topology} over $X$.
\end{theorem}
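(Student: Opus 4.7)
The plan is to combine the transfinite iteration supplied by Tarski's fixed point theorem with the Noetherian preservation result \cref{thm:gen:limittoponoeth}. Since $\AExp$ is monotone on the complete lattice of topologies over $X$ (this is built into the definition of a \kl{refinement function}), Tarski's theorem guarantees that $\AExp$ admits a least fixed point, and moreover this least fixed point is reached by iterating $\AExp$ transfinitely starting from the bottom element, namely the \kl{trivial topology} $\TrivTop$. So it suffices to prove that $\AExp^{\alpha}(\TrivTop)$ is \kl{Noetherian} for every ordinal $\alpha$, and then pick any $\alpha$ large enough for the iteration to stabilise.

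First I would verify the technical hypothesis of \cref{thm:gen:limittoponoeth} at the base point: the topology $\AExp(\TrivTop)$ necessarily contains $\emptyset$ and $X$, so $\TrivTop \subseteq \AExp(\TrivTop)$. By monotonicity of $\AExp$ this extends to $\AExp^{\beta}(\TrivTop) \subseteq \AExp^{\beta+1}(\TrivTop)$ for every $\beta$, so the whole transfinite chain $(\AExp^{\beta}(\TrivTop))_{\beta}$ is increasing, which is the assumption one needs before invoking \cref{thm:gen:limittoponoeth}.

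Next I would prove by transfinite induction on $\alpha$ that $\AExp^{\alpha}(\TrivTop)$ is \kl{Noetherian}. The base case $\alpha = 0$ is immediate since $\TrivTop$ has only two open sets. The successor case $\alpha = \beta + 1$ follows directly from the fact that $\AExp$, being a \kl{refinement function}, sends Noetherian topologies to Noetherian topologies. The limit case is precisely the content of \cref{thm:gen:limittoponoeth}, applied to $\tau = \TrivTop$: the induction hypothesis provides Noetherianity at every $\beta < \alpha$, the chain condition $\TrivTop \subseteq \AExp(\TrivTop)$ was verified above, so $\AExp^{\alpha}(\TrivTop)$ is Noetherian.

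Having established Noetherianity along the entire chain, I conclude by picking an ordinal $\alpha$ at which the iteration stabilises (such an $\alpha$ exists by Tarski, e.g.\ any cardinal exceeding the cardinality of the powerset of $\P(X)$ will do), which yields $\lfp{\TP}{\AExp(\TP)} = \AExp^{\alpha}(\TrivTop)$, a \kl{Noetherian topology}. The only non-routine step is the limit case of the induction, and that work has already been encapsulated in \cref{thm:gen:limittoponoeth}; the present theorem is essentially the pleasant corollary one gets once that proposition is available.
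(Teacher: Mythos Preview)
Your proposal is correct and matches the paper's approach: the paper presents \cref{cor:gen:noethpres} as an immediate corollary of \cref{thm:gen:limittoponoeth}, and you have simply spelled out the transfinite induction (base, successor, limit) together with the verification $\TrivTop \subseteq \AExp(\TrivTop)$ that makes the invocation legitimate.
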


\subsubsection{The topological minimal bad sequence argument.}
\label{sec:sub:iteration}
\AP
In order to define what a minimal bad sequence is,
we first introduce a well-founded partial ordering over
the elements of $\AExp^{\alpha}(\tau)$.
With an open set $U \in \AExp^{\alpha}(\tau)$, we associate
a depth $\intro*\depth(U)$,
defined as the smallest ordinal $\beta \leq \alpha$ such that
$U \in \AExp^{\beta}(\tau)$.
We then define $U \intro*\osubeq V$ to hold
whenever $\depth(U)\leq \depth(V)$,
and $U \intro*\osub V$ whenever $\depth(U) < \depth(V)$.
It is an easy check that this is a well-founded partial order
over $\AExp^{\alpha}(\tau)$.

As a first step towards proving that
$\AExp^{\alpha}(\tau)$ is \kl{Noetherian} for a limit ordinal $\alpha$,
we first reduce the problem to
opens of depth strictly less than $\alpha$ itself.

\begin{restatable}{lemma}{lemgendepthalpha}
	\label{lem:gen:depth-alpha}
	Let $\alpha$ be a limit ordinal, and $\AExp$ be a \kl{topology expander}.
	The topology $\AExp^\alpha(\tau)$
	has a subbasis of elements of depth strictly
	below $\alpha$.
\end{restatable}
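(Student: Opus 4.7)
The plan is to read off the claimed subbasis directly from the definition of the limit topology and then use the definition of depth to conclude.

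Since $\alpha$ is a limit ordinal, by definition
\[
\AExp^{\alpha}(\tau) \;=\; \bigvee_{\beta < \alpha} \AExp^{\beta}(\tau),
\]
i.e.\ $\AExp^{\alpha}(\tau)$ is the topology generated by the union $\mathcal{S} \defined \bigcup_{\beta < \alpha} \AExp^{\beta}(\tau)$ taken as a subbasis. So the first step is simply to take $\mathcal{S}$ as the candidate subbasis.

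The second step is the depth bound. If $U \in \mathcal{S}$, then by construction there exists some $\beta < \alpha$ with $U \in \AExp^{\beta}(\tau)$. By definition, $\depth(U)$ is the least ordinal $\gamma \leq \alpha$ with $U \in \AExp^{\gamma}(\tau)$, so $\depth(U) \leq \beta < \alpha$. This gives the strict inequality uniformly for every element of $\mathcal{S}$.

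There is essentially no obstacle here: the statement is a direct unpacking of two definitions (the \kl{limit topology} at a limit ordinal, and the \kl{depth} of an open set), and it does not even require the hypotheses that $\AExp$ is a \kl{topology expander} or that the earlier iterates are \kl{Noetherian}. The only point worth being careful about is not to confuse this claim with the stronger (false) statement that every open of $\AExp^{\alpha}(\tau)$ has depth below $\alpha$: a general element of $\AExp^{\alpha}(\tau)$ is obtained as an arbitrary union of finite intersections of elements of $\mathcal{S}$, and its depth can be exactly $\alpha$; what is asserted is only that such a subbasis of small-depth opens exists, which is precisely what the definition of the limit topology gives.
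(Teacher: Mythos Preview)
Your proof is correct and follows exactly the same approach as the paper, which simply says ``by definition of the limit topology.'' You have just unpacked that definition explicitly, and your additional remarks about the unused hypotheses and the distinction from the stronger (false) statement are accurate and helpful, but add nothing beyond what the paper's one-line proof intends.
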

\begin{sendappendix}
	\ifsubmission \lemgendepthalpha* \fi
	\begin{proof}
		By definition of the limit topology.
	\end{proof}
\end{sendappendix}

Let us recall the notion of topological bad sequence
designed by \citet*[Lemma 9.7.31]{goubault2013non}
in the proof of the Topological Kruskal Theorem,
adapted to our ordering of subbasic open sets.
This notion of bad sequence is tailored to mimic the notion
of \kl(wqo){good sequences} and \kl(wqo){bad sequences}
in \kl{well-quasi-orderings}.

\begin{definition}
	Let $(\SP,\tau)$ be a topological space.
	A sequence $\intro*\AUseq = {(U_i)}_{i \in \mathbb N}$
	of open sets is \intro(noeth){good}
	if there exists $i \in \mathbb{N}$
	such that $U_i \subseteq \bigcup_{j < i} U_j$.
	A sequence that is not \kl(noeth){good} is called \intro(noeth){bad}.
\end{definition}

\begin{restatable}{lemma}{lemgenminibad}
	\label{lem:gen:minbadseq}
	Let $\alpha$ be a limit ordinal, and $\AExp$
	be a \kl{topology expander}
	such that
	$\AExp^\alpha(\tau)$ is not
	\kl{Noetherian}. Then, there exists
	a \kl(noeth){bad sequence} $\AUseq$
	of opens in $\AExp^\alpha(\tau)$
	of \kl{depth} less than $\alpha$
	that is lexicographically minimal for $\osubeq$.
	Such a sequence is called \intro(noeth)[minimal bad sequence]{minimal bad}.
\end{restatable}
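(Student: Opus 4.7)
The plan is to adapt Nash-Williams's minimal bad sequence construction to the topological setting, using $\osubeq$ (which is well-founded as the pull-back of the well-order on ordinals by $\depth$) as the order along which to minimise.

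The first step is to exhibit some \kl(noeth){bad sequence} of opens of depth strictly below $\alpha$. Because $\AExp^\alpha(\tau)$ fails to be \kl{Noetherian}, there is a \kl(noeth){bad sequence} $(V_i)_{i\in\mathbb N}$ of opens in that topology, and for each $i$ I pick a witness $x_i \in V_i \setminus \bigcup_{j<i} V_j$, which exists precisely because the sequence is \kl(noeth){bad}. By \cref{lem:gen:depth-alpha} the opens of depth $<\alpha$ form a subbasis of $\AExp^\alpha(\tau)$, and, since $\alpha$ is a limit ordinal, monotonicity of $\AExp$ ensures that a finite intersection of such opens is again of depth $<\alpha$. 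Hence I can pick a basic open $B_i$ of depth $<\alpha$ with $x_i \in B_i \subseteq V_i$. Any inclusion $B_i \subseteq \bigcup_{j<i} B_j$ would force $x_i \in \bigcup_{j<i} V_j$, a contradiction; so $(B_i)$ is a \kl(noeth){bad sequence} of opens of depth $<\alpha$.

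With this starting point, I construct the required sequence $\AUseq = (U_n)$ by induction on $n \in \mathbb N$, maintaining the invariant that $(U_0, \dots, U_{n-1})$ extends to some \kl(noeth){bad sequence} of opens of depth $<\alpha$. At step $n$, the set of opens $U$ such that $(U_0, \dots, U_{n-1}, U)$ still extends to such a \kl(noeth){bad sequence} is non-empty (the $n$-th term of any witnessing extension works), and well-foundedness of $\osubeq$ gives this set an element of minimal $\depth$; I pick one as $U_n$, which requires only dependent choice since there may be several minima. The resulting $\AUseq$ lies entirely at depth $<\alpha$; it is \kl(noeth){bad} because at every $n$ the prefix extends to a \kl(noeth){bad sequence}, so in particular $U_n \not\subseteq \bigcup_{j<n} U_j$. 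Lexicographic minimality for $\osubeq$ is immediate from the construction: any \kl(noeth){bad sequence} (of opens of depth $<\alpha$) agreeing with $\AUseq$ below position $n$ but using an open strictly $\osub$-smaller than $U_n$ at position $n$ would have been preferred over $U_n$ at the inductive step.

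The main obstacle is the opening reduction: turning an arbitrary \kl(noeth){bad sequence} in $\AExp^\alpha(\tau)$ into one consisting of opens of depth $<\alpha$. The pointwise witness construction resolves it cleanly by exploiting only the subbasis property of small-depth opens together with the limit structure of $\alpha$; after this reduction, the rest of the argument is a textbook minimal bad sequence construction driven by the well-founded $\osubeq$.
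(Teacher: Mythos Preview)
Your argument is correct and is exactly the Nash--Williams construction that the paper invokes by citing \cite[Lemma~9.7.31]{goubault2013non}; you have simply unfolded that citation.

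One small caveat worth tightening: in the reduction step you claim that ``monotonicity of $\AExp$ ensures that a finite intersection of such opens is again of depth $<\alpha$''. Monotonicity of $\AExp$ alone does not make the tower $(\AExp^\beta(\tau))_\beta$ increasing; you also need $\tau \subseteq \AExp(\tau)$, which is not part of the lemma's hypotheses (though it is assumed in \cref{thm:gen:limittoponoeth}, the only place the lemma is used). The paper avoids this point entirely by working with \emph{subbasic} rather than basic opens: at a limit $\alpha$ the topology $\AExp^\alpha(\tau)$ is by definition generated by $\bigcup_{\beta<\alpha}\AExp^\beta(\tau)$, so subbasic opens automatically have depth $<\alpha$, and the cited lemma produces a minimal bad sequence of those directly. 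Either add the hypothesis $\tau \subseteq \AExp(\tau)$, or switch to subbasic opens and appeal to the fact (e.g.\ \cite[Lemma~9.7.15]{goubault2013non}) that a non-Noetherian space admits a bad sequence of subbasic opens.
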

\begin{sendappendix}
	\ifsubmission \lemgenminibad* \fi
	\begin{proof}
		Assume that $\AExp^{\alpha}(\tau)$
		is not \kl{Noetherian}. There
		exists a sequence $(U_i)_{i \in \Nat}$
		of subbasic open sets that is \kl(noeth){bad}
		and lexicographically minimal with respect to $\osubeq$
		\cite[Lemma 9.7.31]{goubault2013non}.
	\end{proof}
\end{sendappendix}

We deduce that in a \kl{limit topology},
\kl(noeth){minimal bad sequences} are not allowed to
use opens of arbitrary \kl{depth}.

\begin{lemma}
	\label{lem:gen:depthminbad}
	Let $\alpha$ be a limit ordinal,
	$\tau$ be a topology and $\AExp$ be a \kl{topology expander}
	such that $\AExp^{\beta}(\tau)$ is \kl{Noetherian}
	for all $\beta < \alpha$.
	Assume that $\AUseq = (U_i)_{i \in \mathbb N}$
	is a \kl(noeth){minimal bad sequence}
	of $\AExp^\alpha(\tau)$. Then, for every $i \in \mathbb N$,
	$\depth(U_i)$ is either $0$ or a successor ordinal.
\end{lemma}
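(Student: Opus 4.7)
The plan is to argue by contradiction. Suppose toward a contradiction that some index $i$ in the \kl(noeth){minimal bad sequence} $\AUseq$ satisfies $\depth(U_i) = \gamma$ where $\gamma$ is a limit ordinal (and $0 < \gamma < \alpha$). The strategy is to show that $U_i$ can be replaced in the sequence by an open of strictly smaller \kl{depth} while preserving badness, contradicting the minimality of $\AUseq$ with respect to $\osubeq$.

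First, I would unpack the structure of $U_i$. Since $\gamma$ is a limit ordinal, the topology $\AExp^\gamma(\tau)$ is by definition the join of the chain $\smash{(\AExp^\beta(\tau))_{\beta < \gamma}}$, a subbasis of which is $\bigcup_{\beta < \gamma} \AExp^\beta(\tau)$. Hence $U_i$ is a union of finite intersections of subbasic opens of \kl{depth} strictly below $\gamma$. Using that $\tau \subseteq \AExp(\tau)$ together with the monotonicity of $\AExp$, one gets $\AExp^{\beta}(\tau) \subseteq \AExp^{\beta'}(\tau)$ for $\beta \leq \beta'$, so any finite intersection of opens whose depths are all below $\gamma$ sits inside $\AExp^{\beta^*}(\tau)$ for some $\beta^* < \gamma$ (this is where $\gamma$ being a limit is used: the maximum of finitely many ordinals less than $\gamma$ is still less than $\gamma$). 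Therefore one may write
\[
    U_i \;=\; \bigcup_{k \in K} V_k,
\]
with each $V_k$ open and $\depth(V_k) < \gamma$.

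Next, I would extract the replacement. Because $\AUseq$ is \kl(noeth){bad}, there exists a point $x \in U_i$ with $x \notin \bigcup_{j < i} U_j$. Pick any $k \in K$ with $x \in V_k$, and consider the modified sequence
\[
    \AUseqP \;\defined\; U_0, U_1, \dots, U_{i-1}, V_k, U_{i+1}, U_{i+2}, \dots .
\]
Since $V_k \subseteq U_i$, for every $n > i$ we have $V_k \cup \bigcup_{i \neq j < n} U_j \subseteq \bigcup_{j < n} U_j$, so $U_n \not\subseteq \bigcup_{j < n} U_j$ implies the corresponding non-inclusion in $\AUseqP$. At position $i$, the witness $x$ shows $V_k \not\subseteq \bigcup_{j < i} U_j$. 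Hence $\AUseqP$ is \kl(noeth){bad}, all of its terms are of \kl{depth} below $\alpha$ (the earlier and later terms inherit this from $\AUseq$, and $\depth(V_k) < \gamma < \alpha$), and $\depth(V_k) < \depth(U_i)$ makes $\AUseqP$ strictly lexicographically smaller than $\AUseq$ for $\osubeq$. This contradicts minimality and concludes the proof.

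The only delicate step is the decomposition of $U_i$ as a union of subbasic opens of depth $< \gamma$; everything else is essentially bookkeeping. This step is not truly hard, but it does rely on the limit case of the definition of the \kl{limit topology} together with the hypothesis $\tau \subseteq \AExp(\tau)$ that propagates through the ordinal iteration, so I would state that monotonicity invariant explicitly before invoking it.
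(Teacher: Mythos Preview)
Your argument is correct, but it takes a different route from the paper's proof. The paper argues directly: since $\gamma = \depth(U_i) < \alpha$ is a limit ordinal, $U_i$ is a union of opens of depth $< \gamma$; because $\AExp^{\gamma}(\tau)$ is \kl{Noetherian} by hypothesis, this union can be taken \emph{finite}, hence $U_i$ already lies in some $\AExp^{\beta^*}(\tau)$ with $\beta^* < \gamma$, contradicting $\depth(U_i) = \gamma$. No reference to badness or to the lexicographic minimality of $\AUseq$ is needed --- the paper actually shows the stronger fact that no open whatsoever has depth equal to a limit ordinal below $\alpha$.

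Your proof, by contrast, never invokes the Noetherian hypothesis on $\AExp^{\gamma}(\tau)$: you keep the (possibly infinite) decomposition $U_i = \bigcup_k V_k$, pick a single $V_k$ witnessing badness at position $i$, and use the lexicographic minimality of $\AUseq$ to derive the contradiction. This is a perfectly valid alternative; it trades one hypothesis (Noetherian at level $\gamma$) for another (minimality of the bad sequence). Note also that both arguments rely on the chain $(\AExp^\beta(\tau))_\beta$ being increasing to control the depth of finite intersections or unions --- you make this explicit via $\tau \subseteq \AExp(\tau)$, whereas the paper leaves it implicit (the hypothesis appears in the enclosing \cref{thm:gen:limittoponoeth} but not in the lemma itself).
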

\begin{proof}
	Assume by contradiction that
	there exists $i \in \mathbb N$ such that
	$\depth(U_i)$ is a limit ordinal $d_i$.
	This proves that $U_i$ is obtained as a union of
	open sets in $\AExp^\beta(\tau)$ for $\beta < d_i$.
	Since $\AExp^{d_i}(\tau)$ is \kl{Noetherian},
	one can define $U_i$ as a finite union of
	open sets of depth less than $d_i$. As a consequence,
	$\depth(U_i) < d_i$, which is absurd.
	\qedhere
\end{proof}

\begin{definition}
	Let $\alpha$ be an ordinal,
	$\tau$ be a topology, $\AExp$ be a \kl{topology expander}
	such that $\tau \subseteq \AExp(\tau)$, and
	let $U \in \AExp^{\alpha}(\tau)$.
	The topology $\intro*\TDown(U)$ is
	generated by the open sets
	$V$ such that $V \subseteq U$,
	where $V$ ranges over $\AExp^{\alpha}(\tau)$.
\end{definition}
\vspace{-2ex}
\begin{restatable}{lemma}{lemgentdownu}
	\label{lem:gen:tdown-u}
	Let $\alpha$ be an ordinal,
	$\tau$ be a topology, $\AExp$ be a \kl{topology expander}
	such that $\tau \subseteq \AExp(\tau)$, and
	let $U \in \AExp^{\alpha}(\tau)$.
	If $\depth(U) = \gamma + 1$, and $\AExp^{\gamma}(\tau)$
	is Noetherian,
	then $U \in \AExp(\TDown(U))$.
\end{restatable}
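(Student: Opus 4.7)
The plan is to apply the ``respects subsets'' property of $\AExp$ to the topology $\tau' \defined \AExp^\gamma(\tau)$—which is \kl{Noetherian} by hypothesis and satisfies $\tau' \subseteq \AExp(\tau')$ by iterating $\tau \subseteq \AExp(\tau)$ monotonically—so as to extract a decomposition of $U$ into pieces lying in $\TDown(U)$. Since $\depth(U) = \gamma + 1$, one has $U \in \AExp(\tau')$, and trivially $U \in \TDown(U)$ as well (taking $V \defined U$ in the generating family of $\TDown(U)$).

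The construction proper would take $H \defined X \setminus U^\circ$, where $U^\circ$ denotes the $\tau'$-interior of $U$; this $H$ is closed in $\tau'$, and $U^\circ$ directly belongs to $\TDown(U)$. The expander property then yields some $V \in \AExp(\restr{\tau'}{H})$ with $V \cap H = U \setminus U^\circ$; a case split on whether points of $V$ lie in $H$ or in $X \setminus H = U^\circ$ shows $V \subseteq U$, and the ``$H$ closed'' half of the expander property makes $U^\circ$ open in $\AExp(\restr{\tau'}{H})$, so $U = V \cup U^\circ$ is open there. Then, by the coincidence of the induced topologies on $U$ coming from $\AExp(\tau')$ and from $\AExp(\restr{\tau'}{H})$ (also forced by the expander property, since $U \subseteq H$), one can rewrite $V = V' \cap U$ for some $V' \in \AExp(\tau')$, placing $V$ in $\AExp^\alpha(\tau)$ with $V \subseteq U$, hence in $\TDown(U)$.

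The hard part will be going from $V, U^\circ \in \TDown(U)$ to $U \in \AExp(\TDown(U))$: since $\TDown(U)$ is not one of the transfinite iterates of $\AExp$, the inflationary inclusion $\TDown(U) \subseteq \AExp(\TDown(U))$ does not follow from monotonicity alone. I would close this gap by another appeal to the ``respects subsets'' property—this time applied to $\TDown(U)$ itself, whose Noetherianity is inherited from that of $\AExp^{\gamma+1}(\tau)$, and whose inflationarity can be verified using the explicit structure of $\TDown(U)$ as a sub-topology of $\AExp^\alpha(\tau)$ together with the decomposition obtained above. This verification, and especially the bookkeeping needed to make sure the hypotheses of the expander property are satisfied by $\TDown(U)$, is where I expect the technical heart of the argument to lie.
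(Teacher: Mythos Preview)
Your plan is built on the literal text of the definition of $\TDown(U)$, namely that it is generated by those $V \in \AExp^\alpha(\tau)$ with $V \subseteq U$. The paper's own proof, however, reads the generating condition as $V \osub U$ (that is, $\depth(V) < \depth(U)$); this is also the only reading compatible with the use of $\TDown$ in the proof of Proposition~\ref{thm:gen:limittoponoeth}, where one needs each $U_j$ with $j < i$ to lie in $\TDown(U_i)$ merely because $\depth(U_j) < \depth(U_i)$. Under the depth reading the lemma collapses to a tautology: the generators of $\TDown(U)$ are exactly the opens of depth at most $\gamma$, which is already the topology $\AExp^\gamma(\tau)$, whence $\TDown(U) = \AExp^\gamma(\tau)$ and $U \in \AExp^{\gamma+1}(\tau) = \AExp(\AExp^\gamma(\tau)) = \AExp(\TDown(U))$. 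No appeal to the expander property is needed.

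Under the subset reading you adopt, the argument has real gaps. The claim ``$U \subseteq H$'' is false: with $H = X \setminus U^\circ$ one gets $U \cap H = U \setminus U^\circ$, so the ``coincidence of induced topologies on $U$'' does not follow from the expander equation, which only compares restrictions to $H$. There is also no reason for an open $V \in \AExp(\restr{\tau'}{H})$ to lie in $\AExp^\alpha(\tau)$, since $\restr{\tau'}{H}$ is not in general contained in any iterate $\AExp^\beta(\tau)$; the step placing $V$ into $\TDown(U)$ is therefore unjustified. More fundamentally, the entire detour through $V$ and $U^\circ$ gains nothing: you already observed that $U \in \TDown(U)$ trivially, so the full content of the lemma under this reading is precisely the inflationarity $\TDown(U) \subseteq \AExp(\TDown(U))$ that you flag as the ``hard part''. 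Your sketch for closing it---another appeal to ``respects subsets''---has no leverage, because $\TDown(U)$ is not of the form $\restr{\theta}{H}$ for any $\theta$ in the iteration chain and need not itself satisfy the hypotheses of the expander axiom.
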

\begin{sendappendix}
	\ifsubmission \lemgentdownu* \fi
	\begin{proof}
		Let $U \in \AExp^{\gamma + 1}(\tau) = \AExp(\AExp^{\gamma}(\tau))$.
		By definition, an open set $V \in \AExp^{\alpha}(\tau)$ satifies
		$\depth(V) < \depth(U)$ if and only if
		it belongs to $\AExp^{\gamma}(\tau)$.
		As a consequence, $\AExp^{\gamma}(\tau) = \TDown(U)$,
		and $U \in \AExp(\TDown(U))$.
	\end{proof}
\end{sendappendix}

If $\AUseq$ is a \kl(noeth){minimal bad sequence}
in $(X, \AExp^{\alpha}(\tau))$,
then $U_i \not \subseteq  \bigcup_{j < i} U_j \defined V_i $,
i.e.,
$U_i \cap V_i^c \neq \emptyset$.
We can now use our \kl{subset restriction}
operator to devise a topology
associated to this \kl(noeth){minimal bad sequence}.
Noticing that $H_i \defined V_i^c$ is a closed set
in $\AExp^{\alpha}(\tau)$, we
can build the subset restriction
$\restr{\TDown(U_i)}{H_i}$.

\begin{definition}
	Let $\alpha$ be an ordinal,
	$\tau$ be a topology, $\AExp$ be a \kl{topology expander}
	such that $\tau \subseteq \AExp(\tau)$, and
	let
	$\AUseq = (U_i)_{i \in \mathbb{N}}$
	be a \kl(noeth){minimal bad sequence}
	in $\AExp^{\alpha}(\tau)$.
	Then, the \intro(noeth){minimal topology}
	$\intro*\UTopo(\AExp^{\alpha}(\tau))$ is generated by
	$\bigcup_{i \in \mathbb{N}} \restr{\TDown(U_i)}{H_i}$,
	where
	$H_i \defined
		(\bigcup_{j < i} U_j)^c$.
\end{definition}
\vspace{-2ex}%
\begin{restatable}{lemma}{minimaltopology}
	\label{lem:gen:mintopo}
	Let $\alpha$ be an ordinal,
	$\tau$ be a topology, $\AExp$ be a \kl{topology expander}
	such that $\tau \subseteq \AExp(\tau)$, and
	let
	$\AUseq = (U_i)_{i \in \mathbb{N}}$
	be a \kl(noeth){minimal bad sequence}
	in $\AExp^{\alpha}(\tau)$.
	Then, the \kl(noeth){minimal topology}
	$\UTopo(\AExp^{\alpha}(\tau))$
	is \kl{Noetherian}.
\end{restatable}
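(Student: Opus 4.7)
The plan is a contradiction via a standard ``bad-sequence lift'': assuming $\UTopo(\AExp^{\alpha}(\tau))$ is not \kl{Noetherian}, I will exhibit a \kl(noeth){bad sequence} in $\AExp^{\alpha}(\tau)$ of opens of \kl{depth} strictly below $\alpha$ that is lexicographically smaller than $(U_i)_{i\in\Nat}$ for $\osubeq$, contradicting its minimality.

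First, the failure of \kl{Noetherian}-ness of $\UTopo(\AExp^{\alpha}(\tau))$ yields, via Alexander's subbase lemma, an infinite \kl(noeth){bad sequence} $(W_k)_{k\in\Nat}$ of subbasic opens of the generating family. By the definitions of the \kl(noeth){minimal topology} and of the \kl{subset restriction}, each $W_k$ is of the form $V_k \cap H_{i_k}$, where $H_{i_k} = (\bigcup_{j<i_k} U_j)^c$ and $V_k$ is a generating open of $\TDown(U_{i_k})$; by \cref{lem:gen:tdown-u} these generators can be chosen so that $V_k \osub U_{i_k}$. Any sequence in $\Nat$ admits a non-decreasing subsequence, and passing to a subsequence preserves badness, so I may assume $k \mapsto i_k$ is non-decreasing and set $i^* \defined i_0$ for its minimum.

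Form the hybrid sequence $S$ in $\AExp^{\alpha}(\tau)$ by $S_j \defined U_j$ for $j < i^*$ and $S_{i^* + k} \defined V_k$ for $k \in \Nat$. Its first $i^*$ terms coincide with those of $(U_i)$, while $S_{i^*} = V_0 \osub U_{i^*}$, so if $S$ is bad, then $S$ is lexicographically strictly below $(U_i)$ for $\osubeq$, producing the desired contradiction.

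The main obstacle is verifying badness of $S$ at position $i^*+k$, namely that $V_k \not\subseteq \bigcup_{j<i^*} U_j \cup \bigcup_{\ell<k} V_\ell$. Badness of $(W_k)$ in $\UTopo$ provides a witness $x \in V_k \cap H_{i_k}$ avoiding every $W_\ell$ with $\ell < k$. Monotonicity of $(i_k)$ gives the nesting $H_{i_k} \subseteq H_{i_\ell}$ for all $\ell \leq k$, so $x \in H_{i_\ell}$; together with $x \notin W_\ell = V_\ell \cap H_{i_\ell}$ this forces $x \notin V_\ell$. The same nesting yields $H_{i_k} \subseteq H_{i^*}$, hence $x \notin \bigcup_{j<i^*} U_j$, so $x$ is the required witness. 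Badness of $S$ at the earlier positions $j < i^*$ is inherited directly from that of $(U_i)$. The essential subtlety in the argument is the bookkeeping with the nested complements $H_{i_k}$; everything else, and in particular the lexicographic comparison, is essentially formal.
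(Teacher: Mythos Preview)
Your proof is correct and follows essentially the same route as the paper: extract a bad sequence of subbasic opens from the \kl(noeth){minimal topology}, pass to a monotone reindexing, splice it onto an initial segment of $(U_i)$, and contradict minimality via the lexicographic drop at position $i^*$. The only cosmetic difference is that the paper argues the hybrid sequence must be \kl(noeth){good} (by minimality of $\AUseq$) and then derives a contradiction with the badness of either $\AUseq$ or $(W_k)$, whereas you verify badness of the hybrid directly and contradict minimality---these are contrapositive formulations of the same argument.
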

\begin{proof}
	Assume by contradiction that
	$\UTopo(\AExp^\alpha(\tau))$ is not \kl{Noetherian}.
	Let us define $V_i$ as $\bigcup_{j < i} U_j$,
	and $H_i$ as $V_i^c$.

	Thanks to \cite[Lemma 9.7.15]{goubault2013non}
	there exists a \kl(noeth){bad sequence}
	$\mathcal{W} \defined (W_i)_{i \in \mathbb{N}}$
	of subbasic elements of $\UTopo(\AExp^\alpha(\tau))$.
	By definition,
	$W_i$ is in some $\restr{\TDown(U_j)}{H_j}$.
	Let us select a mapping $\rho \colon \mathbb N \to \mathbb N$,
	such that
	$W_i \in \restr{\TDown(U_{\rho(i)})}{H_{\rho(i)}}$.
	In practice, this amounts to the existence
	of an open $T_{\rho(i)}$,
	such that
	$T_{\rho(i)} \osub U_{\rho(i)}$,
	$T_{\rho(i)} \subseteq U_{\rho(i)}$, and
	$W_i = T_{\rho(i)} \setminus V_{\rho(i)}$.
	Without loss of generality we assume
	that $\rho$ is monotonic.

	Let us build
	the sequence $\mathcal Y$
	defined by
	$Y_i \defined U_i$ if $i < \rho(0)$
	and $Y_i \defined T_{\rho(i)}$ otherwise.
	This is a sequence of open sets in
	$\AExp^\alpha(\tau)$ that is lexicographically smaller
	than $\AUseq$, hence $\mathcal Y$ is a \kl(noeth){good sequence}:
	there exists $i \in \mathbb{N}$ such that
	$Y_i \subseteq \bigcup_{j < i} Y_j$.
	\begin{itemize}
		\item If $i < \rho(0)$,
		      then
		      $U_i \subseteq \bigcup_{j < i} U_j$
		      contradicting that $\AUseq$ is $\kl(noeth){bad}$.
		\item If $i \geq \rho(0)$,
		      let us write
		      $Y_i = T_{\rho(i)} \subseteq \bigcup_{j < \rho(0)} U_j
			      \cup \bigcup_{j < i} T_{\rho(j)}$.
		      By taking  the intersection with $H_{\rho(i)}$,
		      we obtain
		      $W_i \subseteq \bigcup_{j < i} W_j$,
		      contradicting the fact that $\mathcal W$
		      is a \kl(noeth){bad sequence}.
		      \qedhere
	\end{itemize}
\end{proof}

We are now ready to leverage our knowledge of
\kl(noeth){minimal topologies} associated with \kl(noeth){minimal bad sequences}
to carry on the proof of our main theorem.

\limittopologies*
\begin{proof}
	If $\alpha$ is a sucessor ordinal,
	then $\alpha = \beta + 1$ and $\AExp^{\alpha} (\tau) = \AExp(\AExp^\beta(\tau))$.
	Because $\AExp$ respects \kl{Noetherian} topologies, we immediately conclude
	that $\AExp^{\alpha}(\tau)$ is \kl{Noetherian}. We are therefore
	only interested in the case where $\alpha$ is a limit ordinal.

	Assume by contradiction that $\AExp^\alpha(\tau)$ is not
	\kl{Noetherian},
	using \cref{lem:gen:minbadseq}
	there exists a \kl(noeth){minimal bad sequence}
	$\AUseq \defined (U_i)_{i \in \mathbb{N}}$.
	Let us write $d_i \defined \depth(U_i) < \alpha$.
	Thanks to \cref{lem:gen:depthminbad},
	$d_i$ is either $0$ or a successor ordinal.

	Because $\AExp^{\beta}(\tau)$ is \kl{Noetherian}
	for $\beta < \alpha$,
	there are finitely many opens $U_i$ at depth $\beta$
	for every ordinal $\beta < \alpha$.
	Indeed, if they were infinitely many, one would
	extract an infinite bad sequence of opens
	in $\AExp^{\beta}(\tau)$, which is absurd.

	Furthermore, the sequence $(d_i)_{i \in \Nat}$
	must be monotonic, otherwise $\AUseq$
	would not be lexicographically minimal.
	We can therefore construct a strictly increasing map
	$\rho \colon \mathbb{N} \to \mathbb N$
	such that
	$0 < \depth(U_{\rho(j)})$
	and
	$\depth(U_{i}) < \depth(U_{\rho(j)})$
	whenever $0 \leq i < \rho(j)$.

	Let us consider some $i = \rho(n)$ for some $n \in \mathbb{N}$.
	Let us write $V_{i}
		\defined \bigcup_{j < i} U_j$, and
	$H_{i} \defined X \setminus V_i$.
	The set $V_i$ is open in $\TDown(U_i)$
	by construction of $\rho$,
	hence $H_i$ is closed in
	$\TDown(U_i)$.
	As $\AExp$ is a \kl{topology expander},
	we derive the following inclusions:
	\begin{align*}
		\restr{\AExp(\TDown(U_i))}{H_i} & \subseteq
		\restr{\AExp(\restr{\TDown(U_i)}{H_i})}{H_i} \\
		                                & \subseteq
		\restr{\AExp(\UTopo(\AExp^\alpha(\tau)))}{H_i}
	\end{align*}

	Recall
	that $U_{i} \in \AExp(\TDown(U_{i}))$
	thanks to \cref{lem:gen:tdown-u}.
	As a consequence,
	$U_i \setminus V_i = W_i \setminus V_i$ for some open set $W_i$
	in
	$\AExp(\UTopo(\AExp^\alpha(\tau)))$.
	Thanks to \cref{lem:gen:mintopo},
	and preservation of \kl{Noetherian topologies}
	through \kl{topology expanders},
	the latter is a Noetherian topology. Therefore,
	$(W_{\rho(i)})_{i \in \mathbb{N}}$ is a \kl(noeth){good sequence}.
	This provides an $i \in \mathbb N$
	such that
	$W_{\rho(i)} \subseteq \bigcup_{\rho(j) < \rho(i)} W_{\rho(j)}$.
	In particular,
	\begin{align*}
		U_{\rho(i)} \setminus V_{\rho(i)}
		 & =
		W_{\rho(i)}
		\setminus V_{\rho(i)}
		\subseteq
		\bigcup_{\rho(j) < \rho(i)} W_{\rho(j)} \setminus V_{\rho(i)}
		\subseteq
		\bigcup_{\rho(j) < \rho(i)} W_{\rho(j)} \setminus V_{\rho(j)}
		\\
		 & \subseteq
		\bigcup_{\rho(j) < \rho(i)} U_{\rho(j)} \setminus V_{\rho(j)}
		\subseteq
		\bigcup_{j < \rho(i)} U_{j}
		= V_{\rho(i)}
	\end{align*}
	This proves that
	$U_{\rho(i)} \subseteq V_{\rho(i)}$,
	i.e. that $U_{\rho(i)} \subseteq \bigcup_{j < \rho(i)} U_j$.
	Finally, this contradicts the fact that $\AUseq$ is \kl(noeth){bad}.
	\qedhere
\end{proof}

We have effectively proven that being well-behaved with respect
to closed subspaces is enough to consider least fixed points
of \kl{refinement functions}.
This behaviour should become clearer
in the upcoming sections, where we illustrate
how this property can be ensured both in the case
of \kl{Noetherian spaces} and \kl{well-quasi-orderings}.

\section{Applications of Topology Expanders}
\label{sec:transfinite}
We now briefly explore topologies that can be proven to be \kl{Noetherian}
using \cref{cor:gen:noethpres}.
It should not be surprising that both the topological Higman lemma
and the topological Kruskal theorem fit in the framework
of \kl{topology expanders}, as both were already proven using
a minimal bad sequence argument.
However, we will proceed to extend the use of \kl{topology expander}
to spaces for which the original proof did not use a minimal bad sequence
argument, and illustrate how they can easily be used to define new Noetherian
topologies.

\subsection{Finite words and finite trees}
As a first example,
we can easily recover the
\emph{topological Higman lemma}~\cite[Theorem 9.7.33]{goubault2013non}
because the \kl{subword  topology} is
the least fixed point of $\RegSubExp$ (\cref{lem:gen:regsubexp-lfp}),
which is a
\kl{topology expander} (\cref{lem:gen:regsubexp-topo-expander}).

It does not require much effort to generalise this proof scheme
to the case of the
\emph{topological Kruskal theorem}~\cite[Theorem 9.7.46]{goubault2013non}.
As a shorthand notation, let us write
$t \in \diamond U \langle V \rangle$
whenever there exists a subtree $t'$ of $t$
whose root is labelled by an element of $U$ and
whose list of children belongs to $V$.

\begin{definition}[{\cite[Definition 9.7.39]{goubault2013non}}]
	Let $(\Sigma,\theta)$ be a topological space. The space $\trees{\Sigma}$
	of finite
	trees over $\Sigma$ can be endowed with the \intro{tree topology},
	the coarsest topology such that
	$\diamond U \langle V\rangle$ is open whenever
	$U$ is an open set of $\Sigma$, and $V$ is an open set of
	$\words{\trees{\Sigma}}$ in its \kl{subword  topology}.
\end{definition}
\begin{definition}
	Let $(\Sigma, \theta)$ be a topological space.
	Let $\intro*\TreeExp$ be the function
	that maps a topology $\tau$ to the topology generated by the sets
	$\upset{\KruLeq}{U \langle V \rangle}$, for $U$ open in $\theta$,
	$V$ open in $\words{\trees{\Sigma}}$ with the \kl{subword  topology}
	of $\tau$.
\end{definition}
\vspace{-2ex}
\begin{restatable}{lemma}{fintreescoarsest}
	\label{lem:ft:tree-coarsest}
	The \kl{tree topology} is the least fixed point of $\TreeExp$,
	which is a \kl{topology expander}.
\end{restatable}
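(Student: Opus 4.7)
The plan is to mirror the proof schema used for $\RegSubExp$ in \cref{lem:gen:regsubexp-lfp,lem:gen:reg-exp-refinement,lem:gen:regsubexp-topo-expander}, with the extra subtlety that the inner component $V$ of a subbasic open $\upset{\KruLeq}{U\langle V\rangle}$ lives in the space $\words{\trees{\Sigma}}$ endowed with the subword topology built from $\tau$.

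For the fixed-point characterisation, I would first check that the tree topology is stable under $\TreeExp$: if $U \in \theta$ and $V$ is open in $\words{\trees{\Sigma}}$ under the subword topology of the tree topology, then $\upset{\KruLeq}{U\langle V\rangle}$ is already subbasic-open in the tree topology, since it coincides with $\diamond U\langle V\rangle$. This yields $\lfp{\tau}{\TreeExp(\tau)} \subseteq \text{tree topology}$. For the reverse inclusion, I would argue by induction on the ordinal stage of the iteration of $\TreeExp$ from $\TrivTop$: every subbasic open $\diamond U\langle V\rangle$ of the tree topology arises by one application of $\TreeExp$ to a topology in which $V$ is already open, and $V$ itself is expressible from subbasic words built out of sets $\diamond U'\langle V'\rangle$ of strictly smaller nesting—so by induction $V$ belongs to some earlier stage and $\diamond U\langle V\rangle$ appears at the next stage.

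For $\TreeExp$ being a refinement function, monotonicity is immediate since enlarging $\tau$ only enlarges the pool of admissible $V$'s. For preservation of Noetherianness, I would invoke the topological Higman lemma to get that $\words{\trees{\Sigma}}$ under the subword topology of $\tau$ is Noetherian whenever $\tau$ is, then use Noetherianness of the product $(\Sigma,\theta) \times \words{\trees{\Sigma}}$ together with an Alexander subbase argument—exactly as in the proof of \cref{lem:gen:reg-exp-refinement}—to turn any cover by subbasic opens $\upset{\KruLeq}{U_i\langle V_i\rangle}$ into a cover of the associated pairs $U_i \times V_i$ in the product, from which a finite subcover is extracted.

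Finally, to show that $\TreeExp$ respects subsets, fix a Noetherian topology $\tau$ with $\tau \subseteq \TreeExp(\tau)$ and a $\tau$-closed set $H$. Since every subbasic open of $\TreeExp(\tau)$ is $\KruLeq$-upward closed, $H$ is $\KruLeq$-downward closed. For a subbasic open $\upset{\KruLeq}{U\langle V\rangle}$, intersecting with $H$ and using downward closure gives $(\upset{\KruLeq}{U\langle V\rangle}) \cap H = (\upset{\KruLeq}{U\langle V \cap H'\rangle}) \cap H$, where $H'$ is the closed subset of $\words{\trees{\Sigma}}$ consisting of lists of children that can occur in a tree of $H$. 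The main obstacle is to argue that $V \cap H'$ is open in the subword topology of $\restr{\tau}{H}$; this is precisely where we must invoke that $\RegSubExp$ itself respects subsets (\cref{lem:gen:regsubexp-topo-expander}), so that restricting to $H'$ commutes with forming the subword topology. Once this commutation is established, the inclusion $\restr{\TreeExp(\tau)}{H} \subseteq \restr{\TreeExp(\restr{\tau}{H})}{H}$ follows, completing the proof that $\TreeExp$ is a topology expander.
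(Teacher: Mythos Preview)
Your overall decomposition---fixed-point characterisation, monotonicity, preservation of Noetherianness via the product and the topological Higman lemma, then respects-subsets---is exactly the scheme the paper follows. The first three parts go through as you describe.

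The gap is in the respects-subsets step. Your intermediate identity $(\upset{\KruLeq}{U\langle V\rangle}) \cap H = (\upset{\KruLeq}{U\langle V \cap H'\rangle}) \cap H$ with $H' = H^*$ is correct, but the claim that $V \cap H'$ is \emph{open} in the subword topology built from $\restr{\tau}{H}$ is false in general. Indeed $H^*$ is not closed in that topology: its complement is $\openword{H^c}$, and $H^c$ is not open in $\restr{\tau}{H}$ (the only opens there, besides the whole space, are contained in $H$). So for instance $V \cap H^*$ will typically fail to be $\HigLeq$-upward closed. Moreover, invoking ``$\RegSubExp$ respects subsets'' does not give what you need: that lemma is about one application of $\RegSubExp$ with a \emph{fixed} alphabet topology, whereas here you are trying to relate the subword topologies built from two \emph{different} alphabet topologies $\tau$ and $\restr{\tau}{H}$ on $\trees{\Sigma}$.

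The paper avoids this detour entirely. Since the sets $\openword{V_1,\dots,V_n}$ form a basis of the subword topology and $\upset{\KruLeq}{U\langle\,\cdot\,\rangle}$ commutes with unions, it suffices to treat subbasic $V = \openword{V_1,\dots,V_n}$. Using only that $H$ is downward closed for $\KruLeq$ (so every child of a node in $H$ again lies in $H$), one checks directly that
\[
(\upset{\KruLeq}{U\langle \openword{V_1,\dots,V_n}\rangle}) \cap H
\;=\;
(\upset{\KruLeq}{U\langle \openword{V_1\cap H,\dots,V_n\cap H}\rangle}) \cap H,
\]
and $\openword{V_1\cap H,\dots,V_n\cap H}$ is manifestly open in the subword topology of $\restr{\tau}{H}$. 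This single identity replaces your appeal to \cref{lem:gen:regsubexp-topo-expander} and closes the argument.
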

\begin{sendappendix}
	\ifsubmission
		\fintreescoarsest*
	\fi
	\begin{proof}
		The proof is follows the same pattern as for the \kl{subword  topology}.
		The only technical part is to notice that a
		downwards closed set $H$
		for $\KruLeq$
		satisfies $(\uparrow_{\KruLeq} U \langle V \rangle) \cap H
			= (\uparrow_{\KruLeq} U \langle \openword{V_1 \cap H, \dots, V_n
				\cap H}
			\rangle) \cap
			H$, whenever $V = \openword{V_1, \dots, V_n}$.
	\end{proof}
\end{sendappendix}
\begin{corollary}
	The \kl{tree topology} is \kl{Noetherian}.
\end{corollary}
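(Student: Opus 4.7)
The plan is to invoke our main result, \cref{cor:gen:noethpres}, essentially as a black box. The preceding \cref{lem:ft:tree-coarsest} has already done the two pieces of work that matter: it identifies the \kl{tree topology} as $\lfp{\tau}{\TreeExp(\tau)}$, and it verifies that $\TreeExp$ is a \kl{topology expander} over $\trees{\Sigma}$ (under the standing assumption that $(\Sigma,\theta)$ is \kl{Noetherian}). Given those two facts, \cref{cor:gen:noethpres} applies verbatim and delivers the conclusion in one line.

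Concretely, I would first recall the hypothesis that $(\Sigma,\theta)$ is \kl{Noetherian}, since this is needed to view $\TreeExp$ as a \kl{topology expander} (the subbasic opens $\upset{\KruLeq}{U\langle V\rangle}$ require $U\in\theta$ to cover well, exactly as in the analogous step for $\RegSubExp$ in \cref{lem:gen:regsubexp-topo-expander}). Second, I would cite \cref{lem:ft:tree-coarsest} for the identity $\PrefTopo$-style identification --- namely that the \kl{tree topology} is precisely $\lfp{\tau}{\TreeExp(\tau)}$, and that $\TreeExp$ is a \kl{topology expander}. Third, I would apply \cref{cor:gen:noethpres} to $\TreeExp$ and conclude that its least fixed point --- which is the \kl{tree topology} --- is a \kl{Noetherian topology} over $\trees{\Sigma}$.

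There is no real obstacle here: all of the minimal-bad-sequence machinery has been absorbed into \cref{cor:gen:noethpres}, and the only content-bearing verification (that the analogue of subset-respect holds for trees) is handled inside \cref{lem:ft:tree-coarsest} via the observation that, on a $\KruLeq$-downwards-closed set $H$, intersecting with $H$ commutes with the subbasic constructor $U\langle\openword{V_1,\dots,V_n}\rangle$ termwise. Thus the proof is genuinely a two-line composition of the fixed-point identification with the main theorem, and serves principally as a demonstration that the topological Kruskal theorem fits inside the \kl{topology expander} framework on the same footing as the topological Higman lemma.
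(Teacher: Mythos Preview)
Your proposal is correct and matches the paper's approach exactly: the corollary is an immediate consequence of \cref{lem:ft:tree-coarsest} (which identifies the \kl{tree topology} as the least fixed point of the \kl{topology expander} $\TreeExp$) together with \cref{cor:gen:noethpres}. The paper treats this as a one-line corollary with no separate proof, and your write-up simply spells out that derivation.
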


\subsection{Ordinal words}
Let us now demonstrate how \cref{cor:gen:noethpres} can be applied
over spaces for which the original proof of Noetheriannes
did not use a minimal bad sequence argument.
For that, let us consider $\Sigma^{< \alpha}$ the set of words of ordinal length
less than $\alpha$, where $\alpha$ is a fixed ordinal. Since $\HigLeq$
is in general
not a \kl{wqo} on $\Sigma^{<\alpha}$ when $\leq$ is \kl{wqo} on $\Sigma$,
this also provides an example of a topological minimal bad sequence argument
that has no counterpart in the realm of \kl{wqos}.

%In the case of $\alpha = \omega+1$, $\Sigma^{< \alpha} = \Sigma^{\leq \omega}$
%is the disjoint union of $\Sigma^*$ and $\Sigma^{\omega}$.
%Let us recover a \kl{Noetherian topology}
%over $\Sigma^{\leq \omega}$ via a simple generalisation of
%\cref{lem:ft:regsubword-coarsest}.

%\begin{sendappendix}
%\begin{definition}[{subword  Topology
%\cite{goubaultlarrecq2021infinitary}}]
%Given a topological space $(\Sigma,\tau)$, the space $\Sigma^{\leq
%\omega}$
%of finite or infinite words over $\Sigma$
%can be endowed with the \kl{subword  topology},
%generated by the sets
%$[U_1, \dots, U_n \mid U_\infty]$
%and $[U_1, \dots, U_n]$,
%where $U_i$ is open in $\Sigma$ and
%$[U_1, \dots, U_n \mid U_\infty]$
%is a shorthand notation for
%$\Sigma^* U_1 \Sigma^* \dots \Sigma^* U_n (\Sigma^* U_\infty
%\Sigma^*)^\omega$.
%\end{definition}
%\end{sendappendix}

%\begin{restatable}{lemma}{regsubexpinfinite}
%Given a \kl{Noetherian space} $(\Sigma,\theta)$, the \kl{
%subword topology}
%\ifsubmission \cite[Definition 9.7.26]{goubault2013non} \fi
%is the least fixed point of the topology
%\kl{topology expander} $\intro*\RegSubExpOm$ that maps $\tau$
%to the topology generated by the following sets:
%\begin{itemize}
%\item $\uparrow_{\HigLeq} UV$ for $U,V$ opens in $\tau$;
%\item $\uparrow_{\HigLeq} [W]_\omega$, for $W$ open in
%$\theta$, that are the words
%with infinitely many letters in $W$.
%\item $\uparrow_{\HigLeq} W$, for $W$ open in $\theta$.
%\end{itemize}
%\end{restatable}
%\begin{sendappendix}
%\ifsubmission
%\regsubexpinfinite*
%\fi
%\begin{proof}
%\end{proof}
%\end{sendappendix}

\begin{definition}[{\cite{goubaultlarrecq2022infinitary}}]
	\label{def:ft:ordinal-words}
	Let $(\Sigma, \theta)$ be a topological space.
	The \intro{ordinal subword topology}
	over $\Sigma^{<\alpha}$ is the
	topology generated by the closed sets
	$F_1^{< \beta_1} \cdots F_n^{<\beta_n}$,
	for $n \in \mathbb{N}$,
	for $F_i$ closed in $\theta$,
	and where $F^{<\beta}$
	is the set of words of length less than $\beta$
	with all of their letters in $F$.
\end{definition}

The \kl{ordinal subword topology}
is \kl{Noetherian} \cite{goubaultlarrecq2022infinitary}, but the proof
is quite technical and relies on the in-depth study
of the possible inclusions between the subbasic closed sets.
Before defining a suitable \kl{topology expander},
given an ordinal $\beta$ and a set $U \subseteq \Sigma^{<\alpha}$, let us write
$w \in \beta \triangleright U$ if and only if
$w_{> \gamma} \in U$ for all $0 \leq \gamma < \beta$.

\begin{definition}
	Let $(\Sigma, \theta)$ be a topological space, and $\alpha$
	be an ordinal.
	The function
	$\intro*\RegSubExpOrd$ maps a topology $\tau$
	to the topology generated by the following sets:
	\begin{inparaitem}[]
		\item $\upset{\HigLeq}{UV}$ for $U,V$ opens in $\tau$;
		\item $\upset{\HigLeq}{\beta \triangleright U}$, for $U$ open in
		$\tau$, $\beta \leq \alpha$; 		\item $\upset{\HigLeq}{W}$, for $W$ open in $\theta$.
	\end{inparaitem}
\end{definition}
\vspace{-1ex}
\begin{restatable}{lemma}{regsubexpordinal}
	\label{lem:ft:regordsubword}
	Given a \kl{Noetherian space} $(\Sigma,\theta)$, and an ordinal $\alpha$.
	The map $\RegSubExpOrd$ is a \kl{topology expander},
	whose least fixed point contains the \kl{ordinal subword topology}.
\end{restatable}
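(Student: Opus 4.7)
The plan is to mirror the structure used to show that $\RegSubExp$ is a \kl{topology expander} (\cref{lem:gen:reg-exp-refinement} and \cref{lem:gen:regsubexp-topo-expander}). Monotonicity of $\RegSubExpOrd$ in $\tau$ is immediate from the definition, leaving three tasks: (i) $\RegSubExpOrd(\tau)$ is \kl{Noetherian} whenever $\tau$ is; (ii) $\RegSubExpOrd$ respects subsets; and (iii) the least fixed point contains the \kl{ordinal subword topology}.

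For (i), I would apply \cref{lem:gen:join-noeth} to decompose $\RegSubExpOrd(\tau)$ into the join of three sub-topologies, one per family of subbasic generators, and prove each is \kl{Noetherian} separately. The family $\upset{\HigLeq}{UV}$ is handled by the same product argument as in \cref{lem:gen:reg-exp-refinement}, and the family $\upset{\HigLeq}{W}$ by the Noetherianness of $\theta$. The main obstacle is the family $\upset{\HigLeq}{\beta \triangleright U}$: here the key observations are that $\beta \triangleright U$ is antitone in $\beta$ (for the usual order on ordinals) and monotone in $U$. Given any sequence $(\beta_i, U_i)_{i \in \Nat}$, the well-ordering of ordinals yields an infinite subsequence on which $(\beta_i)$ is weakly increasing, and the Noetherianness of $\tau$ along that subsequence forces the sequence of unions $\bigcup_{j \leq k} \upset{\HigLeq}{\beta_j \triangleright U_j}$ to stabilise. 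The delicate point is that the reverse order on ordinals is not a \kl{wqo}, so a naive Noetherian-product argument does not apply, and a more careful combinatorial step combining well-foundedness of the ordinals with Noetherianness of $\tau$ is required.

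For (ii), the argument parallels \cref{lem:gen:regsubexp-topo-expander}. Since $\tau \subseteq \RegSubExpOrd(\tau)$ and every generator of $\RegSubExpOrd(\tau)$ is $\HigLeq$-upwards closed, every $\tau$-closed set $H$ is $\HigLeq$-downwards closed. On such $H$, the identity $(\beta \triangleright U) \cap H = (\beta \triangleright (U \cap H)) \cap H$ follows from the fact that $w \in H$ and $w_{>\gamma} \HigLeq w$ force $w_{>\gamma} \in H$ by downward closure. Combined with the analogous identities for the other two families of generators, this yields the required equality $\restr{\RegSubExpOrd(\tau)}{H} = \restr{\RegSubExpOrd(\restr{\tau}{H})}{H}$.

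For (iii), I would show by induction on $n$ that each subbasic open of the \kl{ordinal subword topology}, namely the complement of a product $F_1^{<\beta_1} \cdots F_n^{<\beta_n}$ with $F_i$ closed in $\theta$, lies in $\lfp{\TP}{\RegSubExpOrd(\TP)}$. For $n = 1$, the complement of $F^{<\beta}$ decomposes as $\upset{\HigLeq}{\Sigma \setminus F} \cup \upset{\HigLeq}{\beta \triangleright \Sigma^{<\alpha}}$, both of which are subbasic generators already present in $\RegSubExpOrd(\TrivTop)$. The inductive step combines such complements via the concatenation generator $\upset{\HigLeq}{UV}$, which is available after one additional iteration of $\RegSubExpOrd$.
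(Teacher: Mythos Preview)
Your plan for (i) and (ii) is correct and matches the paper. For (i) the paper likewise extracts a subsequence with $(\beta_i)$ weakly increasing, finds $i$ with $U_i \subseteq \bigcup_{j<i} U_j$ by Noetherianness of $\tau$, and then argues that $\beta_i \triangleright U_i \subseteq \bigcup_{j<i} \beta_j \triangleright U_j$: if some $w$ avoided each $\beta_j \triangleright U_j$ via witnesses $\gamma_j < \beta_j \leq \beta_i$, then $\gamma \defined \max_{j<i} \gamma_j < \beta_i$ would yield $w_{>\gamma} \in U_i \setminus \bigcup_{j<i} U_j$. This is exactly the ``more careful combinatorial step'' you anticipate; your instinct that antitonicity and monotonicity alone do not suffice is right, since $\beta \triangleright (U_1 \cup U_2) \not\subseteq (\beta \triangleright U_1) \cup (\beta \triangleright U_2)$ in general.

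Part (iii), however, has a real gap. The inductive step cannot be done ``via the concatenation generator $\upset{\HigLeq}{UV}$'': the complement of a product $F^{<\beta} P'$ is not expressible from the complements of the factors by concatenation. Writing $U \defined (P')^c$, the paper shows $(F^{<\beta} P')^c = A \cup B$ where $B = \upset{\HigLeq}{(\beta \triangleright U)}$ does feed the inductive hypothesis directly into a generator, but $A = \upset{\HigLeq}{\{\,av : a \notin F,\ av \in U\,\}}$ does not. Establishing that $A$ is open requires a \emph{second}, nested induction on the subbasic structure of $U$ in the least fixed point, via an auxiliary operator $F \rtimes U \defined \upset{\HigLeq}{\{\,av : a \notin F,\ av \in U\,\}}$ whose interaction with each of the three generator families ($\upset{\HigLeq}{W}$, $\upset{\HigLeq}{U'V'}$, $\upset{\HigLeq}{\beta' \triangleright U'}$) is computed explicitly. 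This nested induction is the missing ingredient in your sketch. As a minor point, your base case is also off: with the paper's convention every suffix $w_{>\gamma}$ lies in $\Sigma^{<\alpha}$, so $\beta \triangleright \Sigma^{<\alpha} = \Sigma^{<\alpha}$ and your proposed decomposition of $(F^{<\beta})^c$ collapses to the whole space.
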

\begin{sendappendix}
	\ifsubmission \regsubexpordinal* \fi
	\begin{proof}
		It is obvious that $\RegSubExpOrd$ is monotone. Moreover,
		the closed sets $H$ in $\RegSubExpOrd(\tau)$ are downwards closed
		with respect to $\HigLeq$. As a consequence,
		$(\upset{\HigLeq}{UV}) \cap H
			=
			(\upset{\HigLeq}{(U \cap H) (V\cap H)}) \cap H$,
		$(\upset{\HigLeq}{W}) \cap H
			=
			(\upset{\HigLeq}{(W \cap H)}) \cap H$,
		and
		$(\upset{\HigLeq}{ \beta \triangleright U}) \cap H
			=
			(\upset{\HigLeq}{\beta \triangleright (U \cap H)}) \cap H$.
		Hence, $\RegSubExpOrd$ respects subsets.
		To conclude that $\RegSubExpOrd$ is a \kl{topology expander},
		it remains to prove that it preserves \kl{Noetherian topologies}.

		\begin{claim}
			Let $\tau$ be a Noetherian topology. Then $\RegSubExpOrd(\tau)$
			is Noetherian.
		\end{claim}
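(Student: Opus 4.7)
The plan is to prove Noetherianness of $\RegSubExpOrd(\tau)$ by decomposing it via \cref{lem:gen:join-noeth} into the join of three simpler topologies, each generated by one of the three families of subbasic opens in the definition of $\RegSubExpOrd$. The two familiar cases---the topology generated by $\upset{\HigLeq}{UV}$ for $U, V$ open in $\tau$, and the topology generated by $\upset{\HigLeq}{W}$ for $W$ open in $\theta$---are handled by replaying verbatim the argument of \cref{lem:gen:reg-exp-refinement}: a sequence of subbasic opens is pulled back to a sequence in $\tau \times \tau$ (respectively $\theta$), Noetherianness of products (\cref{claim:join-prod-noeth}) yields asymptotic constancy there, and this transfers back through the monotone $\upset{\HigLeq}{-}$ operator.

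The novel case is the topology $\tau_2$ generated by $\upset{\HigLeq}{\beta \triangleright U}$ for $\beta \leq \alpha$ and $U$ open in $\tau$. Given a sequence $(\upset{\HigLeq}{\beta_i \triangleright U_i})_{i \in \Nat}$ of subbasic opens, I would first extract an infinite non-decreasing subsequence of ordinals $\beta_{\phi(0)} \leq \beta_{\phi(1)} \leq \cdots$, which exists because ordinals admit no infinite strictly decreasing chain. Next, I would apply Noetherianness of $\tau$ to $(U_{\phi(i)})_{i \in \Nat}$ to obtain an index $j$ with $U_{\phi(j)} \subseteq \bigcup_{i < j} U_{\phi(i)}$. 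Since goodness of a subsequence entails goodness of the original sequence, the remaining task is to lift this set-theoretic containment of opens in $\tau$ to a containment of the corresponding subbasic opens in $\tau_2$.

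The main obstacle is precisely this lifting step. Because $\beta \triangleright U$ is defined by a universal quantifier over suffix positions $\gamma < \beta$, the operator $\beta \triangleright (-)$ does not distribute over finite unions, so one cannot directly conclude $\beta_{\phi(j)} \triangleright U_{\phi(j)} \subseteq \bigcup_{i<j} \beta_{\phi(i)} \triangleright U_{\phi(i)}$ from $U_{\phi(j)} \subseteq \bigcup_{i<j} U_{\phi(i)}$: a single word might require different $U_{\phi(i)}$'s to certify different positions $\gamma$. The fix is a pigeonhole argument on the finitely many indices $i < j$ partitioning the transfinite collection of positions $\gamma < \beta_{\phi(j)}$, isolating a single $i^* < j$ whose open $U_{\phi(i^*)}$ is used cofinally in $\beta_{\phi(j)}$ (and therefore also in $\beta_{\phi(i^*)} \leq \beta_{\phi(j)}$); the flexibility of the $\HigLeq$-upward closure then allows one to extract from the witnessing subword a new subword placing the candidate word into $\upset{\HigLeq}{\beta_{\phi(i^*)} \triangleright U_{\phi(i^*)}}$. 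This combinatorial reshuffling, which has no counterpart in the order-theoretic setting since $\HigLeq$ is not a \kl{wqo} on $\Sigma^{<\alpha}$, is exactly where the topological argument strictly surpasses its \kl{well-quasi-order} analogue.
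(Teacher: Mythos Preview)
Your decomposition via \cref{lem:gen:join-noeth} and your treatment of the first two families of subbasic opens match the paper's proof exactly. The divergence is in the third family, and there your cofinality argument contains a genuine gap.

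The implication ``$S$ cofinal in $\beta_{\phi(j)}$, therefore $S$ cofinal in $\beta_{\phi(i^*)} \leq \beta_{\phi(j)}$'' is simply false: with $\beta_{\phi(j)} = \omega\cdot 2$, $\beta_{\phi(i^*)} = \omega$ and $S = \{\omega + n : n \in \Nat\}$, the set $S$ is cofinal in $\omega\cdot 2$ but does not meet $[0,\omega)$ at all. Even if one granted cofinality in $\beta_{\phi(i^*)}$, knowing that $w_{>\gamma} \in U_{\phi(i^*)}$ for a \emph{cofinal} family of positions $\gamma$ does not let you build a subword $w' \HigLeq w$ with $w'_{>\gamma} \in U_{\phi(i^*)}$ for \emph{every} $\gamma < \beta_{\phi(i^*)}$: passing to a subword rearranges all the suffixes, and $U_{\phi(i^*)}$ is an arbitrary open of $\tau$ with no assumed closure under $\HigLeq$. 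So the ``combinatorial reshuffling'' you allude to is not actually available.

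The paper's argument for this case avoids the outer upward closure entirely and is much shorter. After extracting so that the $\beta$'s are non-decreasing and finding $i$ with $U_i \subseteq \bigcup_{j<i} U_j$, one shows directly that $\beta_i \triangleright U_i \subseteq \bigcup_{j<i} \beta_j \triangleright U_j$: if $w$ lies in the left-hand side but in none of the right-hand summands, pick witnesses $\gamma_j < \beta_j \leq \beta_i$ with $w_{>\gamma_j}\notin U_j$, set $\gamma \defined \max_{j<i}\gamma_j < \beta_i$, and observe that $w_{>\gamma}\in U_i$ while $w_{>\gamma}\notin U_j$ for every $j<i$ (since $w_{>\gamma}$ is a suffix, hence a $\HigLeq$-subword, of $w_{>\gamma_j}$, and the $U_j$ are upward closed for $\HigLeq$). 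This contradicts $U_i \subseteq \bigcup_{j<i} U_j$. The upward-closedness of the $U_j$'s is used explicitly by the paper; it holds automatically for every $\tau$ encountered along the iteration from $\TrivTop$, which is all that the main theorem requires.
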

		\begin{proof}
			As a consequence of
			\cref{lem:gen:reg-exp-refinement},
			the topology generated by the sets
			$\upset{\HigLeq}{UV}$,
			and $\upset{\HigLeq}{W}$ is Noetherian.
			Therefore, it suffices to check that the topology
			generated by the sets $\upset{\HigLeq}{\beta \triangleright U}$
			is Noetherian to conclude that $\RegSubExpOrd(\tau)$ is
			too.

			For that, consider a bad sequence $\beta_i \triangleright U_i$
			of open sets, indexed by $\mathbb N$.
			Because for all $i$, $\beta_i < \alpha + 1$,
			we can extract our sequence so that $\beta_i \leq \beta_{j}$
			when $i \leq j$. The extracted sequence is still bad.
			Because $\tau$ is Noetherian, there exists $i \in \mathbb N$
			such that $U_i \subseteq \bigcup_{j < i} U_j$.
			Let us now conclude that
			$\beta_i \triangleright U_i
				\subseteq \bigcup_{j < i} \beta_j \triangleright U_j$, which
			is in contradiction with the fact that the sequence is bad.

			Let $w \in \beta_i \triangleright U_i$, and assume by contradiction
			that for all $j < i$, there exists a $\gamma_j < \beta_j \leq
				\beta_i$
			such that $w_{> \gamma_j} \not \in U_j$.
			Let $\gamma \defined \max_{j < i} \gamma_j < \beta_i$.
			The word $w_{> \gamma}$ does not belong to $U_j$ for $j < i$,
			because $U_j$ is upwards closed for $\HigLeq$.
			As a consequence, $w_{> \gamma} \not \in \bigcup_{j < i}
				U_j$.
			However, $w_{> \gamma} \in U_i$, which is absurd.
			\claimqedhere
		\end{proof}

		We now have to check that every open set in the
		\kl{ordinal subword topology} is open in the least fixed point
		of $\RegSubExpOrd$. We prove by induction over $n$
		that a product $F_1^{<\beta_1} \dots F_n^{< \beta_n}$
		has a complement that is open.

		\begin{description}
			\item[Empty product] this is the whole space.
			\item[$P \defined F^{<\beta} P'$]
			      By induction hypothesis, ${P'}^c$ is an open $U$ in the least
			      fixed point topology.
			      Let us prove that $P^c = A \cup B$,
			      where $A \defined {\upset{\HigLeq}{\setof{ av }{
								      u \not \in F
								      \wedge
								      av \in U
							      }
					      }}$, and
			      $B \defined {\upset{\HigLeq}{(\beta \triangleright
						      U)}}$.

			      \begin{claim}
				      $P^c \subseteq A \cup B$.
			      \end{claim}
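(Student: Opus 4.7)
The plan is to unpack the hypothesis $w \in P^c$ as the statement that for every decomposition $w = u v$ with $u \in F^{<\beta}$, one has $v \in U$. From here I would split into two cases depending on whether $w$ contains a letter outside $F$ at some position strictly less than $\beta$.

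In the first case, suppose such a position exists and let $\gamma$ be the least one. Then $w_{<\gamma}$ consists entirely of letters of $F$ and has ordinal length $\gamma < \beta$, so $w_{<\gamma} \in F^{<\beta}$. The decomposition $w = w_{<\gamma} \cdot w_{\geq \gamma}$ then forces $w_{\geq \gamma} \in U$. Writing $w_{\geq \gamma} = a v'$ with $a = w_\gamma \notin F$, this is precisely an element of the set $\{ a v : a \notin F,\ a v \in U \}$ whose upward closure defines $A$, and since $w_{\geq \gamma}$ is a suffix of $w$ it embeds into $w$ under $\HigLeq$, placing $w$ in $A$.

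In the second case, every letter of $w$ at a position less than $\beta$ lies in $F$. I would then show $w \in \beta \triangleright U \subseteq B$, that is, $w_{>\gamma} \in U$ for every $\gamma < \beta$. For each such $\gamma$, the idea is to exhibit a prefix $u$ of $w$ lying in $F^{<\beta}$ whose complementary suffix is exactly $w_{>\gamma}$; the hypothesis $w \in P^c$ then supplies $w_{>\gamma} \in U$. When $\beta$ is a limit ordinal, one simply takes $u = w_{\leq \gamma}$, of length $\gamma + 1 < \beta$, and the argument goes through uniformly.

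The main obstacle will be the boundary case of a successor ordinal $\beta = \delta + 1$ at $\gamma = \delta$, where $w_{\leq \delta}$ has length $\beta$ rather than strictly below $\beta$. I would first rule out the degenerate possibility that $|w| < \beta$: in that sub-case, $w \in F^{<\beta}$, and the trivial decomposition $w = w \cdot \varepsilon$ would force $\varepsilon \in U$, contradicting that $\varepsilon$ lies in any product $F_2^{<\beta_2} \cdots F_n^{<\beta_n}$ of factors with positive length bounds. Resolving the remaining situation, in which $w_{>\delta}$ is a genuine non-trivial tail, is where the proof requires the most care; I expect to unfold it via a direct inspection of the available $F^{<\beta}$-decompositions of $w$ together with the fact that $U$ is $\HigLeq$-upward closed, using the closest legitimate prefix $w_{<\delta}$ and exploiting the known membership $w_\delta \in F$.
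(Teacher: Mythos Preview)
Your approach is the paper's: the same two-case split on whether $w$ has a letter outside $F$ at some position below $\beta$, followed by the same decomposition argument in each case. In the first case you in fact correct a small index slip in the paper (which writes $w_{\le\gamma}\in F^{<\beta}$ and $w_{>\gamma}\in U$; as you observe, it should be $w_{<\gamma}\in F^{<\beta}$ and $w_{\ge\gamma}\in U$).

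Your extended worry about successor $\beta$ is unnecessary here. The surrounding argument is carried out under the assumption that $\beta$ is a limit ordinal: this is made explicit in the neighbouring claim $B\subseteq P^c$, whose proof literally says ``because $\beta$ is a limit ordinal''. With that in force, your second-case argument taking $u=w_{\le\gamma}$ of length $\gamma+1<\beta$ is exactly what the paper does (tersely). You may drop the successor-$\beta$ discussion entirely; the sub-case $|w|<\beta$ you single out is likewise harmless, since then $w\in F^{<\beta}$ and $w=w\cdot\varepsilon$ with $\varepsilon\in P'$ already places $w$ in $P$, so it cannot occur for $w\in P^c$.
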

			      \begin{claimproof}
				      Let $w \not \in P$ and distinguish two cases.
				      \begin{itemize}
					      \item Either there exists a smallest $\gamma < \beta$
					            such that $w_\gamma \not \in F$.
					            In which case $w = w_{<\gamma} w_{\gamma} w_{>
								            \gamma}$. Since $\gamma < \beta$,
					            $w_{\leq \gamma} \in F^{<\beta}$,
					            hence $w_{> \gamma} \in U$
					            because $w \not \in P$.
					            As a consequence, $w \in A$.
					      \item Or $w_{\gamma} \in F$ for every $\gamma < \beta$.
					            However, this proves that $w_{> \gamma} \in U$
					            for every $\gamma < \beta$, which
					            means that $w \in B$. \claimqedhere
				      \end{itemize}
			      \end{claimproof}

			      \begin{claim}
				      $A \subseteq P^c$.
			      \end{claim}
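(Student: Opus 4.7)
The plan is to unpack the membership $w \in A$ and assume for contradiction that $w \in P$, then derive a contradiction by tracking where the first letter of the witness embeds. Concretely, if $w \in A$ then there exists $av$ with $a \notin F$, $av \in U$ (where $U = {P'}^c$), and $av \HigLeq w$. Assume $w \in P = F^{<\beta} P'$, so we can write $w = w_1 w_2$ with $w_1 \in F^{<\beta}$ (every letter in $F$) and $w_2 \in P'$. Fix a witness $h$ of the embedding $av \HigLeq w$ and split into two cases depending on whether $h(0)$, the position matched by $a$, falls inside $w_1$ or $w_2$.

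In the first case, $h(0) < |w_1|$, so $a \leq w_1[h(0)]$ for the preorder on $\Sigma$. Since $F$ is closed in $\theta$ it is downwards-closed for the specialisation preorder, and from $w_1[h(0)] \in F$ one concludes $a \in F$, contradicting the choice of $a$. In the second case, $h(0) \geq |w_1|$; by monotonicity of $h$ every subsequent image also lies in $w_2$, so restricting $h$ yields an embedding $av \HigLeq w_2$. Every subbasic open of the least fixed point of $\RegSubExpOrd$ is of the form $\upset{\HigLeq}{S}$, so any open in this topology is upwards-closed for $\HigLeq$; in particular $U$ is, so $w_2 \in U$, contradicting $w_2 \in P' = U^c$.

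The only subtle point is the first case, where one must use that the order implicit in the definition of $\HigLeq$ over the topological space $(\Sigma, \theta)$ is precisely the specialisation preorder of $\theta$, so that closedness of $F$ yields downwards-closure of $F$ with respect to the letter comparison appearing in the word embedding. Once this is in place the argument is a routine case split on the position assigned to the head letter $a$, and requires no further appeal to induction or to properties of $P'$ beyond the already-established fact that $U = {P'}^c$ is open, hence upwards-closed for $\HigLeq$, in the ambient fixed-point topology.
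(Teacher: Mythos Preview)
Your proof is correct, but it takes a longer route than the paper's. The paper first observes that $P$ is downwards closed for $\HigLeq$ (equivalently, $P^c$ is upwards closed), so to prove $A = \upset{\HigLeq}{\{av : a \notin F,\ av \in U\}} \subseteq P^c$ it suffices to check that each generator $av$ lies in $P^c$. For such a generator, if $av = u_1 u_2$ with $u_1 \in F^{<\beta}$ and $u_2 \in P'$, then the \emph{first letter} of $u_1$ is literally $a$; since $a \notin F$, this forces $u_1$ to be empty, whence $av = u_2 \in P'$, contradicting $av \in U = (P')^c$. No case split and no appeal to the specialisation preorder are needed.

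Your argument instead keeps the arbitrary $w \in A$ and splits on where $h(0)$ lands. This forces you, in Case~1, to compare $a$ to $w_{h(0)}$ via the specialisation preorder and to use that $F$, being closed, is downwards closed for it; in Case~2 you need $U$ upwards closed for $\HigLeq$. Both facts are true and your reasoning is sound, so the proof goes through. The trade-off is that you carry more machinery (the specialisation preorder, the case split on the embedding position) where the paper gets by with a one-line reduction to the generating set; conversely, your argument is self-contained in that it does not explicitly invoke the downward-closedness of $P$ up front, though it ends up using the equivalent statement that $U$ is upwards closed.
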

			      \begin{claimproof}
				      Because $P$ is downwards closed for $\HigLeq$, it suffices to
				      check that every word $av$ with $a \not \in F$ and $av \in U$
				      lies in $P^c$.

				      Assume by contradiction that $av \in P$, then $av = u_1 u_2$
				      with $u_1 \in F^{<\beta}$ and $u_2 \in P'$.
				      Because $a \not \in F$, this proves that $u_1$ is the empty
				      word,
				      and that $u_2 = w \in P'$. This is absurd because
				      $w \in U = (P')^c$.
				      \claimqedhere
			      \end{claimproof}
			      \begin{claim}
				      $B \subseteq P^c$.
			      \end{claim}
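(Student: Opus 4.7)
The plan is to adapt the same reduction strategy used in the previous claim $A \subseteq P^c$: exploit the fact that $P^c$ is upwards closed for $\HigLeq$ to reduce from $B = \upset{\HigLeq}{(\beta \triangleright U)}$ to its generator $\beta \triangleright U$, and then derive a contradiction with the assumption $w \in P$ by extracting a suitable suffix of $w$ lying both in $U$ and in $P'$.

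First, I would verify that $P = F^{<\beta} P'$ is downwards closed for $\HigLeq$. This is a straightforward induction: $F$ is closed in $\Sigma$ (hence downwards closed), so $F^{<\beta}$ is downwards closed for $\HigLeq$, and by induction hypothesis on $n$, $P' = F_2^{<\beta_2} \cdots F_n^{<\beta_n}$ is downwards closed as well; any subword of an element of $P$ can be split at the cut between $u_1$ and $u_2$ and each piece belongs to its respective factor. Consequently $P^c$ is upwards closed for $\HigLeq$, and it suffices to show the inclusion $\beta \triangleright U \subseteq P^c$.

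Next, assume for contradiction that some $w \in \beta \triangleright U$ also lies in $P$, so that $w = u_1 u_2$ with $u_1 \in F^{<\beta}$ and $u_2 \in P'$. Let $\delta \defined |u_1|$, which satisfies $\delta < \beta$. Choose $\gamma \defined \delta$, which is still strictly below $\beta$. The suffix $w_{>\gamma}$ consists of the letters of $w$ at positions strictly greater than $\delta$, and each such position falls inside $u_2$; concretely, $w_{>\delta}$ is a suffix of $u_2$ (empty if $u_2$ is empty, or $u_2$ with its first letter removed otherwise). In particular $w_{>\delta} \HigLeq u_2$. Since $P'$ is downwards closed for $\HigLeq$ and $u_2 \in P'$, we obtain $w_{>\delta} \in P'$. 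But by the hypothesis $w \in \beta \triangleright U$ and $\delta < \beta$, also $w_{>\delta} \in U = (P')^c$, giving the desired contradiction.

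The main obstacle is the bookkeeping with ordinal positions: one has to check that the choice $\gamma := \delta$ is strictly below $\beta$ (which holds simply because $\delta < \beta$), and convince oneself that $w_{>\delta}$ is a subword of $u_2$ uniformly in the three cases where $\delta$ is zero, a successor, or a limit ordinal. In all three cases the argument is the same once one observes that positions of $w$ strictly after $\delta$ lie strictly inside the $u_2$-factor, so no separate case analysis on the shape of $\delta$ is actually required.
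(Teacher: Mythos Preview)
Your proof is correct and follows the same overall strategy as the paper: reduce from $B$ to $\beta \triangleright U$ via the downward closure of $P$, then derive a contradiction by exhibiting a suffix of $w$ lying in both $U$ and $P'$.

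There is one genuine difference worth noting. The paper, after writing $w = uv$ with $|u| = \gamma < \beta$, asserts directly that $v = w_{>\gamma}$ and then invokes the extra hypothesis ``$\gamma+1 < \beta$ because $\beta$ is a limit ordinal'' to conclude $v \in U$. You instead observe only that $w_{>\delta}$ is a \emph{subword} of $u_2$ (indeed, $u_2$ with its first letter removed, under the natural indexing), and then appeal to the downward closure of $P'$ for $\HigLeq$ to get $w_{>\delta} \in P'$. This sidesteps both the off-by-one bookkeeping and the limit-ordinal assumption on $\beta$, which is not part of the definition of the ordinal subword topology; in that sense your argument is slightly more robust than the paper's. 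The cost is that you must verify $P'$ is downwards closed, but you already do this as part of your induction on $n$.
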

			      \begin{claimproof}
				      Because $P$ is downwards closed for $\HigLeq$ it
				      suffices to check that every word $w \in \beta \triangleright
					      U$ lies in $P^c$.

				      Assume by contradiction that such a word $w$
				      is in $P$. One can write $w = uv$
				      with $u \in F^{<\beta}$ and $v \in P'$.
				      However, $|u| = \gamma < \beta$, and $\gamma + 1 < \beta$
				      because $\beta$ is a limit ordinal.
				      Therefore, $v = w_{> \gamma} \in U = (P')^c$
				      which is absurd.
				      \claimqedhere
			      \end{claimproof}

			      \begin{claim}
				      $A$ and $B$ are open in the least fixed point
				      of $\RegSubExpOrd$.
			      \end{claim}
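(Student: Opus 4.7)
The strategy is to show each of $A$ and $B$ has the shape of a subbasic open generated by $\RegSubExpOrd$ starting from sets already open in the least fixed point $\lfp{\TP}{\RegSubExpOrd(\TP)}$; stability of the fixed point under $\RegSubExpOrd$ then concludes. For $B = \upset{\HigLeq}{\beta \triangleright U}$ this is immediate: the outer induction on $n$ gives that $U = (P')^c$ is open in the fixed point, and $B$ is exactly one of the subbasic generators listed in the definition of $\RegSubExpOrd$.

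For $A$, I would first note that the complement $F^c$ is open in $\theta$ (since $F$ is closed in $\theta$), so $V_1 \defined \upset{\HigLeq}{F^c}$ is a subbasic generator of $\RegSubExpOrd$ produced from a $\theta$-open, and hence is open in the fixed point. I would then establish the key equality
\[
A \;=\; \upset{\HigLeq}{V_1 U},
\]
which puts $A$ in the form $\upset{\HigLeq}{V_1 V_2}$ with both factors open in the fixed point, so stability under $\RegSubExpOrd$ yields openness. The proof of this equality rests on the structural observation that every open in the fixed point is $\HigLeq$-upward closed, which one checks by induction on the ordinal iteration since every generator of $\RegSubExpOrd$ already has this shape. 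For $A \subseteq \upset{\HigLeq}{V_1 U}$, a witness $av \HigLeq w$ with $a \in F^c$ and $av \in U$ singles out a position $i$ in $w$ with $a \leq w_i$; upward closure of $F^c$ under the specialisation preorder of $\theta$ forces $w_i \in F^c$, so $w_{\leq i} \in V_1$, while $w_{> i}$ embeds $v$ and thus lies in $U$ by upward closure of $U$. The reverse inclusion starts from a subword witness $xu \HigLeq w$ with $x \in V_1$ and $u \in U$: the word $x$ contains some letter $a \in F^c$, and the suffix of $xu$ starting at that letter has the form $a v'$ with $v' \in U$ (again by upward closure of $U$ under $\HigLeq$), which subword-embeds into $w$.

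\textbf{Main obstacle.} The delicate step is the $\supseteq$ inclusion: one has to repackage a general subword of the shape (word containing some $F^c$-letter)(element of $U$) into the stricter shape (single letter in $F^c$)(element of $U$) demanded by the definition of $A$. The $\HigLeq$-upward closure of every open in the least fixed point, a structural property that flows from the syntactic shape of the generators of $\RegSubExpOrd$, is exactly what makes this repackaging legal and is worth isolating as a preliminary observation before tackling the equality.
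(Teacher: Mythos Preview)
Your treatment of $B$ is correct and matches the paper's one-line argument.

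The argument for $A$ has a genuine gap: the claimed equality $A = \upset{\HigLeq}{V_1 U}$ with $V_1 = \upset{\HigLeq}{F^c}$ is false in general. Take $\Sigma = \{a,b\}$ with the discrete topology, $F = \{b\}$, and $U = \upset{\HigLeq}{\{a\}}$ (the set of words containing the letter $a$). Then every word $av$ with $a \notin F$ already lies in $U$, so $A$ is the set of all words containing at least one $a$. On the other hand any $w \in \upset{\HigLeq}{V_1 U}$ must embed a concatenation $xu$ with $x \in V_1$ and $u \in U$, each of which contributes a distinct occurrence of $a$, so $w$ contains at least two $a$'s. The single-letter word $a$ lies in $A$ but not in $\upset{\HigLeq}{V_1 U}$.

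The flaw sits exactly in your forward-inclusion step ``$w_{>i}$ embeds $v$ and thus lies in $U$ by upward closure of $U$'': this would follow from $v \in U$, but the hypothesis is $av \in U$, not $v \in U$. The first letter $a$ may be essential for membership in $U$, and you cannot spend it simultaneously on $V_1$ and on $U$.

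The paper handles $A$ differently. It defines $F \rtimes U \defined \upset{\HigLeq}{\setof{av}{a \notin F,\ av \in U}}$ and shows by structural induction on the subbasic shape of $U$ that $F \rtimes U$ is open in the least fixed point, via the reductions
\[
F \rtimes (\upset{\HigLeq}{W}) = \upset{\HigLeq}{(W \cap F^c)} \cup \upset{\HigLeq}{F^c W},\quad
F \rtimes (\upset{\HigLeq}{UV}) = \upset{\HigLeq}{(F \rtimes U) V},
\]
together with $F \rtimes (\upset{\HigLeq}{\beta \triangleright U}) = \upset{\HigLeq}{F^c(\beta' \triangleright U)}$ for a suitable $\beta'$. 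The first reduction is precisely where the ``double use'' of the letter is accounted for by the extra disjunct $\upset{\HigLeq}{(W \cap F^c)}$, which your single product $V_1 U$ misses.
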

			      \begin{claimproof}
				      The set $B$ is open because $U$ is open.
				      Let us prove by induction that whenever $U$
				      is open and $F$ is closed in $\theta$,
				      the set $F \rtimes U$ defined as
				      $\upset{\HigLeq} {\setof{av}{
							      a \not \in F, av \in U
						      }}
				      $ is open.
				      It is easy to check that
				      $F \rtimes ({\uparrow_{\HigLeq}} W)
					      = {\uparrow_{\HigLeq}} (W \cap F^c)
					      \cup
					      {\uparrow_{\HigLeq}} F^c W$.
				      Moreover,
				      $F \rtimes ({\uparrow_{\HigLeq}} UV)
					      = {\uparrow_{\HigLeq}} (F \rtimes U) V$.
				      Finally, for $\beta \geq 1$,
				      $F \rtimes ({\uparrow_{\HigLeq}} \beta \triangleright U)
					      =
					      {\uparrow_{\HigLeq}} F^c (\beta'
					      \triangleright U)$ with $\beta' = \beta$ if $\beta$
				      is limit, and $\beta' = \gamma$ if $\beta = \gamma + 1$.
				      \claimqedhere
			      \end{claimproof}

			      We have proven that $P^c$ is open.
			      \qedhere
		\end{description}
	\end{proof}
\end{sendappendix}
\begin{corollary}
	The \kl{ordinal subword topology} is Noetherian.
\end{corollary}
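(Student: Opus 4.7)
My plan is to obtain this corollary as an almost immediate consequence of the main theorem applied to the refinement function $\RegSubExpOrd$. First I would invoke \cref{lem:ft:regordsubword}, which does the real work: it tells me that $\RegSubExpOrd$ is a \kl{topology expander} (so in particular a \kl{refinement function}), and that every subbasic closed set $F_1^{<\beta_1} \cdots F_n^{<\beta_n}$ of the \kl{ordinal subword topology} has open complement in the least fixed point $\lfp{\TP}{\RegSubExpOrd(\TP)}$. By taking finite intersections and arbitrary unions, this gives that the entire \kl{ordinal subword topology} $\sigma$ is contained in $\lfp{\TP}{\RegSubExpOrd(\TP)}$.

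Next I would apply \cref{cor:gen:noethpres} directly: since $\RegSubExpOrd$ is a \kl{topology expander} over $\Sigma^{<\alpha}$, its least fixed point is a \kl{Noetherian topology} on $\Sigma^{<\alpha}$.

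Finally, I would close the argument with the elementary observation that Noetherianness is inherited by coarser topologies: if $\sigma \subseteq \lfp{\TP}{\RegSubExpOrd(\TP)}$, then any strictly increasing sequence of opens in $\sigma$ is also a strictly increasing sequence of opens in the finer topology, and must therefore be eventually stationary. Hence $\sigma$ is \kl{Noetherian}.

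There is no real obstacle here; the corollary is designed to be a formal application of the abstract machinery. All the technical content — the topological minimal bad sequence argument and the somewhat intricate induction showing that each closed set $F_1^{<\beta_1}\cdots F_n^{<\beta_n}$ is captured by the sets $\upset{\HigLeq}{UV}$, $\upset{\HigLeq}{\beta \triangleright U}$, and $\upset{\HigLeq}{W}$ — has already been discharged in \cref{cor:gen:noethpres} and \cref{lem:ft:regordsubword} respectively.
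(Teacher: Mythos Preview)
Your proposal is correct and matches the paper's intended argument exactly: the corollary is stated without proof immediately after \cref{lem:ft:regordsubword}, and is meant to follow by combining that lemma with \cref{cor:gen:noethpres} and the trivial observation that a coarser topology than a Noetherian one is Noetherian.
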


%\begin{note}
%It is not known whether the two topologies coincide,
%but one inclusion is sufficient to obtain the desired result.
%\end{note}

%As a particular case of \cref{lem:ft:regordsubword}
%in the case of $\alpha = \omega+1$, we exactly recover
%the previously known topology defined in \cite{goubaultlarrecq2021infinitary},
\subsection{Ordinal branching trees}

\AP As an example of a new \kl{Noetherian topology} derived using
\cref{cor:gen:noethpres}, we will consider
\intro{$\alpha$-branching trees}
$\intro*\omtrees{\Sigma}{\alpha}$, i.e.,
the least fixed point of the constructor
$X \mapsto \singleton + \Sigma \times X^{< \alpha}$
where $\alpha$ is a given ordinal.

\begin{definition}
	\label{def:ft:alpha-trees}
	Let $(\Sigma, \theta)$ be a \kl{Noetherian space}.
	The \intro{ordinal tree topology} over
	\kl{$\alpha$-branching trees} is the least fixed point
	of $\intro*\TreeExpOm$, mapping a topology $\tau$
	to the topology generated by the sets
	$\upset{\KruLeq}{U\langle V \rangle}$,
	where $U \in \theta$, $V$ is open in $(\omtrees{\Sigma}{\alpha})^{<\alpha}$
	with the \kl{ordinal subword topology}, and
	$U \langle V \rangle$ is the set of trees whose root is labelled
	by an element of $U$ and list of children belongs to $V$.
\end{definition}
\begin{theorem}
	\label{thm:ft:alpha-trees}
	The \kl{$\alpha$-branching trees} endowed with the
	\kl{ordinal tree topology} forms a \kl{Noetherian space}.
\end{theorem}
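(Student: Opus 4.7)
The strategy is to invoke the Main Result (\cref{cor:gen:noethpres}) by showing that $\TreeExpOm$ is a \kl{topology expander}. Monotonicity will be immediate: refining $\tau$ refines the \kl{ordinal subword topology} on $(\omtrees{\Sigma}{\alpha})^{<\alpha}$, hence enlarges the subbasis of sets $\upset{\KruLeq}{U\langle V \rangle}$. The remaining work will split into (i) preservation of Noetherianness and (ii) respect of subsets.

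For (i), I will assume that $\tau$ is \kl{Noetherian}. Applying \cref{lem:ft:regordsubword} with alphabet $(\omtrees{\Sigma}{\alpha}, \tau)$ gives that the \kl{ordinal subword topology} on $(\omtrees{\Sigma}{\alpha})^{<\alpha}$ is \kl{Noetherian}, and hence so is the product $\Sigma \times (\omtrees{\Sigma}{\alpha})^{<\alpha}$ by \cref{claim:join-prod-noeth}. Given any \kl(noeth){bad sequence} $(\upset{\KruLeq}{U_i \langle V_i \rangle})_{i \in \mathbb{N}}$ of subbasic opens of $\TreeExpOm(\tau)$, the sequence $(U_i \times V_i)_{i \in \mathbb{N}}$ in the product must be \kl(noeth){good}, providing some $k$ with $U_k \times V_k \subseteq \bigcup_{i < k} U_i \times V_i$. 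For any $t \in \upset{\KruLeq}{U_k \langle V_k \rangle}$, a witness $s = a(l) \in U_k \langle V_k \rangle$ with $s \KruLeq t$ would give $(a,l) \in U_j \times V_j$ for some $j < k$, whence $s \in U_j \langle V_j \rangle$ and $t \in \upset{\KruLeq}{U_j \langle V_j \rangle}$, contradicting badness.

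For (ii), I will take $\tau \subseteq \TreeExpOm(\tau)$ \kl{Noetherian} and $H$ closed in $\tau$. Because every subbasic open of $\TreeExpOm(\tau)$ is \kl{upwards-closed} for $\KruLeq$, the set $H$ will be \kl{downwards-closed} for $\KruLeq$, so every subtree of a tree in $H$ belongs to $H$ and in particular the children list of any tree in $H$ lies in $H^{<\alpha}$. Using that $\RegSubExpOrd$ respects subsets (\cref{lem:ft:regordsubword}), intersecting $V$ with $H^{<\alpha}$ amounts to restricting $V$ to an open of the \kl{ordinal subword topology} built over $\restr{\tau}{H}$. This will yield $\upset{\KruLeq}{U \langle V \rangle} \cap H = \upset{\KruLeq}{U \langle V' \rangle} \cap H$ for some such $V'$, and therefore the required inclusion $\restr{\TreeExpOm(\tau)}{H} \subseteq \restr{\TreeExpOm(\restr{\tau}{H})}{H}$.

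The main obstacle will be step (ii): since $V$ lives in an ordinal subword topology that is itself built from the ambient $\tau$, the subset-respect of $\TreeExpOm$ has to be reduced, one level down, to the subset-respect of $\RegSubExpOrd$. Once that bookkeeping is made explicit, \cref{cor:gen:noethpres} immediately yields that the least fixed point, namely the \kl{ordinal tree topology}, is \kl{Noetherian}.
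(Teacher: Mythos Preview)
Your proposal is correct and follows essentially the same route as the paper: show that $\TreeExpOm$ is a \kl{topology expander} (monotonicity, Noetherian preservation via the product $\theta \times$ \kl{ordinal subword topology}, and subset respect via downward closure of $H$ under $\KruLeq$), then invoke \cref{cor:gen:noethpres}. The paper presents the two verifications in the opposite order but the content is the same.

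One small remark on step~(ii): the fact you need is that $V \cap H^{<\alpha}$ agrees on $H^{<\alpha}$ with some $V'$ open in the \kl{ordinal subword topology} built over the \emph{alphabet} topology $\restr{\tau}{H}$. This is not literally the ``respects subsets'' clause of \cref{lem:ft:regordsubword}, which concerns closed subsets of the \emph{word} space rather than a change of alphabet topology; the paper instead argues it directly from the shape of the subbasic closed sets $F_1^{<\beta_1}\cdots F_n^{<\beta_n}$ (replacing each $F_i$ by $F_i \cap H$). Your conclusion is right, only the lemma you cite does not quite deliver it as stated.
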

\begin{proof}
	It suffices to prove that $\TreeExpOm$ is a \kl{topology expander}.
	It is clear that $\TreeExpOm$ is monotone, and
	a closed set of $\TreeExpOm(\tau)$ is always downwards closed
	for $\KruLeq$. As a consequence, if $\tau \subseteq \TreeExpOm(\tau)$
	and $H$ is closed in $\tau$,
	$t \in V \defined (\upset{\KruLeq}{U\langle V \rangle}) \cap H$
	if and only if
	$t \in H$ and every children of $t$ belongs to $H$.
	Therefore,
	$(\upset{\KruLeq}{U\langle V \rangle}) \cap H
		= (\upset{\KruLeq}{U\langle V \cap H^{<\alpha} \rangle}) \cap H$.
	Notice that $H^{<\alpha} \cap V$ is an open of the
	\kl{ordinal subword topology} over $\restr{\tau}{H}$. As a
	consequence,
	$V \cap H \in \restr{\TreeExpOm(\restr{\tau}{H})}{H}$.

	Let us now check that $\TreeExpOm$ presreves \kl{Noetherian topologies}.
	Let $W_i \defined \upset{\KruLeq}{U_i \langle V_i \rangle}$
	be a $\mathbb{N}$-indexed sequence of open sets in $\TreeExpOm(\tau)$ where $\tau$
	is Noetherian.
	The product of the
	topology $\theta$ and the \kl{ordinal subword topology} over $\tau$
	is Noetherian thanks to
	\cref{claim:join-prod-noeth,lem:ft:regordsubword}.
	Hence, there exists a $i \in \mathbb{N}$ such that
	$U_i \times V_i \subseteq \bigcup_{j < i} U_j \times V_j$.
	As a consequence, $W_i \subseteq \bigcup_{j < i} W_j$.
	We have proven that $\TreeExpOm(\tau)$ is Noetherian.
	\qedhere
\end{proof}

At this point, we have proven that the framework of
\kl{topology expanders} allows to build non-trivial \kl{Noetherian} spaces.
We argue that
this bears several advantages over ad-hoc proofs:
\begin{inparaenum}[(i)]
	\item the ad-hoc proofs are often tedious and error prone
	\cite{goubault2013non,goubaultlarrecq2021infinitary,goubaultlarrecq2022infinitary}
	\item the verification that $\AExp$ is a topology expander on the other hand
	is quite simple
	\item the framework provides a good reason for which the desired topology
	is a sensible choice.
\end{inparaenum}
However, this setting is not quite satisfactory yet, as we do not provide
an automatic definition of the topology expander in the case of
an inductively defined space.

\section{Consequences on inductive definitions}
\label{sec:inductive}
So far, the process of constructing \kl{Noetherian spaces}
has been the following: first build a set of points,
then compute a topology that is Noetherian as a least fixed point.
In the case where the set of points
itself is inductively defined (such as finite words or finite trees),
the second step might seem redundant.

While inductive definitions are quite clear in the set theoretic
interpretation, we are interested in quasi-orderings
and topologies, for which the notion of least fixed-point
has to be precised. To that purpose, let us now introduce
some basic notions of category theory.

\AP In this paper only three categories will appear,
the \kl{category} $\intro*\CatSet$ of sets and functions,
the \kl{category} $\intro*\CatTop$ of topological spaces and continuous maps,
and
the \kl{category} $\intro*\CatOrd$ of quasi-ordered spaces and monotone maps.
Using this language, a unary constructor
$G$ in the \kl{algebra of wqos}
defines an \intro{endofunctor} from objects of the category $\CatOrd$
to objects of the category $\CatOrd$
preserving \kl{well-quasi-orderings}.

In our study of \kl{Noetherian spaces} (resp. \kl{well-quasi-orderings}),
we will often see constructors $G'$ as first building a new
set of structures, and then adapting the topology (resp. ordering)
to this new set.
In categorical terms, we are interested in \kl{endofunctors} $G'$
that are \kl{U-lifts} of \kl{endofunctors} on $\CatSet$.
%In particular, this implies the existence of an \kl{endofunctor} $G$
%of $\CatSet$ and a map $H$
%such that $G'((X,\tau)) = (G(X), H(\tau))$.

\begin{sendappendix}
	\begin{definition}
		An \kl{endofunctor} $G'$
		of $\CatTop$ is a \intro{lift}
		of an \kl{endofunctor} $G$  of $\CatSet$
		if
		the following diagram commutes,
		where $U$ is the forgetful functor
		\begin{center}
			\begin{tikzcd}
				\CatTop \arrow[r, "G'"]
				\arrow[d, "U"]
				& \CatTop
				\arrow[d, "U"]
				\\
				\CatSet \arrow[r, "G"] & \CatSet
			\end{tikzcd}
		\end{center}
	\end{definition}
\end{sendappendix}

\subsection{Divisibility Topologies over Analytic Functors}
\AP

As noticed by \citet*{HASEGAWA2002113} and \citet*{freund2020kruskal},
usual orderings on words and trees can be derived from their
least fixed point definitions. We will provide a similar construction
for topological spaces.
However, we will avoid as much as possible the
use of complex machinery related to \kl{analytic functors},
and use as a definition an equivalent characterisation
given by
\citet[Theorem 1.6]{HASEGAWA2002113}.
For an introduction to analytic functors and combinatorial
species, we redirect the reader to~\citet*{10.1007/BFb0072514}.

\begin{definition}
	Given $\SCons$ an \kl{endofunctor} of $\CatSet$,
	the \intro{category of elements}
	$\CatEl(\SCons)$ has as objects pairs $(E,a)$ with $a \in \SCons(E)$,
	and as morphisms between $(E,a)$ and $(E',a')$
	maps $f \colon E \to E'$
	such that $\SCons_f (a) = a'$.
\end{definition}

As an intuition to the unfamiliar reader,
an element $(E,a)$ in $\CatEl(\SCons)$
is a witness that $a$ can be produced
through $\SCons$
by using elements of $E$. Morphisms of
elements are witnessing how relations
between elements of $\SCons(E)$ and $\SCons(E')$ arise
from relations between $E$ and $E'$.
As a way to define a ``smallest'' set of elements $E$
such that $a$ can be found in $\SCons(E)$, we rely on
transitive objects.

\begin{definition}
	A \intro{transitive object} in a category $\ACat$
	is an object
	$X$
	satisfying the following two conditions for every
	object $A$ of $\ACat$:
	\begin{inparaenum}[(a)]
		\item $\Hom(X, A)$ is non-empty;
		\item  The right action of $\Aut(X)$ on $\Hom(X, A)$
		by composition is transitive.
	\end{inparaenum}
\end{definition}

Given an object $A$ in a category $\ACat$,
one can build the \kl{slice category}
$\slice{\ACat}{A}$ whose objects are elements of $\Hom(B,A)$ when
$B$ ranges over objects of $\ACat$ and
morphisms between $c_1 \in \Hom(B_1,A)$ and $c_2 \in \Hom(B_2, A)$
are maps $f \colon B_1 \to B_2$ such that $c_2 \circ f = c_1$.
This notion of \kl{slice category} can be combined
with the one of \kl{transitive object} to build
so-called ``weak normal forms''.

\begin{definition}
	A \intro{weak normal form} of an object $A$ in a category $\ACat$
	is a \kl{transitive object} in $\slice{\ACat}{A}$.
\end{definition}

A category $\ACat$ has the \intro*\kl{weak normal form property}
whenever every object $A$ has a \kl{weak normal form}.
We are now ready to formulate a definition of
\kl{analytic functors} through the existence of
\kl{weak normal forms} for objects
in their \kl{category of elements}.

\begin{definition}
	An \kl{endofunctor} $\SCons$ of $\CatSet$
	is an \intro{analytic functor} whenever
	its
	category of elements
	$\CatEl(\SCons)$
	has the weak normal form property.
	Moreover; $X$ is a finite set for every weak normal form $f \in \Hom((X, x),
		(Y,y))$
	in $\slice{\CatEl(\SCons)}{(Y,y)}$.
\end{definition}

\begin{example}
	The \kl{functor} mapping $X$ to $\words{X}$
	is \kl{analytic}, and the weak normal form of a word $(\words{X}, w)$ is
	$(\letters(w), w)$ together with the canonical injection from $\letters(w)$ to
	$X$.  In this specific case, the \kl{weak normal forms} are in fact \kl{initial
		objects}.
\end{example}
\begin{example}
	The functor
	mapping $X$ to $X^{<\alpha}$ is not \kl{analytic} when $\alpha \geq \omega$,
	because of the restriction that weak normal forms are defined using
	finite sets.
\end{example}

Let us now explain how these \kl{weak normal forms} can be used to define a
\kl{support} associated to the \kl{analytic functor}. Given an \kl{analytic
	functor} $\AFunc$ and an \kl{element} $(X,x)$ in $\CatEl(\AFunc)$, there exists
a \kl{weak normal form} $f \in \Hom((Y,y),(X,x))$ in the \kl{slice category}
$\slice{\CatEl(\AFunc)}{(X,x)}$.  By definition, $f \colon Y \to X$ and
$\AFunc_f (y) = x$. We define $f(Y)$ as the \kl{support} of $x$ in $X$.

In turn,
this construction of support allows building a \kl{substructure ordering} on
initial algebras $(\mu \AFunc, \delta)$ of $\AFunc$: an element $a \in \mu
	\AFunc$ is a child of an element $b \in \mu \AFunc$ whenever $a = b$ or $a
	\in \supportW(\delta^{-1}(b))$. The transitive closure of the children relation
is called the \intro*\kl{substructure ordering} on $\mu \AFunc$, and
written $\sstructeq$.

\begin{example}
	The substructure ordering on $\InitSCons$
	for
	$\SCons(X) \defined \singleton + \Sigma \times X$
	is the suffix ordering of words.
\end{example}

As \kl{analytic functors} induce
a quasi-ordering on their \kl{initial algebras},
it is natural to import this quasi-ordering
when dealing with \kl{lifts} of \kl{analytic functors}
in the category $\CatOrd$. This follows
the construction of \citet*[Definition 2.7]{HASEGAWA2002113},
although this \kl{substructure ordering} is implicitly built.
Given a topology $\tau$ on $\mu\AFunc$,
one can build
open sets as ${\uparrow}_{\sstructeq} U$
for $U \in \tau$. Open sets of this new topology
are automatically upwards closed for $\sstructeq$.

\begin{definition}
	Let $\TCons \colon \CatTop \to \CatTop$
	be a lifting of an analytic functor $\SCons$, and $(\InitSCons, \delta)$
	an \kl{initial algebra} of $\SCons$. Moreover, we
	suppose that $\TCons$ preserves inclusions.
	The \intro*\kl{divisibility topology} over $\InitSCons$
	is the least fixed point of
	$\intro*\DivExp$, mapping
	$\tau$ to the topology generated by $\setof{
			{\uparrow}_{\sqsubseteq} \delta(U)
		}{U \text{ open in } \TCons(\InitSCons, \tau)}$.
\end{definition}

%\begin{example}
%The divisibility map $\DivExp$ corrects the behaviour of $\BadIterator$
%over $\words{\Sigma}$ by noticing that $a \sstructeq ba$ and $b \sstructeq ab$.
%The differences can be seen on \cref{fig:gen:prefixtopo}
%and \cref{fig:induct:correctingprefix}, as $\DivExp$ equals
%$\RegSubExp$ in this particular case.
%\end{example}
\vspace{-2ex}
\begin{restatable}{theorem}{thminducdivisibility}
	\label{thm:induc:divisibility}
	The \kl{divisibility topology}
	is \kl{Noetherian}.
\end{restatable}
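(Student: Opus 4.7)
The plan is to invoke Theorem~\ref{cor:gen:noethpres}, so it suffices to show that $\DivExp$ is a \kl{topology expander}. This breaks into three tasks: verifying monotonicity, preservation of \kl{Noetherian} topologies, and respect for closed subsets.

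Monotonicity is immediate from the standing hypothesis that $\TCons$ preserves inclusions: if $\tau \subseteq \tau'$ then every open of $\TCons(\InitSCons,\tau)$ is open in $\TCons(\InitSCons,\tau')$, so the subbasic generators of $\DivExp(\tau)$ sit among those of $\DivExp(\tau')$. For Noetherian preservation, I rely on the fact that $\TCons$, as a lift of an \kl{analytic functor}, preserves Noetherian topologies, giving that $\TCons(\InitSCons,\tau)$ is Noetherian whenever $\tau$ is. Then I use that the map $U \mapsto \upset{\sqsubseteq}{\delta(U)}$ is monotone in $U$, so an inclusion $U_j \subseteq \bigcup_{i<j} U_i$ in $\TCons(\InitSCons,\tau)$ yields $\upset{\sqsubseteq}{\delta(U_j)} \subseteq \bigcup_{i<j} \upset{\sqsubseteq}{\delta(U_i)}$. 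Hence a \kl(noeth){bad sequence} of subbasic opens of $\DivExp(\tau)$ would project to a \kl(noeth){bad sequence} of opens of $\TCons(\InitSCons,\tau)$, contradicting its Noetherianness.

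The main obstacle is the subset respect property. Let $\tau$ be Noetherian with $\tau \subseteq \DivExp(\tau)$ and let $H$ be closed in $\tau$. Every open of $\DivExp(\tau)$ is upwards closed for $\sstructeq$ by construction, so the hypothesis $\tau \subseteq \DivExp(\tau)$ forces $H$ to be downwards closed for $\sstructeq$. I would then prove, for each subbasic open $U$ of $\TCons(\InitSCons,\tau)$, the identity
\[
(\upset{\sqsubseteq}{\delta(U)}) \cap H = (\upset{\sqsubseteq}{\delta(U')}) \cap H,
\]
for some canonically constructed open $U'$ in $\TCons(\InitSCons,\restr{\tau}{H})$. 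The critical observation is that if $\delta(a) \sstructeq x$ with $x \in H$, then $\delta(a) \in H$, so in particular all children of $\delta(a)$ --- namely the support of $a$ in $\InitSCons$ --- also lie in $H$. The \kl{analytic functor} structure then supplies a finite \kl{weak normal form} for $a$ whose carrier factors through $H$, which is precisely what is needed to reinterpret $a$ through $\TCons$ applied to $\restr{\tau}{H}$.

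The hard step will be making this analytic-functor transfer fully rigorous: one must carefully manipulate the weak normal form of $a \in \SCons(\InitSCons)$ when $\supp(a) \subseteq H$, and turn the open $U$ into an open $U'$ of $\TCons(\InitSCons,\restr{\tau}{H})$ with the same intersection against $H$, while keeping the $\delta$-image compatible. Once this identity is in hand, subset respect is established, $\DivExp$ is a \kl{topology expander}, and Theorem~\ref{cor:gen:noethpres} yields that the \kl{divisibility topology} is \kl{Noetherian}.
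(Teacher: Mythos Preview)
Your proposal is correct and follows essentially the same approach as the paper: verify that $\DivExp$ is a \kl{topology expander} by checking monotonicity, Noetherian preservation, and subset respect, then invoke Theorem~\ref{cor:gen:noethpres}. The paper makes your ``hard step'' rigorous exactly along the lines you sketch: it first shows $\delta^{-1}(H) \subseteq \SCons(H)$ from downward closure of $H$ under $\sstructeq$, rewrites $(\upset{\sstructeq}{\delta(V)}) \cap H = \upset{\sstructeq}{\delta(V \cap \SCons(H))} \cap H$, and then uses that the two inclusion maps $e_1 \colon (H,\tau_H) \hookrightarrow (\InitSCons,\tau)$ and $e_2 \colon (H,\tau_H) \hookrightarrow (\InitSCons,\restr{\tau}{H})$ are embeddings preserved by $\TCons$, so that $V \cap \SCons(H) = (\TCons_{e_1})^{-1}(V)$ is open in $\TCons(H,\tau_H)$ and hence equals $W \cap \SCons(H)$ for some $W$ open in $\TCons(\InitSCons,\restr{\tau}{H})$.
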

\begin{sendappendix}
	\ifsubmission \thminducdivisibility* \fi
	\begin{proof}
		We prove that $\DivExp$ is a \kl{topology expander} and conclude thanks
		to \cref{cor:gen:noethpres}.
		\begin{enumerate}
			\item Let us prove that $\DivExp$ sends
			      \kl{Noetherian topologies} to \kl{Noetherian topologies}.
			      This is because it is the upwards closure of
			      the image of a \kl{Noetherian} topology through $\delta$.

			\item Let us show that $\DivExp$ is monotone.

			      Consider $\tau \subseteq \tau'$ two topologies on $\InitSCons$.
			      Let us write $X \defined (\InitSCons, \tau)$ and $Y \defined (\InitSCons,
				      \tau')$.
			      By definition of the inclusion of topologies,
			      there exists an \kl(topo){embedding} $\iota \colon X \to Y$ in $\CatTop$
			      whose underlying function is the identity on $\InitSCons$.
			      Because $\TCons$ preserves \kl(topo){embeddings},
			      $\TCons_{\iota}$ is an \kl(topo){embedding} from
			      $\TCons(X)$ to $\TCons(Y)$, that is,
			      an \kl(topo){embedding} from $(\TCons(\InitSCons), \TCons(\tau))$
			      to $(\TCons(\InitSCons), \TCons(\tau'))$.
			      Moreover, $U\TCons_{\iota} = \SCons_{U \iota}
				      = \SCons_{\Id_{\InitSCons}}
				      = \Id_{\InitSCons}$.
			      As a consequence, $\TCons(\tau) \subseteq \TCons(\tau')$
			      and $\DivExp(\tau) \subseteq \DivExp(\tau')$.

			\item Let us consider a \kl{Noetherian topology} $\tau$
			      such that $\tau \subseteq \DivExp(\tau)$,
			      $H$ closed in $\tau$,
			      and prove that $\restr{\DivExp(\tau)}{H}
				      \subseteq \restr{\DivExp(\restr{\tau}{H})}{H}$.
			      Because $\SCons$ is an \kl{analytic functor},
			      we can assume without loss of generality that
			      $\SCons(H) \subseteq \SCons(\InitSCons)$.

			      \begin{claim}
				      \label{fact:ind:closedinclusion}
				      $\delta^{-1}(H) \subseteq G(H)$
			      \end{claim}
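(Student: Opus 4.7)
The plan is to show that the hypothesis $\tau\subseteq \DivExp(\tau)$ forces $H$ to be downwards closed for the \kl{substructure ordering} $\sstructeq$, and then to use the support provided by the \kl{analytic functor} structure of $\SCons$ to conclude.

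First I would observe that every subbasic open set of $\DivExp(\tau)$ is of the form ${\uparrow}_{\sstructeq} \delta(U)$, and hence is upwards-closed for $\sstructeq$. Since this property is preserved by arbitrary unions and finite intersections, every open set of $\DivExp(\tau)$ is upwards-closed for $\sstructeq$. Combined with the assumption $\tau \subseteq \DivExp(\tau)$, this means every element of $\tau$ is upwards-closed for $\sstructeq$, so every closed set of $\tau$ — in particular $H$ — is downwards-closed for $\sstructeq$.

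Next I would use the analytic structure. Let $x \in \delta^{-1}(H)$, so $x \in \SCons(\InitSCons)$ with $\delta(x) \in H$. Viewing $(\InitSCons, x)$ as an object of $\CatEl(\SCons)$, the \kl{weak normal form property} yields some $f \in \Hom((Y,y),(\InitSCons,x))$ with $Y$ finite, whose image $f(Y) = \supportW(x) \subseteq \InitSCons$ satisfies $\SCons_f(y) = x$. By the very definition of the \kl{substructure ordering}, every element of $\supportW(x)$ is a child of $\delta(x)$, hence lies below $\delta(x)$ in $\sstructeq$. Since $H$ is downwards-closed for $\sstructeq$ and $\delta(x) \in H$, we conclude $\supportW(x) \subseteq H$.

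Finally, since $f$ factors through $H$ as a map $Y \to H$ followed by the inclusion $\iota \colon H \hookrightarrow \InitSCons$, functoriality gives $x = \SCons_f(y) = \SCons_{\iota}(\SCons_{f'}(y))$, where $f' \colon Y \to H$ is the corestriction of $f$. Under the identification $\SCons(H) \subseteq \SCons(\InitSCons)$ coming from the preservation of inclusions (the ``without loss of generality'' assumption stated just before the claim), this shows $x \in \SCons(H)$, as required. The only subtle point is the first step: making sure the inclusion $\tau \subseteq \DivExp(\tau)$ is actually used to transfer topological closedness of $H$ into order-theoretic downwards-closedness; once this is in hand, the analytic machinery of $\SCons$ does the rest essentially by unpacking definitions.
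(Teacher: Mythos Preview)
Your proposal is correct and follows essentially the same approach as the paper: show that $H$ is downwards-closed for $\sstructeq$ (via $\tau \subseteq \DivExp(\tau)$), deduce that the support of $\delta^{-1}(t)$ lies in $H$, and conclude $\delta^{-1}(t) \in \SCons(H)$. The paper's proof is terser---it simply asserts that $H$ is downwards-closed and that $\supportW(\delta^{-1}(t)) \subseteq H$ ``means'' $\delta^{-1}(t) \in \SCons(H)$---whereas you spell out both the justification for downwards-closedness and the functoriality argument behind the last step, but the substance is identical.
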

			      \begin{claimproof}
				      Let $t \in H$, because $H$ is downwards closed for
				      $\sstructeq$, for every $u \in \supportW(\delta^{-1}(t))$,
				      $u \in H$.
				      As a consequence, $\supportW(\delta^{-1}(t)) \subseteq H$,
				      and this means that $\delta^{-1}(t) \in \SCons(H)$.
				      \claimqedhere
			      \end{claimproof}

			      Let $U = {\uparrow}_{\sqsubseteq} \delta(V)$
			      be an open set of $\DivExp(\tau)$. Notice that
			      $H$ is a closed subset of $\DivExp(\tau)$ because
			      $\tau \subseteq \DivExp(\tau)$.
			      Therefore,
			      \begin{align*}
				      U \cap H & = ({\uparrow}_{\sstructeq} \delta(V)) \cap H
				      = {\uparrow}_{\sstructeq}  (\delta(V) \cap H) \cap H
				      = {\uparrow}_{\sstructeq} (\delta(V) \cap
				      \delta(\SCons(H))) \cap H                               \\
				               & = {\uparrow}_{\sstructeq} \delta(V \cap
				      \SCons(H)) \cap H
			      \end{align*}
			      To conclude that $U \cap H$ is open in
			      $\restr{\DivExp(\restr{\tau}{H})}{H}$
			      it suffices to show that
			      $V \cap \SCons(H)$ can be rewritten as
			      $W \cap \SCons(H)$ where $W$ is open in $\TCons(\InitSCons, \restr{\tau}{H})$.
			      Let us consider
			      two maps $e_1 \colon (H,\tau_H) \to (\InitSCons, \tau)$,
			      and $e_2 \colon (H,\tau_H) \to (\InitSCons, \restr{\tau}{H})$.
			      These two maps are embeddings, hence preserved by
			      $\TCons$. As a consequence,
			      $V \cap \SCons(H) = (\TCons_{e_1})^{-1}(V)$, which is open.
			      Because $\TCons_{e_2}$ is an embedding,
			      there exists a $W$ open in $\TCons(\InitSCons, \restr{\tau}{H})$
			      such that $(\TCons_{e_2})^{-1}(W) = V \cap \SCons(H)$.
			      This can be rewritten, $W \cap \SCons(H) = V \cup \SCons(H)$.
			      \qedhere
		\end{enumerate}
	\end{proof}
\end{sendappendix}

As a sanity check, we can apply \cref{thm:induc:divisibility}
to the sets of finite words and finite trees, and recover
the \kl{subword  topology} and the \kl{tree topology}
that were obtained in an ad-hoc fashion in \cref{sec:transfinite}.
In addition to validating the usefulness of
\cref{thm:induc:divisibility}, we believe that these are strong indicators
that the topologies introduced prior to this work were the right generalisations
of \kl{Higman's word embedding} and \kl{Kruskal's tree embedding}
in a topological setting, and addresses the canonicity issue
of the aforementioned topologies.

\begin{restatable}{lemma}{lemindhigman}
	\label{lem:ind:higman}
	The \kl{subword  topology} over $\words{\Sigma}$,
	is the \kl{divisibility topology}
	associated to the analytic functor $X \mapsto \singleton + \Sigma \times X$.
\end{restatable}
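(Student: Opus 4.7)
The plan is to verify the two inclusions between $\lfp{\tau}{\DivExp(\tau)}$ and the \kl{subword topology} $\tau_{\text{sub}}$ by unfolding the definitions of $\delta$ and the \kl{substructure ordering} on $\words{\Sigma}$. First I would pin down the concrete ingredients. The initial algebra of $\SCons(X) = \singleton + \Sigma \times X$ is $\words{\Sigma}$ with $\delta(\star) = \epsilon$ and $\delta(a,w) = aw$; as already observed in the paper, this forces $\sstructeq$ to be the suffix ordering. The subbasic open sets of $\TCons(\words{\Sigma},\tau)$ are $\singleton$ and products $A \times V$ with $A$ open in $\Sigma$ and $V \in \tau$, so the subbasic opens of $\DivExp(\tau)$ are ${\uparrow}_{\sstructeq}\delta(\singleton) = \words{\Sigma}$ and ${\uparrow}_{\sstructeq}\delta(A \times V) = \words{\Sigma} A V$.

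For the inclusion $\tau_{\text{sub}} \subseteq \lfp{\tau}{\DivExp(\tau)}$, I would show by induction on $n$ that every subbasic open $\openword{A_1, \dots, A_n}$ belongs to $\DivExp^{n}(\TrivTop)$. The case $n = 0$ is immediate since $\openword{\,} = \words{\Sigma}$ is already in $\TrivTop$. For $n \geq 1$, the inductive hypothesis gives $\openword{A_2, \dots, A_n} \in \DivExp^{n-1}(\TrivTop)$; hence $A_1 \times \openword{A_2, \dots, A_n}$ is open in $\TCons(\words{\Sigma}, \DivExp^{n-1}(\TrivTop))$, its image under $\delta$ equals $A_1 \openword{A_2, \dots, A_n}$, and the suffix-upward closure of this set is exactly $\openword{A_1, A_2, \dots, A_n}$, which is therefore in $\DivExp^{n}(\TrivTop)$.

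For the reverse inclusion, Tarski's theorem says it is enough to exhibit $\tau_{\text{sub}}$ as a pre-fixed point of $\DivExp$, that is, to check $\DivExp(\tau_{\text{sub}}) \subseteq \tau_{\text{sub}}$. Using the explicit form of $\DivExp$-subbasic opens computed above, and writing an arbitrary open $V$ of $\tau_{\text{sub}}$ as a union of sets $\openword{A_1, \dots, A_n}$, this reduces to observing that $\words{\Sigma} A \openword{A_1, \dots, A_n} = \openword{A, A_1, \dots, A_n}$, which is again subbasic in $\tau_{\text{sub}}$, and that $\words{\Sigma}$ itself lies in $\tau_{\text{sub}}$. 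Arbitrary unions and finite intersections are then handled by distributivity of concatenation over union and by the subbasis generation.

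The argument is a direct unfolding; the only mild care required is the bookkeeping that aligns the subbasis of the coproduct-and-product topology on $\TCons(\words{\Sigma},\tau)$ with the subbasis of $\tau_{\text{sub}}$ at each iteration, so that the induction on $n$ increases the $\DivExp$-depth by exactly one per prepended letter-open. No subtle minimal bad sequence argument is needed here, since Noetherianness is already guaranteed by \cref{thm:induc:divisibility}.
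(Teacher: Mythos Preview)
Your argument is correct. The paper's own proof is a one-liner: it remarks that $\DivExp$ and $\RegSubExp$ have the same least fixed point and then invokes \cref{lem:gen:regsubexp-lfp}, which already identified that fixed point with the \kl{subword topology}. You instead bypass $\RegSubExp$ entirely and verify directly that $\tau_{\text{sub}}$ is the least fixed point of $\DivExp$, by an induction on $n$ for one inclusion and a pre-fixed-point check for the other. Conceptually the two routes coincide (both amount to showing $\tau_{\text{sub}}$ is a fixed point and is below any fixed point); the paper's version is more modular because it reuses machinery already set up in \cref{sec:sub:ill-behaved}, while yours is self-contained and makes the concrete correspondence $\uparrow_{\sstructeq}\delta(A\times V)=\words{\Sigma}AV$ explicit. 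One small point worth tightening: when you write an arbitrary $V\in\tau_{\text{sub}}$ as a union of sets $\openword{A_1,\dots,A_n}$, you are using that these sets form a \emph{basis} and not merely a subbasis; this is standard (finite intersections of such sets are finite unions of the same shape, via a shuffle argument), but it deserves a sentence or a citation.
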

\begin{sendappendix}
	\ifsubmission \lemindhigman* \fi
	\begin{proof}
		It suffices to remark that the functions $\DivExp$ and
		$\RegSubExp$ have the same least fixed point, and
		conclude using \cref{lem:gen:regsubexp-lfp}.
	\end{proof}
\end{sendappendix}
\vspace{-3ex}
\begin{restatable}{lemma}{lemindkruskal}
	\label{lem:ind:kruskal}
	The \kl{tree  topology} over $\trees{\Sigma}$,
	is the \kl{divisibility topology}
	associated to the analytic functor $X \mapsto X \times \Sigma^*$.
\end{restatable}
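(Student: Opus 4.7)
The plan is to mirror the approach of Lemma~\ref{lem:ind:higman}: show that the refinement function $\DivExp$ attached to the analytic functor $\SCons(X) \defined \Sigma \times \words{X}$ (equivalently, up to swapping factors, $X \times \words{\Sigma}$) has exactly the same generating open sets as $\TreeExp$ from \cref{lem:ft:tree-coarsest}, so their least fixed points coincide and the conclusion follows.

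First, I would identify the \kl{substructure ordering} $\sstructeq$ on $\InitSCons \cong \trees{\Sigma}$. For $\SCons(X) = \Sigma \times \words{X}$, the \kl{weak normal form} of an element $(E,(a,w))$ in $\CatEl(\SCons)$ is given by the finite set $\letters(w)$ of letters of $w$, equipped with the obvious inclusion into $E$ and the element $(a,w) \in \Sigma \times \words{\letters(w)}$. Consequently, for a tree $t = a\langle t_1,\dots,t_n\rangle$, one has $\supportW(\delta^{-1}(t)) = \{t_1,\dots,t_n\}$, so $\sstructeq$ is the usual subtree relation on $\trees{\Sigma}$.

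Second, I would unfold $\TCons(\InitSCons,\tau)$. Because $\TCons$ is a \kl{lift} of $\SCons$ through the product and word constructions of the \kl(noeth){algebra of Noetherian spaces}, its topology is generated by subbasic opens of the shape $U \times V$ with $U$ open in $\theta$ and $V$ open in the \kl{subword topology} on $(\words{\InitSCons},\tau^{*})$. Pushing such a generator through $\delta$ gives $\delta(U \times V) = U\langle V\rangle$, the set of trees whose root is in $U$ and whose list of children lies in $V$; taking the upwards closure under the subtree ordering yields $\upset{\sstructeq}{\delta(U\times V)} = \diamond U\langle V\rangle$, which is precisely a subbasic generator of $\TreeExp(\tau)$.

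Hence $\DivExp$ and $\TreeExp$ have the same subbasic opens on every input topology, so by monotonicity and transfinite iteration they have the same least fixed point, which by \cref{lem:ft:tree-coarsest} is the \kl{tree topology}. The step I expect to require the most care is the identification of the \kl{weak normal form}, and in particular the verification that the categorical \kl{support} of $(a,w)$ is $\letters(w)$ and not some coarser or finer set; once this is established the rest is a direct unfolding of definitions.
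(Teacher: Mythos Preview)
Your approach is exactly the paper's: show that $\DivExp$ and $\TreeExp$ share their least fixed point and then invoke \cref{lem:ft:tree-coarsest}, with you spelling out the details (identification of $\sstructeq$ with the subtree relation, and of the subbasic generators) that the paper's one-line proof leaves implicit. One minor slip: your parenthetical claim that $\Sigma \times \words{X}$ is ``equivalently, up to swapping factors, $X \times \words{\Sigma}$'' is false --- swapping the factors of $\Sigma \times \words{X}$ yields $\words{X} \times \Sigma$, not $X \times \words{\Sigma}$ --- though you correctly work with $\SCons(X) = \Sigma \times \words{X}$ in the actual argument, which is indeed the intended functor (the printed statement contains a typo).
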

\begin{sendappendix}
	\ifsubmission \lemindkruskal* \fi
	\begin{proof}
		It suffices to remark that the functions $\DivExp$ and
		$\TreeExp$ have the same least fixed point, and
		conclude using \cref{lem:ft:tree-coarsest}.
	\end{proof}
\end{sendappendix}

\subsection{Divisibility Preorders}
\AP

We are now going to prove that the \kl{divisibility topology}
correctly generalises the corresponding notions on quasi-orderings.
In the case of finite words, this translates
to the equation $\Alex{\leq}^* = \Alex{\leq^*}$~\cite[Exercise 9.7.30]{goubault2013non}.
We will proceed to generalise this result to every
\kl{divisibility topology} by relating it to the
\kl{divisibility preorder} introduced by
\citet[Definition 2.7]{HASEGAWA2002113}.

Given an \kl{analytic functor} $\SCons$
and its lift $\OCons$ to quasi-orderings respecting
embeddings and \kl{wqos},
let us build a family $A_i$ of quasi-orders
and $e_i \colon A_i \to A_{i+1}$ of embeddings
as follows:
\begin{itemize}
	\item $A_0 = \emptyset$, $A_1 = \OCons(A_0)$
	      and $e_0$ is the empty map.
	\item $e_{n+1} = \OCons_{e_n}$ and
	      $A_{n+1}$ has as carrier set $\SCons(A_n)$
	      and preordering the transitive closure of the union
	      of the two following relations:
	      The one is the quasi-order $\OCons(A_n)$,
	      and the other is the collection of $b \dl a$
	      for each \kl{weak normal form}
	      $(X,z) \to^f (A_n, a)$ in $\CatEl(\SCons)$
	      and each $b$ in the image of
	      $X \to^f A_n \to^{e_n} A_{n+1}$.
\end{itemize}
The \intro*\kl{divisibility ordering} $\preceq$
is the $\omega$-inductive limit
in the category $\CatOrd$ of the diagram
$A_0 \to^{e_0} A_1 \to^{e_1} \cdots$.
As remarked by \citeauthor*{HASEGAWA2002113},
the maps $e_n$ are injective \kl{order embeddings}, and so are
the morphisms $c_n \colon A_n \to \InitSCons$ of the colimiting cone
\cite[Lemma 2.8]{HASEGAWA2002113}.
Without loss of generality, we can assume that $A_0 \subseteq A_1 \dots$
and that the colimit $\InitSCons$ is the union of the sets $A_i$ for $0 \leq i
	< \omega$. In particular, the map $\delta$ is the identity map in this
setting.

\begin{sendappendix}
	\begin{lemma}
		$a \dl b$ in $A_{n+1}$ if and only if $a \in \supportW(\delta^{-1}(b))$.
	\end{lemma}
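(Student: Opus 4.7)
The plan is to unfold both sides of the equivalence in terms of a single weak normal form and identify them using the uniqueness, up to automorphism, of transitive objects. Throughout, I use the simplification noted in the surrounding text: $\delta$ may be treated as the identity, and both $e_n \colon A_n \to A_{n+1}$ and $c_n \colon A_n \to \InitSCons$ are inclusions with $c_{n+1} \compose e_n = c_n$.

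First, I would unpack the left-hand side. By the very definition of the preorder on $A_{n+1}$ given just above the lemma, $a \dl b$ holds if and only if there exist a weak normal form $f \colon (X, z) \to (A_n, b)$ in $\CatEl(\SCons)$ (where $b \in \SCons(A_n)$ is viewed as an element of the carrier of $A_{n+1}$) and some $x \in X$ with $a = e_n(f(x))$. Thus the set of $a$'s with $a \dl b$ coincides with $e_n(f(X))$, which is independent of the chosen $f$ because the right action of $\Aut(f)$ on $\Hom$-sets in $\slice{\CatEl(\SCons)}{(A_n, b)}$ is transitive.

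Next, I would compute the right-hand side. Since $b \in \SCons(A_n)$, the element $\delta^{-1}(b)$ actually lies in $\SCons(A_n) \subseteq \SCons(\InitSCons)$. The core step is to argue that $c_n \compose f$ is a weak normal form of $(\InitSCons, \delta^{-1}(b))$ in $\CatEl(\SCons)$. It is a well-formed morphism of elements, since $\SCons_{c_n \compose f}(z) = \SCons_{c_n}(\SCons_f(z)) = \SCons_{c_n}(b) = \delta^{-1}(b)$; and its transitivity in the larger slice follows from the analyticity of $\SCons$, which forces the support of $\delta^{-1}(b)$ to be a finite subset of $\InitSCons$ and hence to be realised already inside $A_n$ by $f(X)$. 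This yields $\supportW(\delta^{-1}(b)) = c_n(f(X))$.

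Combining both computations with $c_{n+1} \compose e_n = c_n$ and the injectivity of $c_{n+1}$, we get $a \dl b$ iff $c_{n+1}(a) \in c_n(f(X)) = \supportW(\delta^{-1}(b))$, which under the inclusion identification is the desired equivalence. The main obstacle will be the rigorous verification that $c_n \compose f$ is a transitive object of $\slice{\CatEl(\SCons)}{(\InitSCons, \delta^{-1}(b))}$: given any $h \colon (Z, r) \to (\InitSCons, \delta^{-1}(b))$ in that slice, one must produce a morphism $\phi \colon (X, z) \to (Z, r)$ with $h \compose \phi = c_n \compose f$, which requires locating the (finite) image of $f$ inside $h(Z)$ and then reducing to transitivity of $f$ in the smaller slice $\slice{\CatEl(\SCons)}{(A_n, b)}$, where it holds by assumption.
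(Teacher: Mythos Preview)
Your plan coincides with the paper's: both hinge on the fact that composing a weak normal form $f \colon (X,z) \to (A_n, b)$ with the inclusion $\iota = c_n \colon A_n \hookrightarrow \InitSCons$ yields a weak normal form of $(\InitSCons, b)$. The paper simply cites \cite[Lemma~1.5]{HASEGAWA2002113} for this and treats the two implications separately (for the converse it starts from a weak normal form of $(\InitSCons,b)$ and factors it through $\iota$ using transitivity), whereas you compute both sides as $c_n(f(X))$ in one shot; the mathematical content is the same.

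One caution on your sketch of the transitivity of $c_n \circ f$: the step ``locating the (finite) image of $f$ inside $h(Z)$'' presupposes $c_n(f(X)) \subseteq h(Z)$, but that inclusion is a \emph{consequence} of $c_n \circ f$ being a weak normal form, not an independent input, so invoking it here is circular; and even granting it, a set-theoretic inclusion of images does not by itself produce a morphism $\phi$ in $\CatEl(\SCons)$ with $\SCons_\phi(z) = r$. The clean route is to take a weak normal form $k \colon (W,w) \to (\InitSCons, b)$ supplied by analyticity, use its transitivity to obtain morphisms $(W,w) \to (Z,r)$ over $h$ and $(W,w) \to (A_n,b)$ over $\iota$, then apply transitivity of $f$ in the slice over $(A_n,b)$ to get $(X,z) \to (W,w)$; chaining these yields the required $\phi$. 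This is exactly what Hasegawa's Lemma~1.5 packages, and citing it (as the paper does) is the cleaner option.
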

	\begin{proof}
		Assume that $a \dl b$, then $b \in \SCons(A_n)$
		and there exists a \kl{weak normal form}
		$(X,z) \to^f (A_n, b)$ such that
		$a \in f(X)$. As $(A_n,b) \to^\iota (\InitSCons, b)$,
		$(X,z) \to^{f\iota} (\InitSCons, b)$
		is also a \kl{weak normal form}
		\cite[Lemma 1.5]{HASEGAWA2002113}.
		As a consequence, $a \in \iota(f(X))$ and
		$a \in \supportW(\delta^{-1}(b))$.

		Assume that $a \sstruct b$, there exists
		a \kl{weak normal form} $(X, z) \to^f (\InitSCons, b)$
		such that $a \in f(X)$. As $b \in \SCons(A_n)$ for some $n \in \mathbb{N}$,
		this means that $(A_n, b) \to^\iota (\InitSCons, b)$ is
		an element of the slice category, hence that
		there exists $g$ such that
		$(X,z) \to^g (A_n, b)$ is a \kl{weak normal form} of $b$
		and $\iota \circ g = f$. In particular, $a \in g(X)$,
		hence $a \in \supportW(\delta^{-1}(b))$.
	\end{proof}

	A direct consequence is that our substructure relation
	captures the height of the sets $A_n$ in the following sense:

	\begin{claim}
		\label{fact:ind:down-strict-an}
		If $a \sstruct b$ and $b \in A_{n+1}$ then $a \in A_n$.
	\end{claim}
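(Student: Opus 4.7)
My plan is to proceed by induction on $n$, leaning on the preceding lemma which identifies $a \dl b$ in $A_{n+1}$ with $a \in \supportW(\delta^{-1}(b))$. The practical consequence I will extract from that lemma is the following single-step statement: for any $b \in A_{n+1}$, every element of $\supportW(\delta^{-1}(b))$ lies in $A_n$. Indeed, by the very construction of the relation $\dl$ on $A_{n+1}$, its $\dl$-predecessors are drawn from the image of the embedding $e_n \colon A_n \hookrightarrow A_{n+1}$, which (under our identification $A_0 \subseteq A_1 \subseteq \cdots$) places them in $A_n$.

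The base case $n = 0$ will be vacuous. Here $A_1 = \SCons(A_0) = \SCons(\emptyset)$, so any $b \in A_1$ admits a weak normal form $(X,z) \to^{f} (\emptyset, b)$ in $\CatEl(\SCons)$ with $X = \emptyset$. Consequently $\supportW(\delta^{-1}(b)) = f(X) = \emptyset$, so $b$ has no strict child in $\InitSCons$ at all, and no element $a$ can satisfy $a \sstruct b$.

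For the inductive step with $n \geq 1$, I would take $a \sstruct b$ with $b \in A_{n+1}$. Since $\sstructeq$ is defined as the transitive closure of the child relation and $a \neq b$, there is a chain $a = c_0, c_1, \ldots, c_k = b$ of immediate strict children with $k \geq 1$. Applying the single-step observation from the first paragraph to the final step $c_{k-1} \in \supportW(\delta^{-1}(b))$ yields $c_{k-1} \in A_n$. If $a = c_{k-1}$, we are done; otherwise $a \sstruct c_{k-1}$ and $c_{k-1} \in A_n = A_{(n-1)+1}$, so the induction hypothesis gives $a \in A_{n-1} \subseteq A_n$.

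I do not anticipate any real obstacle: the only subtle point is to distinguish the preceding lemma (a statement about a single step of $\dl$) from this claim (a statement about the transitive closure $\sstruct$), and this is handled by the trivial chain decomposition above. The $n = 0$ base case deserves care only because one must actually observe that $\SCons(\emptyset)$-elements have empty support, which follows directly from the definition of \kl{analytic functors}.
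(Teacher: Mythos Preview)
Your argument is correct. The paper states this claim without proof, calling it a direct consequence of the preceding lemma; your induction on $n$ together with the chain decomposition of $\sstruct$ simply makes that direct consequence explicit.
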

	\begin{claim}
		\label{fact:ind:an-down-closed}
		For all $n \in \mathbb{N}$,
		$A_n$ is a downwards closed subset of $A_{n+1}$.
	\end{claim}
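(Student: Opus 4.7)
The plan is to derive this claim as an almost immediate consequence of the preceding Fact~\ref{fact:ind:down-strict-an}. Unfolding the statement, the goal is to show that for every $b \in A_n$ (viewed inside $A_{n+1}$ via the embedding $e_n$) and every $a \in A_{n+1}$ with $a \sstructeq b$, the element $a$ already lies in $A_n$.

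First, I would dispose of the degenerate case $n = 0$, which is vacuous since $A_0 = \emptyset$ admits no such $b$. For $n \geq 1$, I split on whether the substructure relation is strict. Recalling that $\sstructeq$ is the reflexive transitive closure of the children relation, either $a = b$, in which case $a \in A_n$ is immediate, or $a \sstruct b$, which is the only genuinely non-trivial case.

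In the strict case, since $b \in A_n \subseteq A_{n+1}$ (via the embedding $e_n$), I can apply Fact~\ref{fact:ind:down-strict-an} to $a \sstruct b$ with $b$ viewed as an element of $A_{n+1}$; this directly yields $a \in A_n$. Alternatively, one may read $b \in A_n$ as $b \in A_{(n-1)+1}$ and apply the Fact at index $n-1$ to obtain $a \in A_{n-1} \subseteq A_n$, which gives the same conclusion through the embedding $e_{n-1}$.

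I expect the only conceptual subtlety here to be clarifying that ``downwards closed'' is meant with respect to the substructure preorder $\sstructeq$ inherited from $\InitSCons$, rather than the full divisibility preorder on the quasi-order $A_{n+1}$ (the latter includes the $\OCons(A_n)$-component, which need not preserve the subset $A_n$). Once this convention is fixed, the claim is little more than a repackaging of the preceding Fact through the reflexive/strict case split.
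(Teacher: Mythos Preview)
Your argument is correct and matches what the paper intends: the claim is stated in the paper without proof, introduced (together with Fact~\ref{fact:ind:down-strict-an}) by the sentence ``A direct consequence is that our substructure relation captures the height of the sets $A_n$ in the following sense,'' so deriving it from Fact~\ref{fact:ind:down-strict-an} via the reflexive/strict split on $\sstructeq$ is precisely the implied reasoning.

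One small remark on your closing caveat: the preamble sentence makes clear that ``downwards closed'' is indeed meant for the substructure preorder $\sstructeq$, so your chosen reading is the right one. That said, your parenthetical claim that the $\OCons(A_n)$-component ``need not preserve the subset $A_n$'' is too strong: under the stronger reading (downwards closed for $\leq_{A_{n+1}}$) the statement is also true, but it requires an induction on $n$ using that $\OCons$ preserves embeddings, rather than being an immediate corollary of Fact~\ref{fact:ind:down-strict-an}. This does not affect your proof; it only means the alternative reading is not a counterexample but merely a harder theorem.
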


	Now, it is an easy check that the \kl{divisibility preorder}
	on $\InitSCons$ is compatible with substructures
	as this is true for the sets $A_n$.
\end{sendappendix}
\vspace{-1ex}
\begin{restatable}{lemma}{lemindpreorderstability}
	\label{lem:ind:preorder-stability}
	We have $(\preceq \sstructeq)^* = \preceq$.
\end{restatable}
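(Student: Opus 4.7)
The plan is to prove the equality by showing the two inclusions, with the easy direction coming from reflexivity and the other reducing to a single basic containment.

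First I would observe that the forward inclusion $\preceq \,\subseteq\, (\preceq \sstructeq)^*$ is immediate: the substructure ordering $\sstructeq$ is reflexive, so from $x \preceq y$ we get $x \preceq y \sstructeq y$, which exhibits the pair $(x,y)$ in $(\preceq \sstructeq)^*$.

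For the reverse inclusion $(\preceq \sstructeq)^* \,\subseteq\, \preceq$, since $\preceq$ is itself reflexive and transitive, it suffices to show the one-step containment $\preceq \sstructeq \,\subseteq\, \preceq$. Again by transitivity of $\preceq$, this further reduces to $\sstructeq \,\subseteq\, \preceq$. Finally, since $\sstructeq$ is by definition the reflexive-transitive closure of the children relation, I only need to show: whenever $a \in \supportW(\delta^{-1}(b))$, one has $a \preceq b$.

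To establish this core fact, I would pick an index $n$ large enough that $b \in A_{n+1}$, so that under the identification $A_{n+1}$ with $\SCons(A_n)$, the element $\delta^{-1}(b)$ lives in $\SCons(A_n)$. The auxiliary lemma in the appendix (characterising $\dl$ on $A_{n+1}$ in terms of weak normal forms and supports) then yields $a \dl b$ in $A_{n+1}$. But the preorder on $A_{n+1}$ was built as the transitive closure of the union of the $\OCons(A_n)$-preorder with the relation $\dl$, so $\dl \,\subseteq\, \preceq_{A_{n+1}}$ by construction. Consequently $a \preceq_{A_{n+1}} b$, and pushing this forward through the colimiting embedding $c_{n+1}\colon A_{n+1} \hookrightarrow \InitSCons$ (which is an \kl{order embedding} by Hasegawa's Lemma~2.8) gives $a \preceq b$ in $\InitSCons$, as required.

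The only slightly delicate point is the invocation of the lemma relating $\dl$ and $\supportW$, which in turn relies on the stability of \kl{weak normal forms} under post-composition with the inclusion $A_n \hookrightarrow \InitSCons$. Once that is in hand, the whole argument is a short chain of reductions using only reflexivity and transitivity of $\preceq$ and $\sstructeq$, together with the definition of $\preceq$ as a colimit of preorders whose generators include $\dl$.
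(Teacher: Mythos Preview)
Your argument is correct. The reduction to showing $\sstructeq \subseteq \preceq$, and then further to the one-step children relation, is sound because $\preceq$ is reflexive and transitive; the final step, picking $n$ with $b \in A_{n+1}$ and reading off $a \dl b$ from the auxiliary lemma so that $a \leq_{A_{n+1}} b$ by construction, is exactly the right way to close the loop.

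The paper organises the same content differently: it runs an explicit induction on the stage $n$, using \cref{fact:ind:down-strict-an} (if $a \sstruct b$ and $b \in A_{n+1}$ then $a \in A_n$) and \cref{fact:ind:an-down-closed} to keep the argument inside $A_n$ at each step. Your decomposition front-loads the transitivity of $\preceq$ to collapse everything to a single generator of $\sstructeq$, so no induction over $n$ is needed---you only touch the colimit structure once, at the very end. This is a bit cleaner and makes the dependence on the appendix lemma relating $\dl$ and $\supportW$ completely explicit, whereas the paper's inductive phrasing leaves that dependence implicit. Neither approach proves more than the other; yours is just a more economical packaging of the same observation that the $\dl$ generators are already below the preorder on each $A_{n+1}$.
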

\begin{sendappendix}
	\ifsubmission \lemindpreorderstability* \fi
	\begin{proof}
		By induction we prove it on $A_n$
		using the fact that $a \sstruct b$ and $b \in A_{n+1}$
		implies $a \in A_n$, thanks to
		\cref{fact:ind:down-strict-an,fact:ind:an-down-closed}.
	\end{proof}
\end{sendappendix}
\vspace{-2ex}
\begin{restatable}{corollary}{coralexdiprevdivtopo}
	The \kl{Alexandroff topology} of the \kl{divisibility preorder}
	contains the \kl{divisibility topology}.
\end{restatable}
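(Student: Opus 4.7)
The plan is to show by transfinite induction that $\DivExp^\alpha(\TrivTop) \subseteq \Alex{\preceq}$ for every ordinal $\alpha$; the \kl{divisibility topology} is $\DivExp^\alpha(\TrivTop)$ for some sufficiently large $\alpha$, so this yields the claim. The base case $\TrivTop \subseteq \Alex{\preceq}$ is immediate, and limit stages are free because $\Alex{\preceq}$ is itself a topology, hence contains any join of its subtopologies. The heart of the argument is the successor step: assuming $\tau \subseteq \Alex{\preceq}$, show $\DivExp(\tau) \subseteq \Alex{\preceq}$.

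It is enough to verify this for subbasic opens of $\DivExp(\tau)$, which have the form $W = {\uparrow}_{\sstructeq} \delta(U)$ with $U$ open in $\TCons(\InitSCons, \tau)$. Since $\TCons$ is monotone on topologies (the same argument as in the proof of \cref{thm:induc:divisibility} via preservation of embeddings) and $\tau \subseteq \Alex{\preceq}$, the set $U$ is also open in $\TCons(\InitSCons, \Alex{\preceq})$, so I may replace $\tau$ by $\Alex{\preceq}$ throughout.

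The key observation is then that if $\delta(U)$ is already $\preceq$-upwards-closed as a subset of $\InitSCons$, then so is $W$. Indeed, because $\sstructeq \subseteq \preceq$ by construction of the \kl{divisibility preorder} (the $\dl$ relations are among its generators), every $\preceq$-upwards-closed set is automatically $\sstructeq$-upwards-closed, forcing ${\uparrow}_{\sstructeq} \delta(U) = \delta(U)$. This reduction is morally the content of \cref{lem:ind:preorder-stability}, which repackages $\sstructeq \subseteq \preceq$ as the equation $(\preceq\sstructeq)^* = \preceq$.

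The hard part will be showing that an open $U$ of $\TCons(\InitSCons, \Alex{\preceq})$ produces a $\preceq$-upwards-closed $\delta(U)$. Under the identification of $\SCons(\InitSCons)$ with $\InitSCons$ via $\delta$ (which is the identity in the colimit presentation of $\preceq$), this is a compatibility statement between the topological lift $\TCons$ and the order-theoretic lift $\OCons$ of the analytic functor $\SCons$. I would prove it by induction on the level $A_n$ in the colimit defining $\preceq$: at each level, use the weak normal form afforded by analyticity of $\SCons$ to compare the supports of two $\OCons(A_n)$-related elements, and invoke the induction hypothesis to propagate $U$-membership upward along $\preceq$.
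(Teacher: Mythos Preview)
Your overall strategy---show that $\Alex{\preceq}$ is a pre-fixed point of $\DivExp$, equivalently carry out the transfinite induction---is exactly the paper's. The paper compresses it to the single line ``it suffices to prove $\DivExp(\Alex{\preceq}) \subseteq \Alex{\preceq}$'' and then invokes \cref{lem:ind:preorder-stability}, just as you do.

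There is, however, a genuine gap in what you call the ``hard part.'' You aim to show that $\delta(U)$ itself is $\preceq$-upwards-closed whenever $U$ is open in $\TCons(\InitSCons,\Alex{\preceq})$. This is false. Take $\SCons(X)=\singleton+\Sigma\times X$ with $\Sigma=\{a,b\}$ discrete, so $\InitSCons=\words{\Sigma}$ and $\preceq$ is $\HigLeq$. The set $U=\{a\}\times\words{\Sigma}$ is open in $\TCons(\words{\Sigma},\Alex{\HigLeq})$, but $\delta(U)=a\words{\Sigma}$ is not $\HigLeq$-upwards-closed (e.g.\ $a\HigLeq ba\notin a\words{\Sigma}$). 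What \emph{is} true is that $W=\upset{\sstructeq}{\delta(U)}=\words{\Sigma}a\words{\Sigma}$ is $\HigLeq$-upwards-closed: the $\sstructeq$-closure is essential, not a cosmetic afterthought.

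The correct target is therefore to show directly that $\upset{\sstructeq}{\delta(U)}$ is $\preceq$-upwards-closed, and for this the right tool is not merely $\sstructeq\subseteq\preceq$ but the factorisation $\preceq_{n+1}=\leq_{\OCons(A_n)}\circ\sstructeq$ from \cref{cor:ind:fan-anplus1}. Given $\delta(u)\sstructeq x\preceq y$ with $u\in U$, first use $\sstructeq\subseteq\preceq$ to get $\delta(u)\preceq y$, then factor this as $\delta(u)\leq_{\OCons(\preceq_n)}c\sstructeq y$ for some $c$ and $n$; since $U$ is $\OCons(\preceq)$-upwards-closed (this is where compatibility of $\TCons$ with $\Alex$ enters), $\delta^{-1}(c)\in U$, whence $y\in W$. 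Your proposed induction on the levels $A_n$ is the right neighbourhood, but it must be aimed at this factorisation rather than at upward-closure of $\delta(U)$. The paper's own proof is extremely terse on this point---it writes ``$U$ is open in $\Alex{\preceq}$'' where one would expect $\TCons(\InitSCons,\Alex{\preceq})$---so you are right to flag it as nontrivial; just be sure to state the achievable claim.
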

\begin{sendappendix}
	\ifsubmission \coralexdiprevdivtopo* \fi
	\begin{proof}
		It suffices to prove that $\DivExp(\Alex{\preceq}) \subseteq
			\Alex{\preceq}$.
		Let us consider an open set $V$ of $\DivExp(\Alex{\preceq})$
		of the form $\uparrow_{\sstructeq} \delta(U)$, where
		$U$ is open in $\Alex{\preceq}$.
		In particular, $U = \uparrow_{\preceq} U$.
		Notice that $\uparrow_{\sstructeq} \uparrow_{\preceq} U = U$
		because of \cref{lem:ind:preorder-stability}. We have proven that
		$V \in \Alex{\preceq}$.
	\end{proof}
\end{sendappendix}
\vspace{-2ex}
\begin{sendappendix}
	\begin{lemma}
		For all $n \in \mathbb{N}$,
		\begin{equation*}
			(\sstructeq \leq_{\OCons(A_n)})^* \sstruct
			\quad = \quad
			\leq_{\OCons(A_n)} \sstruct
		\end{equation*}
		Note that this equality is only over elements of $A_{n+1}$.
	\end{lemma}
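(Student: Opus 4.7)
The plan is to prove the lemma by induction on $n$. The base case $n = 0$ will be vacuous: on $A_1 = \OCons(\emptyset)$, the relation $\sstruct$ is empty, since any strict substructure of an element of $A_1$ would have to lie in $A_0 = \emptyset$; both sides of the claimed equality are thus empty.

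For the inductive step, the inclusion $\leq_{\OCons(A_n)} \sstruct \subseteq (\sstructeq \leq_{\OCons(A_n)})^* \sstruct$ is immediate by reflexivity of $\sstructeq$. For the reverse inclusion I would first establish an auxiliary claim, call it (P): if $c \sstruct u \leq_{\OCons(A_n)} v$ holds in $A_{n+1}$ with $u, v \in A_n$, then there exists $c' \in A_n$ with $c \leq_{\OCons(A_{n-1})} c' \sstruct v$. Granting (P), I would treat a chain
\[
a \sstructeq x_1 \leq_{\OCons(A_n)} y_1 \sstructeq x_2 \leq_{\OCons(A_n)} \cdots \leq_{\OCons(A_n)} y_m \sstruct b
\]
in $A_{n+1}$ as follows. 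First, \cref{fact:ind:down-strict-an} applied to $y_m \sstruct b$ places $y_m$ in $A_n$, and iterating the downward-closedness of $A_n$ in $A_{n+1}$ (\cref{fact:ind:an-down-closed}) propagates this leftward, so every $x_i$, every $y_i$, and $a$ itself sits in $A_n$; moreover, on $A_n$ the orderings $\leq_{\OCons(A_n)}$ and $\leq_{\OCons(A_{n-1})}$ coincide by preservation of the embedding $\OCons(e_{n-1})$. Starting from $c_0 = a$, for each $i$ I would either apply (P) to $c_{i-1} \sstruct x_i \leq_{\OCons(A_n)} y_i$ when $c_{i-1} \sstruct x_i$ holds strictly (yielding $c_i$ with $c_{i-1} \leq_{\OCons(A_{n-1})} c_i \sstruct y_i$), or otherwise (when $c_{i-1} = x_i$) use transitivity of $\leq_{\OCons(A_n)}$ to take $c_i = y_i$; in either case the invariant $a \leq_{\OCons(A_n)} c_i$ together with $c_i \sstructeq y_i$ is preserved, and $c_i \sstructeq y_i \sstructeq x_{i+1}$ prepares the next round. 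After $m$ rounds I arrive at $a \leq_{\OCons(A_n)} c_m \sstructeq y_m \sstruct b$, hence at $a \leq_{\OCons(A_n)} c_m \sstruct b$.

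For (P), I would use the analytic structure of $\SCons$. Since $u, v \in A_n$ and $u \leq_{\OCons(A_n)} v$, preservation of the embedding $A_{n-1} \hookrightarrow A_n$ by $\OCons$ gives $u \leq_{\OCons(A_{n-1})} v$. Unfolding $\OCons$ through weak normal forms then supplies, for each $w \in \supp(u)$, a corresponding $w' \in \supp(v)$ with $w \leq_{A_{n-1}} w'$. Tracing $c \sstruct u$ as $c \sstructeq w \dl u$ and transporting $w$ to $w'$ yields $c \preceq_{A_{n-1}} w' \sstruct v$ in $A_n$, a chain of shape $(\sstructeq \leq_{\OCons(A_{n-1})})^* \sstruct$. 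Applying the induction hypothesis---the lemma at level $n-1$, which concerns $A_n$---collapses this to $c \leq_{\OCons(A_{n-1})} c' \sstruct v$, as required.

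The hardest part will be making precise the piece-wise correspondence used in (P), which must be distilled from the abstract description of $\OCons$ as the lift of an analytic functor via weak normal forms. Once this correspondence is in place, both (P) and the outer iteration reduce to bookkeeping and a careful use of the induction hypothesis.
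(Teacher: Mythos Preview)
Your approach diverges substantially from the paper's, and the step you single out as ``hardest'' is in fact a genuine gap.

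The paper does \emph{not} argue by induction on $n$ and does not use any swapping lemma. Instead it observes that, by the very definition of the order on $A_{n+1}$ as the transitive closure of $\leq_{\OCons(A_n)}$ together with $\dl$, the relation $(\sstructeq\,\leq_{\OCons(A_n)})^{*}$ is nothing but $\leq_{A_{n+1}}$. So the nontrivial inclusion reduces to showing $\leq_{A_{n+1}}\sstruct\ \subseteq\ \leq_{\OCons(A_n)}\sstruct$. Now if $a\leq_{A_{n+1}} b\sstruct c$ then $b\in A_n$ and hence $a\in A_n$; since $e_n\colon A_n\to A_{n+1}$ is an order embedding (Hasegawa, Lemma~2.8), this gives $a\leq_{A_n} b$. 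Finally the paper uses that $e_n$ is monotone from $A_n$ into $\OCons(A_n)$ to conclude $a\leq_{\OCons(A_n)} b$. That is the whole argument: three lines, no induction, and all the work is outsourced to the already-established embedding properties of the maps $e_n$.

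The gap in your proposal lies in the justification of (P). You write that ``unfolding $\OCons$ through weak normal forms then supplies, for each $w\in\supp(u)$, a corresponding $w'\in\supp(v)$ with $w\leq_{A_{n-1}} w'$''. But weak normal forms are a feature of the \emph{set-level} analytic functor $\SCons$; nothing in the hypotheses on the lift $\OCons$ (it need only preserve embeddings and wqos) forces the order $\leq_{\OCons(A)}$ to decompose component-wise along supports. For an arbitrary lift there is simply no reason that $u\leq_{\OCons(A_{n-1})} v$ should yield such a matching of support elements, so (P) cannot be ``distilled'' from the abstract description of $\OCons$. Your inductive scheme therefore stalls exactly at the point you anticipated; the paper sidesteps this entirely by appealing to the global identity $(\sstructeq\,\leq_{\OCons(A_n)})^{*}=\leq_{A_{n+1}}$ and the embedding lemma, rather than trying to commute $\sstruct$ past $\leq_{\OCons(A_n)}$ one step at a time.
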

	\begin{proof}
		Let $n \in \mathbb{N}$.
		Only one inclusion is non trivial.
		We know that $\leq_{A_{n+1}} = (\leq_{\OCons(A_n)} \sstructeq)^*$.
		As the maps $e_n$ is an \kl{order embedding},
		for every $a, b \in A_n$,
		$a \leq_{A_{n+1}} b$ implies $a \leq_{A_n} b$. In particular,
		$\leq_{A_{n+1}} \sstruct = \leq_{A_n} \sstruct$.
		As $e_n$ is monotone from $A_n$ to $\OCons(A_n)$,
		$x \leq_{A_n} y$ implies $x \leq_{\OCons(A_n)} y$
		and therefore $(\leq_{A_{n+1}}\sstruct) \subseteq (\leq_{\OCons(A_n)} \sstruct)$.
	\end{proof}

	\begin{corollary}
		\label{cor:ind:fan-anplus1}
		For all $n \in \mathbb{N}$,
		$\leq_{\OCons(A_n)} \sstructeq = \leq_{A_{n+1}}$
	\end{corollary}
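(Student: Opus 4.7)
The corollary asserts that on $A_{n+1}$ the preorder $\leq_{A_{n+1}}$ coincides with the single composition $\leq_{\OCons(A_n)} \sstructeq$. The plan is to establish the two inclusions separately.

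For the easy inclusion $\leq_{\OCons(A_n)} \sstructeq \subseteq \leq_{A_{n+1}}$, the construction of $\leq_{A_{n+1}}$ as the transitive closure of $\leq_{\OCons(A_n)} \cup \dl$ immediately places both $\leq_{\OCons(A_n)}$ and $\sstructeq$ (which is the reflexive–transitive closure of $\dl$) inside $\leq_{A_{n+1}}$, so their composition is too by transitivity.

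For the reverse inclusion I would induct on the length of a chain witnessing $a \leq_{A_{n+1}} b$. The base case $a = b$ is handled by the reflexivity of both factors. For the inductive step, case-split on the last step of the chain. If this last step is a $\sstruct$-step, the preceding lemma $(\sstructeq \leq_{\OCons(A_n)})^{*} \sstruct = \leq_{\OCons(A_n)} \sstruct$ directly rewrites the whole chain as $a \leq_{\OCons(A_n)} d \sstruct b$ for some $d$, which is of the required form. If the last step is an $\leq_{\OCons(A_n)}$-step $c \leq_{\OCons(A_n)} b$, apply the induction hypothesis to the strictly shorter prefix $a \leq_{A_{n+1}} c$ to obtain a witness $a \leq_{\OCons(A_n)} d \sstructeq c$. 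When $d = c$, transitivity of $\leq_{\OCons(A_n)}$ collapses the two consecutive steps and yields $a \leq_{\OCons(A_n)} b$, which is in $\leq_{\OCons(A_n)} \sstructeq$ by reflexivity of $\sstructeq$. When $d \sstruct c$, I am left with the subproblem of transporting the substructure relation across the comparison $c \leq_{\OCons(A_n)} b$.

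The main obstacle is precisely this transport, which amounts to the commutation $\sstruct \leq_{\OCons(A_n)} \subseteq \leq_{\OCons(A_n)} \sstructeq$ on $A_{n+1}$. I plan to derive it from the same ingredients already used in the preceding lemma: $A_n$ is downwards closed in $A_{n+1}$, and $e_n \colon A_n \to \OCons(A_n)$ is monotone. Concretely, a comparison $c \leq_{\OCons(A_n)} b$ arising from the lift $\OCons$ of the analytic functor $\SCons$ matches each child of $c$ with a child of $b$ monotonically for $\leq_{A_n}$, because $\OCons$ acts functorially on morphisms of weak normal forms; the monotonicity of $e_n$ then transports this comparison into $\leq_{\OCons(A_n)}$, producing an element $e \sstructeq b$ with $d \leq_{\OCons(A_n)} e$, and completing the argument.
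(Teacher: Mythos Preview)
Your overall strategy is sound, and your treatment of the easy inclusion and of the ``last step is $\sstruct$'' case is correct: that case is indeed exactly what the preceding lemma $(\sstructeq\,\leq_{\OCons(A_n)})^{*}\,\sstruct = \leq_{\OCons(A_n)}\,\sstruct$ (equivalently $\leq_{A_{n+1}}\,\sstruct = \leq_{\OCons(A_n)}\,\sstruct$) is designed to handle. The paper gives no proof of the corollary, so there is no detailed argument to compare against; what matters is whether your reduction closes.

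The gap is in your final paragraph. From $d\sstruct c\leq_{\OCons(A_n)} b$ you need the commutation $\sstruct\,\leq_{\OCons(A_n)}\subseteq \leq_{\OCons(A_n)}\,\sstructeq$, and you propose to obtain it because ``$\OCons$ acts functorially on morphisms of weak normal forms'' and therefore a comparison $c\leq_{\OCons(A_n)} b$ ``matches each child of $c$ with a child of $b$ monotonically''. This does not follow from the hypotheses. Functoriality of the lift $\OCons$ only says that a monotone map $f\colon A\to B$ induces a monotone map $\OCons_f\colon\OCons(A)\to\OCons(B)$, and that embeddings are preserved; it places \emph{no} constraint on how the order $\leq_{\OCons(A_n)}$ on a single object relates elements' supports. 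For instance, a perfectly good lift of $X\mapsto\{*\}+X\times X$ may order pairs by their first coordinate only, so that $(c_1,c_2)\leq_{\OCons(A_n)}(b_1,b_2)$ tells you nothing about $c_2$ versus any child of $b$. Your appeal to ``morphisms of weak normal forms'' conflates the category-of-elements structure of the underlying set functor $\SCons$ with the quite independent choice of preorder supplied by $\OCons$.

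Concretely, the ingredients actually available from the preceding lemma's proof are: $A_n$ is downward closed in $A_{n+1}$, $e_n$ is an order embedding, and $e_n$ is monotone into $\OCons(A_n)$. These suffice to collapse any chain \emph{once it ends in a strict substructure step} (your case~(a)), because the $\sstruct$ at the end forces both endpoints of the preceding $\leq_{A_{n+1}}$-segment into $A_n$, where the embedding and monotonicity apply. They do \emph{not} let you push a $\sstruct$ forward across a trailing $\leq_{\OCons(A_n)}$-step, which is what your case~(b) requires. You should either find a different inductive measure that avoids ever needing this forward commutation, or isolate and prove the commutation as a separate claim---but then the proof must use more than bare functoriality of $\OCons$.
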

\end{sendappendix}

\begin{restatable}{lemma}{indpreorderanplusone}
	For all $n \in \mathbb{N}$,
	$\Alex{\preceq_n} \subseteq \DivExp(\Alex{\preceq_{n}})$,
	where $\preceq_n = \restr{\preceq}{A_n}$.
\end{restatable}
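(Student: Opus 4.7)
The plan is to proceed by induction on $n$, showing that every subbasic open of $\Alex{\preceq_n}$ belongs to $\DivExp(\Alex{\preceq_n})$. The base case $n = 0$ is immediate: $A_0 = \emptyset$ so $\Alex{\preceq_0}$ is the trivial topology, which is contained in every topology, in particular in $\DivExp(\Alex{\preceq_0})$.

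For the inductive step I fix $n \geq 1$ and exploit the corollary $\leq_{\OCons(A_{n-1})}\sstructeq\, = \,\leq_{A_n}$ established just before the lemma. A principal up-set of $\preceq_n$ at a point $a \in A_n$ can therefore be rewritten as $\uparrow_{\preceq_n} a = \uparrow_{\sstructeq}\bigl(\uparrow_{\leq_{\OCons(A_{n-1})}} a\bigr)$, and by taking arbitrary unions any open of $\Alex{\preceq_n}$ has the form $\uparrow_{\sstructeq} U'$ where $U'$ is upward-closed for $\leq_{\OCons(A_{n-1})}$, i.e.\ open in the Alexandroff topology of $\OCons(A_{n-1})$ on the underlying set $\SCons(A_{n-1}) = A_n$.

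It then remains to witness $U'$ as the trace on $A_n$ of some open $V$ of $\TCons(\InitSCons, \Alex{\preceq_n})$. Here I invoke the compatibility between the order lift $\OCons$ and the topological lift $\TCons$ of the analytic functor $\SCons$: applying $\TCons$ to an Alexandroff topology produces a topology that contains the Alexandroff topology of $\OCons$ applied to the underlying preorder. Because $\delta$ is the identity under our working conventions, taking the $\sstructeq$-upward closure gives $\uparrow_{\sstructeq}\delta(V)$, which is by definition a subbasic open of $\DivExp(\Alex{\preceq_n})$, completing the inductive step.

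The main obstacle is the compatibility between $\TCons$ and $\OCons$ on Alexandroff topologies. Once one has that every $\leq_{\OCons(A_{n-1})}$-upward-closed set is open in $\TCons$ applied to the Alexandroff topology of the base preorder, the rest is bookkeeping about $\sstructeq$-closures and the identification $\delta = \id$. This compatibility should follow from the fact that $\TCons$ is a $\U$-lift of $\SCons$ together with the description of $\OCons(A_{n-1})$ as the order-theoretic analogue of $\TCons$ on $(A_{n-1}, \Alex{\preceq_{n-1}})$; the inductive hypothesis $\Alex{\preceq_{n-1}} \subseteq \DivExp(\Alex{\preceq_{n-1}})$ is used precisely to ensure that all the necessary opens of $\OCons(A_{n-1})$ are visible in $\TCons(\InitSCons, \Alex{\preceq_n})$.
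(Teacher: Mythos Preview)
Your core idea is right and coincides with the paper's: decompose a principal $\preceq$-up-set via the corollary $\leq_{\OCons(A_k)}\circ{\sstructeq}=\leq_{A_{k+1}}$, then invoke the Alexandroff compatibility of $\TCons$ to recognise the $\OCons$-up-set as an open of $\TCons(\InitSCons,\Alex{\cdot})$. The paper, however, does \emph{not} proceed by induction and uses the corollary at index $n$ rather than $n-1$: starting from $U=\uparrow_{\preceq_n}x$, it passes through $V=\uparrow_{\preceq_{n+1}}\{y:x\preceq_n y\}$ and rewrites $V=\uparrow_{\sstructeq}\uparrow_{\OCons(\preceq_n)}\{y:x\preceq_n y\}$. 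The point of this shift is that $\OCons$ is applied to the \emph{same} preorder $\preceq_n$ whose Alexandroff topology appears in $\TCons(\InitSCons,\Alex{\preceq_n})$, so the compatibility hypothesis applies on the nose.

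Your indexing creates a genuine gap. You obtain $U'$ open in $\Alex{\OCons(A_{n-1})}$, i.e.\ upward closed for $\OCons$ applied to $(A_{n-1},\leq_{A_{n-1}})$, but you then need it to be (the trace of) an open of $\TCons(\InitSCons,\Alex{\preceq_n})$. The compatibility hypothesis only gives you $\Alex{\OCons(\InitSCons,\preceq_n)}\subseteq\TCons(\InitSCons,\Alex{\preceq_n})$; it says nothing about $\OCons(A_{n-1},\leq_{A_{n-1}})$. You propose to bridge this with the inductive hypothesis $\Alex{\preceq_{n-1}}\subseteq\DivExp(\Alex{\preceq_{n-1}})$, but that statement compares two topologies on $\InitSCons$ and does not relate $\OCons(A_{n-1})$ to $\TCons(\InitSCons,\Alex{\preceq_n})$ at all; in particular $\Alex{\preceq_{n-1}}$ is not in general contained in $\Alex{\preceq_n}$, so there is no monotone passage from one to the other through $\TCons$. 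The fix is exactly what the paper does: apply the corollary at index $n$ instead of $n-1$, so that the functor $\OCons$ acts on $\preceq_n$ and the compatibility with $\TCons(\InitSCons,\Alex{\preceq_n})$ is immediate. Once you do this the induction evaporates; the argument is uniform in $n$.
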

\begin{sendappendix}
	\ifsubmission \indpreorderanplusone* \fi
	\begin{proof}
		Let $x \in \InitSCons$ and
		consider $U = \uparrow_{\preceq_n} x$, which is open
		in $\Alex{\preceq_n}$.
		Let us write $V = \uparrow_{\preceq_{n+1}} \setof{y}{x \preceq_n y}$.
		It is clear that $U = V$, let us now prove that
		$V$ is open in $\DivExp(\Alex{\preceq_n})$.

		Thanks to \cref{cor:ind:fan-anplus1},
		$V = \uparrow_{\sstructeq} \uparrow_{F(\preceq_n)}
			\setof{y}{x \preceq_n y}$. Moreover,
		$\uparrow_{\OCons(\preceq_n)}
			\setof{y}{x \preceq_n y}$ is open in $\TCons(\InitSCons,
			\Alex{\preceq_n})$.
		As a consequence, we have proven that $V$ is open in
		$\DivExp(\Alex{\preceq_n})$.
	\end{proof}
\end{sendappendix}

\begin{corollary}
	$\Alex{\preceq}$ is contained in the \kl{divisibility topology}.
\end{corollary}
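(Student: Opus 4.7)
The plan is to show $\Alex{\preceq} \subseteq T$, where $T \defined \lfp{\tau}{\DivExp(\tau)}$ denotes the \kl{divisibility topology}. The strategy is to exploit the filtration $\InitSCons = \bigcup_n A_n$ and reduce the claim to a statement about the finite-stage approximations $\Alex{\preceq_n}$.

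First, I would argue that it suffices to prove $\Alex{\preceq_n} \subseteq T$ for every $n \in \mathbb{N}$. Since each $e_n \colon A_n \hookrightarrow A_{n+1}$ is an order embedding and $\preceq$ is the colimit of the $\preceq_n$'s, a basic open $\uparrow_\preceq x$ with $x \in A_n$ can be recovered as the ascending union, for $m \geq n$, of its $\preceq_m$-upward closures inside $A_m$. Consequently, $\Alex{\preceq}$ is generated by the union $\bigcup_n \Alex{\preceq_n}$, and establishing the finite-stage inclusions is enough.

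Next I would prove $\Alex{\preceq_n} \subseteq T$ by induction on $n$, keying off the description $T = \bigcup_n \DivExp^n(\TrivTop)$. The base case $n = 0$ is immediate since $A_0 = \emptyset$ renders $\Alex{\preceq_0}$ trivial. For the inductive step, I would extract from the proof of the previous lemma the sharper inclusion $\Alex{\preceq_{n+1}} \subseteq \DivExp(\Alex{\preceq_n})$: there, each basic open $\uparrow_{\preceq_{n+1}} x$ is rewritten, using \cref{cor:ind:fan-anplus1}, as $\uparrow_{\sstructeq}$ applied to a set open in $\TCons(\InitSCons, \Alex{\preceq_n})$, which is by definition open in $\DivExp(\Alex{\preceq_n})$. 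Combining this with the induction hypothesis $\Alex{\preceq_n} \subseteq T$ and using both monotonicity of $\DivExp$ and $\DivExp(T) = T$, we obtain $\Alex{\preceq_{n+1}} \subseteq \DivExp(\Alex{\preceq_n}) \subseteq \DivExp(T) = T$.

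The main obstacle is twofold. On the one hand, the reduction to finite stages requires one to verify explicitly that every $\preceq$-upward closure decomposes cleanly along the filtration $A_0 \subseteq A_1 \subseteq \cdots$, which boils down to a definition chase but must be done with care because the $\preceq_n$ are restrictions and not extensions of relations. On the other hand, the inductive step rests on reading the previous lemma as providing the successor inclusion $\Alex{\preceq_{n+1}} \subseteq \DivExp(\Alex{\preceq_n})$ rather than the apparent post-fixed-point statement $\Alex{\preceq_n} \subseteq \DivExp(\Alex{\preceq_n})$; this sharpening is what the rewriting via \cref{cor:ind:fan-anplus1} actually delivers, and driving the induction through the least fixed point $T$ hinges on it.
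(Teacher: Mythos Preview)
Your approach is correct and matches what the paper leaves implicit: the corollary is stated without proof, as an immediate consequence of the preceding lemma together with the filtration $\InitSCons = \bigcup_n A_n$. You are also right that the induction actually needs the successor inclusion $\Alex{\preceq_{n+1}} \subseteq \DivExp(\Alex{\preceq_n})$, and that the lemma's proof via \cref{cor:ind:fan-anplus1} delivers this even though the lemma's stated conclusion is only the weaker post-fixed-point inclusion $\Alex{\preceq_n} \subseteq \DivExp(\Alex{\preceq_n})$. One minor remark: your argument never actually uses the description $T = \bigcup_n \DivExp^n(\TrivTop)$ (which would require Scott-continuity of $\DivExp$); you only use that $T$ is a fixed point together with monotonicity, so that mention can be dropped.
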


We are now ready to state our correctness theorem, i.e., that
the \kl{divisibility topology} is a correct generalisation to the
topological setting of the \kl{divisibility preorder}
from \citeauthor*{HASEGAWA2002113}.

\begin{theorem}
	\label{thm:induc:coincidence-alexandroff}
	Let $\TCons$ the be the lift of an \kl{analytic functor}
	respecting \kl{Alexandroff topologies},
	\kl{Noetherian spaces}, and \kl[topology embeddings]{embeddings}.
	Then, the \kl{divisibility topology} of $\InitSCons$
	is the \kl{Alexandroff topology}
	of the \kl{divisibility preorder} of $\InitSCons$,
	which is a \kl{well-quasi-ordering}.
\end{theorem}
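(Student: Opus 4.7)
The plan: the statement has two parts --- (a) the \kl{divisibility topology} on $\InitSCons$ coincides with $\Alex{\preceq}$, and (b) $\preceq$ is a \kl{well-quasi-ordering}. Both follow almost immediately from the corollaries and lemmas established just above the theorem, so the proof will mostly consist in assembling them.

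For (a), I would chain the two corollaries that immediately precede the theorem. The first (after \cref{lem:ind:preorder-stability}) yields $\DivExp(\Alex{\preceq}) \subseteq \Alex{\preceq}$, so the \kl{divisibility topology}, being the least fixed point of $\DivExp$, is contained in $\Alex{\preceq}$. The second corollary gives the reverse inclusion $\Alex{\preceq} \subseteq$ \kl{divisibility topology}; this is precisely where the hypothesis that $\TCons$ respects \kl{Alexandroff topologies} is consumed, as it ensures that the lifted open sets of the form $\uparrow_{\OCons(\preceq_n)} S$ remain open in $\TCons(\InitSCons, \Alex{\preceq_n})$. Combining the two inclusions, the \kl{divisibility topology} and $\Alex{\preceq}$ agree.

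For (b), I would apply \cref{thm:induc:divisibility}: the hypotheses that $\TCons$ preserves \kl{Noetherian spaces} and \kl[topology embeddings]{embeddings} (together with the blanket assumption that $\TCons$ lifts an \kl{analytic functor} and preserves inclusions) guarantee that the \kl{divisibility topology} on $\InitSCons$ is \kl{Noetherian}. By part (a), $\Alex{\preceq}$ is then \kl{Noetherian}, and the equivalence recalled in \cref{sec:primer} --- a quasi-ordering is a \kl{wqo} exactly when its \kl{Alexandroff topology} is \kl{Noetherian} --- yields directly that $\preceq$ is a \kl{well-quasi-ordering}.

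There is no genuinely new obstacle at this stage. All the real work has been front-loaded: the topological \kl{minimal bad sequence} argument is buried inside \cref{cor:gen:noethpres} and reused through \cref{thm:induc:divisibility}, while the interplay $(\preceq \sstructeq)^* = \preceq$ of \cref{lem:ind:preorder-stability} is what makes the two inclusions of topologies simultaneously hold. The main theorem is then a clean repackaging of these ingredients, once the three preservation properties of $\TCons$ (Alexandroff, Noetherian, embeddings) are in place; if anything, the subtle step to double-check is that the preservation of Alexandroff topologies is used in a compatible way with the transfinite fixed-point definition of $\DivExp$, but this is already handled by the corollary preceding the theorem.
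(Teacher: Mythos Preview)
Your proposal is correct and matches the paper's approach: the theorem is stated immediately after the two corollaries giving the mutual inclusions between $\Alex{\preceq}$ and the \kl{divisibility topology}, and the paper offers no separate proof, treating it as the evident combination of those corollaries with \cref{thm:induc:divisibility} and the wqo/Noetherian-Alexandroff equivalence from \cref{sec:primer}. Your identification of where each hypothesis on $\TCons$ is consumed (Alexandroff for the inclusion $\Alex{\preceq}\subseteq$ divisibility topology, Noetherian and embeddings for \cref{thm:induc:divisibility}) is accurate.
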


\section{Concluding Remarks}
\label{sec:conclusion}
We have provided a systematic way to place a \kl{Noetherian topology}
over an inductively defined datatype, which is correct with respect
to its \kl{wqo} counterpart whenever it exists.
As a byproduct,
we obtained a uniform framework that simplifies existing
proofs, and serves as an indicator
that the pre-existing topologies were the ``right generalisations''
of their quasi-order counterparts. Let us now briefly
highlight some interesting properties of the underlying theory.

\subparagraph*{Differences with the existing categorical frameworks.}
The existing categorical frameworks are built around specific
kind of functors \cite{HASEGAWA2002113,freund2020kruskal},
while the notion of \kl{topology expander} only requires talking about
one specific set. This allows proving that the \kl{ordinal subword topology}
and the \kl{$\alpha$-branching trees} are \kl{Noetherian}, while
these escape both the realm of \kl{wqos}, and of ``well-behaved functors''
having finite support functions.

\subparagraph*{Quasi-analytic functors.}
In fact, the proof of
\cref{thm:induc:divisibility},
never relies on the finiteness of the support of an element.
This means that the definition of
\kl{analytic functors} can be loosened to allow non finite
weak normal forms. We do not know
whether this notion of ``quasi-analytic functor'' already
exists in the literature.

\subparagraph*{Transfinite iterations.}
As the reader might have noticed, all of the least fixed points
considered in this paper are obtained using at most $\omega$ steps.
This is because the \kl{topology expanders} that are presented in the paper
are all Scott-continuous, i.e., they satisfy the equation
$\AExp(\sup_i \TP_i) = \sup_i \AExp(\TP_i)$.
While \cref{cor:gen:noethpres} does apply to non Scott-continuous
topology expanders, we do not know any
reasonable example of such expander.

\subparagraph*{Lack of ordinal invariants.}
Even though our proof that the
\kl{ordinal subword topology} is \kl{Noetherian} is shorter
than the original one,
it actually provide less information.
In particular, it does not provide a bound for ordinal rank of the lattice of closed
sets (called the \intro{stature} of $\Sigma^{<\alpha}$), whereas
a clear bound is provided by the previous approach \citet[Proposition 33]{goubaultlarrecq2022infinitary}.
This limitation already appears in the existing categorical frameworks
\cite{HASEGAWA2002113,freund2020kruskal}, and
we believe that this is inherent to the use of minimal bad sequence
arguments.

%%
%% Bibliography
%%
%\nocite{*}
%\bibliographystyle{plainnat}
\bibliography{globals/ressources}

%\appendix

%\input{parties/autoappendix}

\end{document}